\def\anon{0} % CHANGE TO 1 FOR ANONYMOUS SUBMISSION
\newcommand{\lName}{1}
\newcommand{\donothing}[1]{#1}
\newcommand{\JACM}{\if\lName1\donothing{Journal of the {ACM}}\else{JACM}\fi}
\newcommand{\SICOMP}{\if\lName1\donothing{{SIAM} Journal on Computing}\else{SICOMP}\fi}
\newcommand{\ToC}{\if\lName1\donothing{Theory of Computing}\else{ToC}\fi}
\newcommand{\ToCGS}{\if\lName1\donothing{Theory of Computing Graduate Surveys}\else{ToC}\fi}
\newcommand{\TOCT}{\if\lName1\donothing{{ACM} Transactions on Computation Theory}\else{TOCT}\fi}
\newcommand{\ToIT}{\if\lName1\donothing{{IEEE} Transactions on Information Theory}\else{TOCT}\fi}
\newcommand{\CCjournal}{\if\lName1\donothing{Computational Complexity}\else{CC}\fi}
\newcommand{\CJTCS}{\if\lName1\donothing{Chicago Journal of Theoretical Computer Science}\else{CJTCS}\fi}
\newcommand{\TCS}{\if\lName1\donothing{Theoretical Computer Science}\else{TCS}\fi}
\newcommand{\IPL}{\if\lName1\donothing{Information Processing Letters}\else{IPL}\fi}
\newcommand{\JCSS}{\if\lName1\donothing{Journal of Computer and System Sciences}\else{JCSS}\fi}
\newcommand{\RSA}{\if\lName1\donothing{Random Structures and Algorithms}\else{RSA}\fi}
\newcommand{\JCTA}{\if\lName1\donothing{Journal of Combinatorial Theory, Series A}\else{JCTA}\fi}
\newcommand{\JCTB}{\if\lName1\donothing{Journal of Combinatorial Theory, Series B}\else{JCTB}\fi}
\newcommand{\PJM}{\if\lName1\donothing{Pacific Journal of Mathematics}\else{PJM}\fi}
\newcommand{\QICjournal}{\if\lName1\donothing{Quantum Information and Computation}\else{QIC}\fi}
\newcommand{\IJQI}{\if\lName1\donothing{International Journal of Quantum Information}\else{IJQI}\fi}
\newcommand{\PRA}{\if\lName1\donothing{Physical Review A}\else{PRA}\fi}
\newcommand{\PRL}{\if\lName1\donothing{Physical Review Letters}\else{PRL}\fi}
\newcommand{\VLDB}{\if\lName1\donothing{International Journal on Very Large Data Bases}\else{VLDB}\fi}
\newtheorem{theorem}{Theorem}
\newtheorem{lemma}[theorem]{Lemma}
\newtheorem{corollary}[theorem]{Corollary}
\newtheorem{definition}[theorem]{Definition}
\newtheorem{open}{Open Problem}
\theoremstyle{definition}
\newcommand{\eq}[1]{\hyperref[eq:#1]{(\ref*{eq:#1})}}
\renewcommand{\sec}[1]{\hyperref[sec:#1]{Section~\ref*{sec:#1}}}
\newcommand{\thm}[1]{\hyperref[thm:#1]{Theorem~\ref*{thm:#1}}}
\newcommand{\lem}[1]{\hyperref[lem:#1]{Lemma~\ref*{lem:#1}}}
\newcommand{\defn}[1]{\hyperref[def:#1]{Definition~\ref*{def:#1}}}
\newcommand{\prop}[1]{\hyperref[prop:#1]{Proposition~\ref*{prop:#1}}}
\newcommand{\cor}[1]{\hyperref[cor:#1]{Corollary~\ref*{cor:#1}}}
\newcommand{\fig}[1]{\hyperref[fig:#1]{Figure~\ref*{fig:#1}}}
\newcommand{\tab}[1]{\hyperref[tab:#1]{Table~\ref*{tab:#1}}}
\newcommand{\alg}[1]{\hyperref[alg:#1]{Algorithm~\ref*{alg:#1}}}
\newcommand{\app}[1]{\hyperref[app:#1]{Appendix~\ref*{app:#1}}}
\newcommand{\conj}[1]{\hyperref[conj:#1]{Conjecture~\ref*{conj:#1}}}
\newcommand{\chap}[1]{\hyperref[chap:#1]{Chapter~\ref*{chap:#1}}}
\newcommand{\B}{\{0,1\}}
\DeclareMathAlphabet{\mathbbold}{U}{bbold}{m}{n}
\newcommand{\OR}{\mathtt{OR}}
\newcommand{\SF}{\mathtt{SF}}
\DeclareMathOperator{\D}{D}
\DeclareMathOperator{\cD}{\mathcal{D}}
\DeclareMathOperator{\R}{R}
\DeclareMathOperator{\Q}{Q}
\DeclareMathOperator{\C}{\mathcal{C}}
\newcommand{\ket}[1]{\left\vert #1\right\rangle}
\newcommand{\bigslash}[2]{\left. #1 \middle/ #2 \right.}
\renewcommand*{\S}{\mathcal{S}}
\renewcommand*{\O}{\mathcal{O}}
\renewcommand*{\P}{\mathcal{P}}
\newcommand*{\G}{\mathcal{G}}
\newcommand*{\V}{\mathcal{V}}
\newcommand*{\E}{\mathcal{E}}
\newcommand*{\F}{\mathcal{F}}
\newcommand*{\LG}{\mathcal{LG}}
\newcommand*{\bS}{\boldsymbol{S}}
\newcommand*{\bU}{\boldsymbol{U}}
\newcommand*{\bC}{\boldsymbol{C}}
\newcommand*{\bA}{\boldsymbol{A}}
\newcommand{\Exp}{\mathop{\mathbb{E}}}
\newcommand*{\permutate}[2]{P^{#1}_{#2}}
\newcommand*{\squareb}[1]{\left[#1\right]}
\newcommand*{\roundb}[1]{\left(#1\right)}
\newcommand*{\curlyb}[1]{\left\{#1\right\}}
\begin{document}

\title{Quantum algorithms for hypergraph simplex finding}

\if\anon1
\author{Anonymous submission}
\else
\author{
    Shalev Ben{-}David\\
    \small Institute for Quantum Computing\\
    \small University of Waterloo\\
    \small \texttt{shalev.b@uwaterloo.ca}
    \and
    Zhiying Yu\\
    \small Institute for Quantum Computing\\
    \small University of Waterloo\\
    \small \texttt{zy3yu@uwaterloo.ca}
}
\fi

\date{}
\maketitle

\begin{abstract}
We study the quantum query algorithms for simplex finding,
a generalization of triangle finding to hypergraphs.
We motivate this problem by showing it
satisfies a rank-reduction property: a quantum query algorithm
for finding simplices in rank-$r$ hypergraphs can be turned
into a faster algorithm for finding simplices in
rank-$(r-1)$ hypergraphs. In particular, we show that for any constant rank 
$r$, an $O(n^{r/2})$ quantum algorithm for finding a simplex
in rank-$r$ hypergraphs would imply an $O(n)$ quantum algorithm
for triangle finding. 

% Note that an $O(n^{(r+1)/2})$ algorithm
% for simplex finding follows from a simple application
% of Grover's algorithm.

We then study two techniques used to design quantum query algorithms:
nested quantum walks on Johnson graphs, and adaptive learning graphs.
We show that every nested Johnson graph quantum walk (with any constant number of nested
levels) can be converted into an adaptive learning graph.
Along the way, we introduce the concept of $\alpha$-symmetric learning
graphs, which is a useful framework for designing and analyzing
complex quantum search algorithms.
Inspired by the work of Le Gall, Nishimura, and Tani (2016)
on $3$-simplex finding, we use our new technique to obtain an 
algorithm for $4$-simplex finding in rank-$4$ hypergraphs with $O(n^{2.46})$ 
quantum query cost, improving the trivial $O(n^{2.5})$ algorithm.
\end{abstract}

\if\anon{1}
{\footnotesize\tableofcontents}
\else
{\scriptsize\tableofcontents}
\fi
\clearpage

%%%%%%%%%%%%%%%%%%%%% Section 1 Introduction %%%%%%%%%%%%%%%%%%%%%

\section{Introduction}

A famous property of quantum algorithms is 
that they can be used to get polynomial speedups for unstructured
search problems, as shown by Grover \cite{Gro96}. 
Given query access to an array of size $n$
containing a marked item, Grover's algorithm finds the
marked item using only $O(\sqrt{n})$ quantum queries.
This algorithm can be used to find an easy-to-check certificate
using quadratically fewer queries than the number of possible
locations of the certificate.

However, when the search problem takes place in a richer combinatorial structure, 
    Grover's algorithm does not exploit the extra information.
For example, in the task of element distinctness,
we are given query access to an array $x$ of $n$ integers,
and we are asked to find a pair of positions $(i,j)$
such that $x_i=x_j$. 
Since the number of pairs is $\Theta(n^2)$,
Grover's algorithm uses the trivial $\Theta(n)$ queries,
which is not an improvement over querying all the input symbols.
However, an algorithm by Ambainis based on quantum walks
\cite{Amb07} achieves $O(n^{2/3})$ queries, which is known to be tight
\cite{AS04}. Other types of search problems have also
been studied, including $k$-sum \cite{BS13} and
$k$-distinctness \cite{BL11,Bel12a}; for the latter, the asymptotic
complexity of the best
possible quantum algorithm is not known for any constant
values $k > 2$.

We note that the decision version of the problem
(detect whether a marked item exists) is easily seen to be
equivalent, up to low-order terms,
to the search version of the problem (find a marked
item) when the certificate we are searching for is of constant
size. For this reason,
we will generally talk about the decision and search versions
interchangeably.

\subsection{Graph search problems}

Graph search problems define a particularly interesting class
of problems for the study of quantum algorithms.
In this setting, we are given query access to $\binom{n}{2}$ bits
representing the presence or absence of an edge in a graph
with $n$ vertices. The task is to detect the presence of
some substructure in the graph. The most famous example of a graph search 
problem is triangle finding: the task is to find three vertices 
$i,j,k$ such that the query value is $1$ on all pairs 
(that is, $x_{ij}=x_{jk}=x_{ki}=1$).

A large amount of work has been dedicated to determining the
quantum query complexity of triangle finding \cite{MSS07,San08,Bel12,LMS17,LGal14}.
These works culminated in a triangle finding algorithm
that uses $O(n^{5/4})$ quantum queries,
down from the trivial $O(n^2)$ cost of querying all the edges
(and down from the $O(n^{3/2})$ cost of applying Grover
search to the set of $\binom{n}{3}$ possible triangles).
In the lower bound direction, we only know the trivial 
$\Omega(n)$, which follows via a reduction from unordered search
combined with a lower bound such as \cite{BBBV97}
for the latter task.

In fact, no non-trivial lower bound is known for \emph{any}
graph search problem: given any constant-sized subgraph,
the best lower bound known for checking subgraph containment 
in an input graph on $n$ vertices is $\Omega(n)$, 
despite the fact that for larger subgraphs the best known 
upper bound approaches the trivial $O(n^2)$. 
The lack of good lower bounds is a consequence
of the \emph{certificate barrier} for the positive-weight
quantum adversary method \cite{SS06}.
This barrier says that the positive-weight quantum
adversary method (the main lower bound technique for
quantum query complexity) cannot be used to give
a lower bound better than $\Omega(\sqrt{\C_1(f)n})$
for any function $f$, where $\C_1(f)$ denotes the $1$-certificate
complexity of $f$ and $n$ denotes the input size.
In particular, when searching for certificates of constant
size, the best lower bound this technique can give
is the square root of the input size (i.e. $\Omega(n)$ for
graph problems, which have input size $\Theta(n^2)$).

Attempts to improve the best quantum algorithms for search
problems have led to new insights into the design of quantum
algorithms.
Two interesting and powerful techniques came up in previous work. 
One of them is quantum walks on a Johnson graph \cite{Amb07}, 
which was later generalized to nested quantum walks \cite{JKM13}.
The other technique is the framework of learning graphs \cite{Bel12},
which was later generalized to adaptive learning graphs \cite{CLM19}.
Most non-Grover quantum search algorithms (particularly for graph
search problems) use one of these frameworks in their design.
Details of these two computational frameworks are outlined in
\sec{preliminaries}.

\subsection{Our results}

With the aim to expand our understanding of quantum query complexity 
for search problems, we consider the generalization of graph search to
hypergraph search. 
The hyperedges of a rank-$r$ hypergraph
are elements of $\binom{[n]}{r}$ (that is, subsets of size $r$
of the set $[n]=\{1,2,\dots,n\}$).
We assume $r$ is constant, so the trivial algorithm
of querying everything uses $O(n^r)$ queries.
The generalization of triangle finding to hypergraphs is called 
    \emph{simplex finding}. In this task, we are searching for 
    an \emph{$r$-simplex}, which is a clique in an $r$-uniform 
    hypergraph consisting of $r+1$ vertices, each $r$ of which are 
    connected by a hyperedge. 
We denote the query task of simplex finding (with $n$ vertices
and rank $r$) by $\SF_{n,r}$.
Grover search can be used to show that $\Q(\SF_{n,r})=O(n^{(r+1)/2})$. 
For the lower bound, a simple reduction from unordered search can also 
    show $\Q(\SF_{n,r})=\Omega(n^{r/2})$ (see \lem{SFbounds}).
We call these the trivial bounds on $\Q(\SF_{n,r})$.

Intuitively, increasing the dimensionality of graphs should make it more difficult 
    to detect whether a specific substructure exists.
Despite this, the best-known lower bound for simplex finding remains trivial even 
    when the rank $r$ is large.
In the upper bound direction, we made the attempt to generalize existing 
    quantum algorithm design methodologies for graph search problems 
    to hypergraphs.
Using quantum walks that search for vertices of a certificate, one can derive
    nontrivial query upper bounds for $3$-simplex finding.
(Interesting quantum algorithms for $3$-simplex finding have also
been shown in previous work such as \cite{LGNT16}.)

However, as the rank $r$ continues to increase, simple algorithmic design no longer 
    provides advantages. 
Any query-efficient quantum algorithm needs to make full utilization of 
    the hypergraph structure and such an algorithm becomes messy and
    technically challenging to describe.
Yet there are still interesting %and novel directions when tackling the 
things to be learned from the study of hypergraph search problems.
Our work starts with the following insight.

\begin{theorem}\label{thm:reduction}
An algorithm for simplex finding in rank $r+1$ hypergraphs
can be converted into a faster algorithm for simplex
finding in rank $r$ hypergraphs. That is,
\[\Q(\SF_{n,r})=O\left(\frac{\Q(\SF_{n,r+1})}{\sqrt{n}}\right).\]   
\end{theorem}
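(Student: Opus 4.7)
The plan is to give a black-box reduction: from a quantum algorithm $A$ solving $\SF_{n,r+1}$ in $T$ queries, I will construct an algorithm for $\SF_{n,r}$ using $O(T/\sqrt{n})$ queries. The idea is to embed an instance of $\SF_{n,r}$ inside a rank-$(r+1)$ instance in which one vertex of the target $(r+1)$-simplex is prescribed, and then to argue that prescribing a vertex shaves off a Grover factor of $\sqrt{n}$.

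Given a rank-$r$ hypergraph $H$ on $[n]$, I would adjoin a fresh ``cone'' vertex $w$ and build a rank-$(r+1)$ hypergraph $H^+$ on $[n]\cup\{w\}$ by setting $\{w\}\cup e\in H^+$ iff $e\in H$ (for $|e|=r$), and declaring an $(r+1)$-subset $e\subseteq[n]$ to be an edge of $H^+$ exactly when $e$ is an $r$-simplex of $H$. A direct case analysis shows that $(r+1)$-simplices of $H^+$ come in two flavors: those of the form $\{w\}\cup S$, where $S$ is an $r$-simplex of $H$; and those contained in $[n]$ that encode a $K^{(r)}_{r+2}$-sub-configuration of $H$. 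In either case, one can read off an $r$-simplex of $H$. Each query to $H^+$ is answered by $O(1)$ queries to $H$.

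The $\sqrt{n}$ savings come from the observation that the useful $(r+1)$-simplices of $H^+$ are promised to contain the distinguished vertex $w$, so the effective pool of candidate $(r+2)$-vertex sets shrinks from $\binom{n+1}{r+2}$ down to $\binom{n}{r+1}$, a factor of $\Theta(n)$. Under Grover-style search, this pool reduction translates into a $\sqrt{n}$ speedup. Correspondingly, in the nested Johnson-walk and learning-graph frameworks emphasized by the paper, this savings is precisely the cost of removing the outermost vertex-search level, since the cone vertex $w$ need not be searched for. The resulting algorithm runs $A$ on $H^+$ in an amplitude-amplification-style wrapper that biases toward outputs containing $w$ (equivalently, it simulates a restricted variant of $A$ that only considers $w$-containing simplices), at a total cost of $O(T/\sqrt{n})$ queries to $H$.

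The hard part will be to justify rigorously the $\sqrt{n}$ savings from fixing the cone vertex $w$ for a general quantum algorithm $A$. The combinatorial ingredients (defining $H^+$, extracting $r$-simplices from its $(r+1)$-simplices, and bounding the per-query overhead by $O(1)$) are routine. The substantive technical content is the quantum cost analysis: relating the complexity of ``finding an $(r+1)$-simplex of $H^+$ that contains $w$'' to $\Q(\SF_{n+1,r+1})$ with the correct $1/\sqrt{n}$ factor. This step is where the paper's specific algorithmic machinery (quantum walks or learning graphs) is likely brought to bear, rather than a purely generic quantum-query argument.
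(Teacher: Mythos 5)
There is a genuine gap, and it is exactly the step you flag as ``the hard part'': the claim that prescribing the cone vertex $w$ lets you run a generic algorithm $A$ for $\SF_{n,r+1}$ at a $\sqrt{n}$ discount. A quantum query algorithm is a black box; there is no operation that ``restricts $A$ to $w$-containing simplices'' or wraps it in amplitude amplification to recover a $\sqrt{n}$ speedup from a promise on where the certificate lives. Amplitude amplification speeds up low-success-probability procedures, but $A$ already succeeds with constant probability, and shrinking the candidate pool by a factor of $n$ only yields a $\sqrt{n}$ saving for Grover-style search over that pool, not for an arbitrary algorithm (indeed, if such a generic ``promise implies speedup'' principle held, it would have startling consequences elsewhere). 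Your single-instance embedding therefore cannot produce the $1/\sqrt{n}$ factor, and no amount of care in the combinatorial construction fixes this.

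The paper gets the factor from a different place: it reduces the \emph{composed} problem $\OR_n\circ\SF_{n,r}$ --- an OR over $n$ independent rank-$r$ instances $G_v$, one per ``apex'' vertex $v\in A$ --- to a single instance of $\SF_{2n,r+1}$ on $A\cup B$, where type-1 hyperedges are $\{v\}\cup e$ for $e\in G_v$ and the bottom faces inside $B$ are supplied by a random complete $(r+1)$-partite hypergraph (so that no spurious simplex lies entirely in $B$, while a genuine $r$-simplex of some $G_v$ closes up into an $(r+1)$-simplex with constant probability). The $\sqrt{n}$ then comes from the composition theorem $\Q(\OR_n\circ g)=\Theta(\sqrt{n}\,\Q(g))$, which is a rigorous statement via the adversary characterization of quantum query complexity, not a heuristic about fixed vertices. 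Your cone construction (with edges inside $[n]$ defined by whether the $(r+1)$-set is itself an $r$-simplex of $H$) is a clean deterministic embedding of \emph{one} instance, but with only one instance there is no $\OR_n$ to compose with, hence nothing to generate the $\sqrt{n}$. To repair your argument you would need to pack $n$ independent copies of $\SF_{n,r}$ into the rank-$(r+1)$ instance, one per apex, and then invoke the composition theorem --- which is precisely the paper's proof.
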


This theorem says, in particular, that an $O(n^{r/2})$ algorithm
for simplex finding in rank $r$ hypergraphs
(for any specific constant $r$)
implies an $O(n)$ algorithm for triangle finding. Conversely,
any non-trivial lower bound for triangle finding will give
non-trivial lower bounds for simplex finding in all higher rank
hypergraphs.

\thm{reduction} suggests that the study of simplex finding
in higher-rank hypergraphs is useful for 
graph search problems like triangle finding. An algorithm
for hypergraphs immediately implies an algorithm for graphs;
in the reverse direction, a non-trivial lower bound for
hypergraphs seems like a good first step towards a non-trivial
lower bound for graphs, because \thm{reduction} says that
the hypergraph lower bound is formally easier.

In an attempt to find good algorithms for higher rank simplex finding,
we investigate the framework of nested quantum walks and
adaptive learning graphs. We give a formal reduction between
them, showing that every nested quantum walk (on Johnson graphs)
can be converted into an adaptive learning graph.

\begin{theorem}[Informal; see \lem{learning_graph_nested_johnson}]
\label{thm:conversion}
Every nested quantum walk on Johnson graphs can be converted
into an adaptive learning graph for the same task and with the
same query cost, so long as the ``checking'' step of the
walk can itself be implemented with a learning graph.
\end{theorem}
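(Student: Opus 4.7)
The plan is to proceed by induction on the nesting depth $d$ of the quantum walk, reducing the theorem to two conversion lemmas: one showing how a single Johnson graph walk (whose checker is realized by a learning graph) can be realized as an adaptive learning graph, and one showing that adaptive learning graphs compose. The base case $d=1$ is essentially Belovs's observation \cite{Bel12} that a Johnson walk can be reformulated as a learning graph, combined with the plug-in of a learning-graph checker. The inductive step uses the adaptivity feature of \cite{CLM19}-style learning graphs to embed the outer walk into the learning graph while calling the (by hypothesis, already converted) inner learning graph from within the ``checking'' region.

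For the base case, I would consider a Johnson walk on $k$-subsets $S \subseteq [n]$ with setup cost $\S$, update cost $\U$, spectral gap $\delta$, marked fraction $\epsilon$, and a check that is itself an adaptive learning graph of complexity $\C$. The construction of the corresponding learning graph has three phases. First, a \emph{loading} phase of $k$ sequential edges, each adding one new vertex of $[n]$ and performing the query associated with expanding $S$; the flow on this phase is spread uniformly over all $k$-subsets, so that by the $\alpha$-symmetric framework introduced earlier in the paper, its complexity matches $\S$ plus the standard $\sqrt{1/(\epsilon \delta)}\cdot \U$ walk factor once we attach the speedup edges. Second, a \emph{walking} phase made of swap edges between $k$-subsets that differ in one element, with weights tuned so that the learning graph's electrical energy coincides, up to constants, with $\sqrt{1/(\epsilon\delta)}\cdot \U$. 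Third, at each loaded subset $S$ we attach a copy of the hypothesized \emph{checking} learning graph, whose queries depend on $S$; this is precisely where adaptivity is needed, since the attached learning graph's edge set and weights are functions of $S$. The total complexity comes out as $\S + \sqrt{1/(\epsilon \delta)}(\U + \C)$, matching the MNRS cost \cite{JKM13}.

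For the inductive step, let the outer walk at level $d+1$ have a check implemented by a nested walk of depth $d$. By the induction hypothesis applied to the inner walk (its check is a learning graph either by assumption, at the innermost level, or again by induction), the inner walk is realized as an adaptive learning graph with matching complexity. Now apply the base-case conversion to the outer Johnson walk, using this adaptive learning graph as the attached checker. The resulting object is a single adaptive learning graph because the composition of adaptive learning graphs (one living ``inside'' the check region of another) is again an adaptive learning graph; a dedicated lemma asserting this composition closure will need to be proved, but it follows from concatenating the underlying flows and unioning the query weights.

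The main obstacle will be tracking the cost faithfully through the composition. Quantum walk costs depend on spectral data of the Johnson graph and on the marked fraction, whereas learning graph costs are governed by electrical energies of unit $s \to t$ flows on weighted graphs; making the two match up to constants requires carefully choosing the edge weights so that the negative witness constraint of the loaded/walking phase mirrors the $\sqrt{1/(\epsilon \delta)}$ factor, and doing this in a way that remains $\alpha$-symmetric under the natural $S_n$ action on vertex labels. A secondary issue is ensuring that the inner learning graph's adaptivity is compatible with the outer graph's loading order, i.e., that all queries the inner graph needs are already ``unlocked'' by the outer loading phase; this is where the hypothesis that the checker itself admits a learning-graph realization (rather than only a general quantum algorithm) is essential, since generic quantum subroutines cannot be plugged into the flow-based cost accounting. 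Once these bookkeeping issues are handled, the induction closes and yields \lem{learning_graph_nested_johnson}.
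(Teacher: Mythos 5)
Your overall target is right, but the inductive composition at the heart of your argument does not reproduce the cost of a nested quantum walk, and this is precisely the difficulty the paper's construction is designed to avoid. If you realize the depth-$d$ inner walk as a learning graph and then attach it as the checker of the outer Johnson walk via the base-case conversion, the inner walk's \emph{setup} cost gets multiplied by the outer walk's speciality factor $\sqrt{1/(\epsilon_1\delta_1)}$, since the checker is ``invoked'' once per step of the outer walk. That is the cost profile of a naively nested walk, not of the Jeffery--Kothari--Magniez variant \cite{JKM13} that the theorem is about: there, a single global data structure $D(A_1,\dots,A_r)$ is initialized once, so the setup costs of all levels appear only as an additive term. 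Correspondingly, in \lem{learning_graph_nested_johnson} the term $\bS^2$ stands alone, while only the update and checking terms pick up the accumulated speciality products $\prod_{j}(n_j/k_j)^{\ell_j}$. Your recursion yields the strictly weaker bound in which inner setup is inflated by outer specialities; in the paper's rank-$4$ application this is fatal, since the setup cost is already about $n^{2.26}$ and any speciality multiplier would push the algorithm past the trivial bound.

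The paper therefore does not induct on depth. It builds one flat learning graph with $r+\sum_i\ell_i+1$ stages: first $r$ setup stages loading the non-certificate part $A'_i$ of \emph{every} level (each with speciality $O(1)$, so their total $0$-complexity is $O(\bS^2)$), then $\sum_i\ell_i$ certificate-loading stages ordered by level, then a single checking stage to which the learning graphs $\G_{A_1,\dots,A_r,\lambda}$ are attached. Because all setups precede all certificate loads, the inner levels' state spaces and certificates must depend on outer setup states that are fixed but whose certificate elements are not yet loaded; this is what forces the ordered-partial-subset labels $\P([n_i],k_i)$, the level-wise certificates $I_{y,i}=I_{y,i,A'_1,\dots,A'_{i-1}}$, and the $\alpha$-symmetric machinery for discarding a small fraction of unavailable states --- bookkeeping your proposal does not need in its recursive form but which is essential to the flat layout that actually achieves the claimed cost. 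If you want to keep an inductive presentation, the induction would have to be over the stages of this flat layout rather than over nesting depth, with all setup stages hoisted out of the recursion entirely.
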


Finally, for symmetric yet complex problems such as simplex finding
in hypergraphs, we introduce the notion of ``$\alpha$-symmetric
learning graphs,'' an adaptive learning graph built iteratively
from ``stages'' of a special type.
Using this framework, we
show the following theorem, which provides a nontrivial algorithm 
for $4$-simplex finding.

\begin{theorem} \label{thm:4_simplex_algorithm}
There is an adaptive learning graph algorithm that computes $4$-simplex
    finding with $O(n^{2.455})$ quantum queries.
\end{theorem}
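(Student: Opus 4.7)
The plan is to construct an $\alpha$-symmetric learning graph for $4$-simplex finding with a carefully chosen sequence of stages, inspired by the Le Gall--Nishimura--Tani algorithm for $3$-simplex finding and using the conversion from \thm{conversion} so that we can think of the algorithm as a nested quantum walk. A $4$-simplex certificate consists of $5$ vertices $\{v_1,\dots,v_5\}$ together with $\binom{5}{4}=5$ hyperedges, so the learning graph should progressively load enough hyperedge information on random vertex subsets until with sufficient probability all $5$ certificate vertices and all $5$ relevant hyperedges are exposed.

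First, I would set up the stage template. At stage $i$, one samples a uniformly random vertex subset $A_i$ of size $n^{a_i}$, for parameters $0 < a_1 \le a_2 \le \cdots \le 1$, and queries some hyperedges that can be described by a pattern depending only on the sizes $|A_j|$ of the sampled sets (not on which specific vertices were chosen); this invariance is what makes the construction $\alpha$-symmetric. By the symmetry of the framework, the analysis reduces to computing, for each stage, the ratio between (i) the number of new hyperedges loaded and (ii) the conditional probability that the current sample contains the ``target'' number of certificate vertices. Summing these contributions with the appropriate square-root amplification, following the flow-based learning graph analysis, yields the total query cost.

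Next, I would instantiate the stages so that they mimic the nested walk structure one gets when iteratively searching for the $5$ certificate vertices. A natural schedule is: Stage~$1$ loads all rank-$4$ hyperedges on a set $A_1$ of size $n^{a_1}$; Stage~$2$ extends by loading hyperedges that use one vertex outside $A_1$ but inside a sample $A_2$ of size $n^{a_2}$; Stage~$3$ does the analogous extension for a further vertex; and a final Grover-type stage finds the last missing certificate vertex. At every stage I track both the number of certificate vertices inside the current sample and which of the $5$ hyperedges of the simplex have been revealed; the per-stage flow cost, combined with the probability correction $n^{(r+1-k)(a_i-1)/2}$ for conditioning on containing $k$ certificate vertices in an $n^{a_i}$-sized sample, gives a closed-form expression in the $a_i$.

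The main obstacle will be the optimization: the exponent appears as the maximum over several piecewise-linear functions of $(a_1,a_2,\ldots)$, balancing the loading cost (which grows with $a_i$) against the amplification penalty for rare sampling (which grows as $a_i$ decreases). I would solve this small linear program numerically and verify that the maximum lies below $2.455$ at the optimum, then present the explicit parameter choices and the per-stage bounds. A secondary but important task is to verify that each stage is indeed $\alpha$-symmetric in the sense required by the framework, i.e.\ that its loading pattern is invariant under the symmetric group action permuting the sampled vertices within each $A_i$, so that \thm{conversion} applies and the adaptive learning graph cost matches the nested-walk analysis. Finally, I expect a minor technical step ensuring that the ``checking'' at the innermost level is itself implementable by a learning graph, which is needed to invoke \thm{conversion}; since the check amounts to verifying the presence of a constant number of hyperedges, this can be done by a trivial learning graph of constant cost.
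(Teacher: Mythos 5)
Your proposal does not contain the ideas that actually drive the paper's proof, and as written it would not get below the trivial $O(n^{2.5})$. The paper's algorithm is \emph{not} of the form ``sample $A_1$, load all rank-$4$ hyperedges inside it, then extend vertex by vertex.'' It is a $30$-level nested Johnson walk that searches for the faces of the simplex in increasing dimension: five levels finding the vertices $u_i$ on states $A_i$ of size $n^{a_i}$, ten levels finding the pairs $u_iu_j$ on states $B_{ij}\subseteq A_i\times A_j$ of size $n^{b_{ij}}$, ten levels finding the triples on states $C_{ijk}$, and five levels finding the $4$-hyperedges on states $D_{ijk\ell}$. The decisive point --- which your sketch omits entirely --- is that the state space for the triple level is not $A_i\times A_j\times A_k$ but the much smaller set $\Gamma_{ijk}$ of triples all of whose pairs were already found at the previous levels (and similarly $\Gamma_{ijk\ell}$ for the last levels). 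Since the input oracle only answers rank-$4$ queries, the vertex/pair/triple levels cost nothing to query directly; their entire purpose is to shrink $|\Gamma_{ijk\ell}|$ to $\Theta(n^{m_{ijk\ell}})$ so that the final walks, whose data structure $D(\bA)=\bigcup V_{ijk\ell}$ is the only thing that is actually queried, are cheap. Without this face-by-face restriction there is no mechanism in your construction for beating Grover over $\binom{n}{5}$ candidate simplices.

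Two further gaps. First, you have misunderstood the role of $\alpha$-symmetry: it is not about invariance under permuting sampled vertices (that is ordinary symmetry), but about the fact that the marked states $B_{ij}$ and $C_{ijk}$ must satisfy explicit \emph{degree constraints} (e.g.\ every $v_j v_k$ has at most $O(n^{m_{ijk}-b_{jk}})$ common neighbours in $V_{ij}\cap V_{ik}$) so that $|\Gamma_{ijk}|$ and $|\Gamma_{ijk\ell}|$ are bounded; these constraints fail for an exponentially small fraction of states, and the $\alpha$-symmetric framework exists precisely to delete those states and redistribute the flow. Second, the exponent $2.455$ is the value of a linear program over the $30$ parameters $a_i,b_{ij},c_{ijk},d_{ijk\ell}$ subject to the admissibility constraints and the per-level update costs $\bU_i=O(\max n^{d_{ijk\ell}-a_i})$, $\bU_{ij}=O(\max n^{d_{ijk\ell}-b_{ij}})$, $\bU_{ijk}=O(\max n^{d_{ijk\ell}-c_{ijk}})$; your proposed three-or-four-stage design has a different (and unanalyzed) cost structure, so asserting that its optimum ``lies below $2.455$'' is unsupported.
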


Our algorithm is somewhat complex, so describing and analyzing
it in terms of nested quantum walks would be prohibitively difficult.
Instead, or framework of ``$\alpha$-symmetric learning graphs'' (which
are still closely motivated by nested quantum walks) lets us abstract
away some of the details and makes the analysis more tractable.

Furthermore, this framework has the potential to generalize, giving a possible
    direction for nontrivially solving large-rank hypergraph search 
    problems in more generality.
Our work has the potential to be useful in the design and analysis of 
    complicated quantum search algorithms.

\subsection{Our techniques}

\subsubsection*{Rank reduction for simplex finding}
We give a randomized reduction which reduces the task
$\OR_n\circ \SF_{n,r}$ to $\SF_{2n,r+1}$.
The former is the task of determining whether there is
an $r$-simplex in any of $n$ given $r$-hypergraphs of $n$
vertices each; even though this search involves
$n^2$ vertices, we show that we can complete this search
using a hypergraph search with one additional rank
(i.e. rank $r+1$) on only $2n$ vertices.

Given $n$ $r$-hypergraphs of $n$ vertices, we identify
all their vertex set with the same set of vertices $B$;
Having another set of $n$ vertices, $A$, each of which
is used to label one of the hypergraphs.
We construct an $r+1$-hypergraph $G$ consists of $2n$ vertices
$A\cup B$. Its edges will be as follows: for every
input hypergraph $G_v$ labeled by $v\in A$, and for each
hyperedge $e$ of $G_v$, we add the hyperedge $\{v\}\cup e$
to $G$. In other words, $G$ will have $2n$ vertices
and the same number of hyperedges as the total of all $n$
input hypergraphs; it will have rank $r+1$ if the input
hypergraphs have rank $r$.

We want to run a simplex-finding algorithm on $G$
to find a simplex of one of the $G_v$ graphs, but this does not
yet work. That's because
an $(r+1)$-simplex in $G$ does not quite correspond to an $r$-simplex
in one of the input hypergraphs $G_v$. Indeed, recall that
an $(r+1)$ simplex in $G$ is a set of $r+2$ vertices and
$r+2$ hyperedges; of the hyperedges, $r+1$ of them contain
a vertex $v\in A$, and deleting $v$ from these hyperedges
gives an $r$-simplex in $G_v$, but the last hyperedge
corresponds to $r+1$ vertices in $B$ and no vertices in $A$,
which cannot occur at all in our graph $G$.

To solve this issue, we need to add hyperedges within the
vertex set $B$ of $G$. However, we wish to do so without
forming a simplex in $G$ that uses only vertices from $B$.
To this end, we randomly partition $B$ into $r+1$ parts,
and add all the hyperedges in the ``complete $(r+1)$-partite
hypergraph'' (i.e. all sets of $r+1$ vertices that use
exactly one vertex from each part of $B$). This ensures
we did not introduce a simplex in $G$ that uses only
vertices of $B$ (such a simplex would need to have $r+2$
vertices, and hence two vertices would lie in the same
part of the partition, which is impossible).

We then run simplex-finding on the modified hypergraph $G'$
with these extra hyperedges. For any simplex of some
input graph $G_v$, the simplex will give rise to a higher-rank
simplex in $G'$ if and only if each of the $r+1$ vertices
of the simplex is in a different part of the partition of $B$;
this happens with constant probability, as the number of partitions is constant.
Repeating this search constantly many times with different
partitions of $B$ will result in finding a simplex in one of the
input graphs $G_v$ with high probability.

This reduction used a single copy of $\SF_{2n,r+1}$ to solve
the $\OR$ of $n$ copies of $\SF_{n,r}$. The latter task
requires $\Theta(\sqrt{n})$ times as many quantum queries
as $\SF_{n,r}$, since bounded-error quantum query complexity composes
multiplicatively \cite{HLS07,Rei11,LMR+11,Kim13}. We note that the proof
of the latter fact uses the negative-weight-adversary characterization
of quantum query complexity, so our reduction
technically uses the negative-weight adversary method.

\subsubsection*{Converting nested quantum walks into learning graphs}

It was shown by \cite{CLM19} that
a simple quantum walk on the Johnson graph can be converted to an
equivalent algorithm formulated in the adaptive learning graph framework
with equivalent quantum query cost (see \lem{learning_graph_johnson_walk}).
However, many query algorithms use a ``nested'' quantum walk,
in which one quantum walk occurs as a subroutine of another;
this case is not handled by the construction in \cite{CLM19}.
Our objective is to show that a nested quantum walk on Johnson graph can also be converted to an  equivalent learning graph algorithm.

We will focus on the variant of an $r$-level nested quantum walk presented by 
Jeffery, Kothari, and Magniez \cite{JKM13}.
In this version, we only keep one data structure in quantum registers 
$D(A_1, \dots, A_r) = \ket{A_1, \dots, A_r, D(A_1, \dots, A_r)}$,
which keeps track of the state of all quantum walk levels and initialized at the computation's beginning.
This allows setup costs to appear only at the beginning of the computation.
The updates of each quantum walk level proceed to act on the state
$D(A_1, \dots, A_r)$ instead of their individual classical data structure.

Recall that the Johnson graph $J(n, k)$ has vertices in $\binom{[n]}{k}$
and two vertices $A, B$ are connected by an edge if they differ by
exchanging exactly one element.
Usually, the quantum walk on the Johnson graph is symmetric, meaning that we
    designate $\ell$ elements in $[n]$ as certificates and define the marked vertices of $J(n, k)$
    as all $A \in \binom{[n]}{k}$ where $A$ contains all the certificates.
This allows us to build a corresponding learning graph with special
symmetric stages.

Let's assume the $r$-layers of Johnson walk are given by 
$\{J(n_i, k_i)\}_{i\in [r]}$, where the walk on $J(n_{i+1}, k_{i+1})$ 
    appears as the checking procedure of the walk on $J(n_{i}, k_{i})$.    
To formulate an equivalent adaptive learning graph, we mimic the
setup-update-checking procedures 
in the original algorithm and build their respective stages.
The learning graph begins with $r$ levels of setup stages, loading the states $A_1, \dots, A_r$ respectively.
It's followed by $\sum_{i \in [r]}\ell_i$ stages, loading the certificates of $A_1, \dots, A_r$  in the given order.
The final stage defines the checking procedure of the innermost quantum walk.

Some important modifications must be made to
the proof of \lem{learning_graph_johnson_walk} when extending
it to nested Johnson walks.
In the standard learning graph definition,
loaded elements are kept in an unordered set.
The first issue comes up when the state space of an inner quantum walk needs
to rely on the state of an outer walk.
Stacking stages naively doesn't provide such a dependency.
Instead, we label the vertices of the learning graph by ordered partial
subsets $\P(X, k)$ instead of unordered sets $\binom{X}{k}$.

A related modification concerns the certificates in the learning graph.
Let $y$ be a $1$-input to the learning graph.
The certificate of the nested quantum walk is given by a sequence of certificates 
at each level, $I_y = (I_{y, 1}, \dots, I_{y, r})$.
Even if the certificate for $y$ is unique,
for each $i\in [r]$, elements of the certificate %$I_{y, i}$
can appear in different positions of the ordered tuple in $\P([n_i], k_i)$.
We therefore set each $I_{y,i}$ to refer to the indices of certificates
in the outer levels; this contains information regarding both what
the certificate is and where it is found within each ordered tuple.
Fortunately, this information is available during the setup stages,
so this modification does not pose problems for an adaptive learning graph.

\subsubsection*{Learning graphs with $\alpha$-symmetric stages}

We introduce the concept of an $\alpha$-symmetric learning graph,
which is a special type of a learning graph which is easier to design and
can capture most of the known algorithms for graph search problems.

The motivation for $\alpha$-symmetric learning graphs is an issue
that came up when designing a 4-simplex-finding algorithm.
For the intermediate stages to be well-defined, we require the marked states 
    to not only contain the certificates, but to satisfy certain degree requirements.
In rare cases, the learning graph may load vertices whose degree becomes
too large; we wish to remove such vertices.
In this case, the stages we design are no longer fully symmetric, but they are not too far
    from being symmetric.
We define $\alpha$-symmetric stages of an adaptive learning graph to 
    capture this scenario.

Assume that $\alpha$ is an exponentially small (with respect to $n$) fraction,
    and let $s$ be a constant.
An $\alpha$-symmetric stage in a learning graph is a stage that can be obtained
from a fully-symmetric stage $\F$ by slightly altering its flows.

More concretely, let $V_{s, y}, V_{s+1, y}$ be the beginning and ending vertex sets 
    of $\F$, respectively, which receive positive flow from the flow $p_y$ of a $1$-input
    $y$.
We identify a $(1 - (1 - \alpha)^s)$ fraction of $V_{s, y}$
    and a $(1 - (1 - \alpha)^{s+1})$ fraction 
    of $V_{s+1, y}$ as ``bad'' or ``unavailable''.
If any bad vertices receive or emit positive flows from $p_y$ in $\F$, 
    we delete the flows on this vertex (by removing the edges incident to the 
    bad vertex or by setting the flow on those edges to $0$) and redistribute the 
    flows evenly to the remaining vertices in $V_{i, y}$ or $V_{j, y}$.
In symmetric stages, vertices in $V_i$ receive uniform flow, meaning 
    $p_y(v)$ are equal for all $v \in V_{i, y}$.
In an $\alpha$-symmetric stage, flow values are allowed to differ but 
    are close to each other.
In particular, we get $\displaystyle p_y(v) \le \frac{1}{(1-\alpha)^s}p_y(w)$ 
    for any $v, w \in V_{i, y}$.

We can design learning graphs by stacking $\alpha$-symmetric stages just like 
    stacking symmetric stages because the given construction allows the 
    ending vertices of an $\alpha$-symmetric stage with constant $s$ to act 
    as the beginning vertices of an $\alpha$-symmetric stage with constant $s+1$.
If the learning graph is designed with a fixed number of levels, $s$ refers to the 
    stage number and it is upper bounded by a constant. 
Therefore, when $\alpha$ is small, this redistribution doesn't alter the 
    asymptotic bound of the algorithm's query complexity.

\subsubsection*{Simplex finding in rank \texorpdfstring{$4$}{4}}

In rank $4$ hypergraphs, the best-known quantum algorithm for
simplex finding algorithm has the trivial $O(n^{2.5})$ query upper bound.
Inspired by the approach used by Le Gall, Nishimura, and Tani \cite{LGNT16} when building the 
    (current) optimal $3$-simplex finding algorithm, we will show in 
    \sec{4_simplex_finding} that a nontrivial algorithm 
    can be constructed by a nested quantum walk on $30$ levels of nested Johnson graphs,
    searching for the ``hyperedges'' of rank $1$, $2$, $3$, $4$, in order.
The algorithm uses $30$ parameters $a_i, b_{ij}, c_{ijk}, d_{ijkl}$ for
    $ijkl \in \binom{[5]}{4}$, used to set up the size of the state of the Johnson walks.
    
If the input $4$-uniform hypergraph $G$ has a $4$-simplex with vertices $u_1, \dots, u_5$,
    the first $5$ levels will each have one of these vertices as the certificate.
The state of these quantum walks is labeled by $A_i$, with size $n^{a_i}$.
The next $10$ levels (levels $6$ through $15$) will search for pairs of vertices $u_{ij}$, 
    via a quantum walk with state labeled by $B_i$ in the smaller state space
    $\Gamma_{ij} = A_i \times A_j$.

Similarly, in levels $16$ to $25$, we search for the triples of vertices
    $u_{ijk}$ by a quantum walk over the state $C_{ijk}$.
However, for these $10$ levels, the state space $\Gamma_{ijk}$
    we are walking on is the set of 
    triples of vertices $v_iv_jv_k$ where $v_iv_j \in B_{ij}$, $v_iv_k \in B_{ik}$, and $v_jv_k \in B_{jk}$.
In other words, we only consider walking on the $2$-dimensional face,
    finding a triangle for which all of the lower-rank hyperedges
    were already found at the earlier levels of the search.
The expected size of this state space is smaller than the trivial state space 
    $A_i \times A_j \times A_k$, which is critical for making the resulting algorithm 
    nontrivial.

Note that given arbitrary states $B_{ij}, B_{ik}, B_{jk}$, the size of $\Gamma_{ijk}$ may vary.
We want to avoid the case that $\Gamma_{ijk}$ has size larger than some constant multiple of 
    its expected size $O(n^{m_{ijk}}), m_{ijk} = b_{ij} + b_{ik} + b_{jk} - a_i - a_j - a_k$.
We can achieve this by controlling the degrees of the $2$-edges found in levels $6$ to $15$.
Thus, by adding appropriate degree constraints to marked elements in levels
    $6$ through $15$ of the nested quantum walk,
    we ensure a smaller state space $\Gamma_{ijk}$. % is well-defined.
These extra degree constraints fails with exponentially small probability,
    but we can handle this in the framework of $\alpha$-symmetric learning graphs.

Finally, in the last $5$ stages, we search for the five hyperedges of $4$-simplex
    by quantum walking on 
    the state space $\Gamma_{ijkl}$ consisting of the $3$-dimensional polytope
    (i.e. $4$-hyperedges)  whose geometric faces are already found at the earlier levels.
Adding degree constraints to marked elements in levels $16$ to $25$ ensures these 
    quantum walks have good complexities.
Analyzing the query complexity of this learning graph and linearly optimizing the $30$ parameters
    provide a nontrivial $O(n^{2.455})$-query quantum algorithm for $4$-simplex finding.

It is important to note that although we described the algorithm as a nested quantum walk,
we formally present it as an adaptive learning graph using our
$\alpha$-symmetric framework; this presentation
makes the analysis of the algorithm more tractable, demonstrating the utility
of the framework.

\subsection{Open problems}

One of the main open problems for graph search problems
    is the long-standing task of finding a 
    non-trivial lower bound for triangle finding.
As \thm{reduction} shows, a formally easier version of this
problem is to find a non-trivial lower bound for simplex finding.

\begin{open}
Is there an $\Omega(n^{r/2+0.01})$ lower bound for
simplex finding in any rank $r$?
\end{open}

We are also interested in understanding how the complexity
of simplex finding increases with $r$.

\begin{open}
Let $a_r=\inf\{a:\Q(\SF_{n,r})=O(n^{r/2+a})\}$.
We know by \thm{reduction} that
\[0\le a_2\le a_3\le\dots \le 1/2.\]
Is this sequence strictly increasing?
What is $\lim_{r\to\infty} a_r$?
\end{open}

More specifically, an interesting problem is whether \thm{4_simplex_algorithm}
generalizes to higher-rank hypergraphs; if it can be made to give nontrivial
for all $r$, this would at least imply that $a_r<1/2$ for every $r$.

\begin{open}
Can \thm{4_simplex_algorithm} be generalized to a nontrivial quantum algorithm
for $r$-simplex finding, for all $r\ge 4$?
\end{open}

Finally, one can ask similar questions for other families of subgraph finding
problems.

\begin{open}
Can our techniques be used to find new algorithms for other (hyper)graph
search problems? Are there other reductions between natural
families of (hyper)graph search problems, similar to \thm{reduction}?
\end{open}

%%%%%%%%%%%%%%%%%%%%%%% Section 2 Preliminaries %%%%%%%%%%%%%%%%%%%%%%%

\section{Preliminaries} \label{sec:preliminaries}

\subsection{Hypergraph Notations}

We start by introducing some notations for hypergraphs.
A hypergraph $G$ consists of a set of vertices $V$ and a set of
hyperedges $E$, where every hyperedge $e \in E$ is a subset of $V$.
We call a hyperedge $e$ with $k$ elements a $k$-edge, where $k$ is the \textit{size} of $e$.
For the problems presented in this paper, we assume that the
hypergraphs have no parallel hyperedges and no hyperedges of size $0$ or $1$.

We use $n$ to denote the size of $V$.
The \textit{rank} $r$ of a hypergraph $G$ is the size of the 
    largest hyperedge in $E$.
We only consider the rank $r$ as a constant relative to $n$.
Furthermore, if every hyperedge of $G$ has size $r$, we call $G$ an 
$r$-uniform hypergraph or an $r$-hypergraph in short.

Let $[n]$ denote the set $\{1, 2, \dots, n\}$, and
let $\permutate{n}{r} = \bigslash{n!}{(n-r)!}$.
To denote an element in $\binom{V}{r}$ conveniently, we often omit 
    the curly bracket of a set.
For example, we write a potential hyperedge $\{u, v, w\} \in \binom{V}{3}$
    simply as $uvw$.

Given a hypergraph $G$ and subsets $A, B\subseteq V$, we use $G_A$ 
to denote the restriction of $G$ to $A$ (i.e. the subgraph of $G$ 
induced by $A$) and we use $G_{A_1, \dots, A_r}$ to denote the 
$r$-partite hypergraph obtained 
from taking the restriction of $G$ to the $r$-partition 
$A_1, \dots, A_r$.

% Graph definition generalize:
Observe that a graph is a 2-uniform hypergraph, so some graph terminology generalizes to hypergraphs.
Given a hypergraph $G = (V, E)$,
we say $v, w \in V$ are \emph{adjacent} if $v \neq w$ and there is a hyperedge $e \in E$ such that 
    $\{v, w\} \subseteq e$, two hyperedges $e_1, e_2 \in E$ are \emph{adjacent} if $e_1 \cap e_2 \neq \emptyset$.
We say that $v \in V$ is \emph{incident} to $e\in E$ if $v \in e$.
The \emph{degree} of a vertex $v\in V$ is the number of hyperedges incident to it.
With the above definitions, the concept of isomorphism and
of an incidence matrix naturally extend to hypergraphs.

Let $G = (V, E)$ be an $r$-uniform hypergraph.
The ($r$-dimensional) \emph{adjacency tensor} is a function 
    $f_G : \binom{V}{r} \to \{0, 1\}$ where 
    $f_G(v_1v_2\dots v_r) = 1$ if and only if $v_1v_2\dots v_r \in E$.
If $f_G$ is the constant $0$ function,
we say $G$ is an \emph{empty} hypergraph.
If $f_G$ is the constant $1$ function,
we say $G$ is a \emph{complete} hypergraph.

For this paper, we focus on finding query algorithms for $r$-uniform 
    hypergraph problems.
This means we fix the set of vertices $V$ and treat $f_G$ as a 
    black box oracle input.
We usually set $V = [n]$ for convenience.
In a query algorithm, we rely on the ability to ask the hyperedge 
    oracle $O_G = f_G$ whether an element in $\binom{V}{r}$ 
    is a hyperedge of $G$ to determine whether $G$ has a certain property.
The query model is formalized in the next subsection.

\subsection{Query complexity}

In query complexity, we are interested in the task
of computing a Boolean function $f\colon[q]^N\to[M]$.
Here $[q]$ is an input alphabet, usually $\B$, and $[M]$
is an output alphabet, also usually $\B$.
We may allow $f$ to be \emph{partial} in the sense that $f$ can be 
    only defined on a subset $\cD \subseteq [q]^N$.
We will use $\cD$ to denote the domain of $f$ (also called
a promise, since the input is promised to be in $\cD$).
If we restrict $f$ to a promise,
computing $f$ can only become easier because there are fewer
inputs to handle.
If $f$ is defined for all $x \in \{0, 1\}^N$, we say that $f$ 
is \emph{total}.

% Classical Query Complexity 
In the classical query model of computation, the input
$x\in\B^N$ (or $x\in[q]^N$)
is given as a black box oracle $\O_x$, which returns the
bit $x_i\in\B$ (or $x_i\in[q]$) given a query $i\in[N]$.
The goal is to find an algorithm which computes the value of $f(x)$
correctly with as few oracle calls to $\O_x$ as possible,
and succeeds on all inputs $x$ in the domain of $f$.

We make the following definitions.
\begin{itemize}
\item The \emph{deterministic query complexity} $\D(f)$
    of a (possibly partial) Boolean function $f$ is the minimum
    number of deterministic queries to an input $x$ that are required
    to compute $f(x)$ in the worst case over choice of $x$.
\item The \emph{randomized query complexity} $\R(f)$
is the minimum number $T$ such that there is a randomized
algorithm which makes $T$ queries in the worst case
and computes $f(x)$ to bounded error for all inputs $x$.
\item The \emph{quantum query complexity} $\Q(f)$ is the
minimum number $T$ such that there is a quantum algorithm
which makes at most $T$ queries (in superposition) and computes
$f(x)$ to bounded error for all inputs $x$.
\end{itemize}

For more detailed versions of these definitions, see \cite{BdW02}.
We note that randomized and quantum query complexities can
be amplified, so the probability of error achieved when
computing $f(x)$ does not matter so long as it is at most
a fixed constant in $(0,1/2)$.

Quantum query algorithms may take exponentially fewer queries to compute some partial functions than classical algorithms.
However, the hypergraph search problems 
we consider in this work are mostly total functions,
and the best separation between classical and quantum query
    complexity for total functions is at most polynomial: 
\begin{theorem}[\cite{BBC+01,ABK+21}]%\cite{SSRSA21}
For all total Boolean functions,
$\D(f) = O(\Q(f)^4)$. 
\end{theorem}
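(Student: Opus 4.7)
The plan is to deduce the deterministic bound from the quantum bound by translating the quantum query complexity into approximate polynomial degree, then routing that back to deterministic query complexity through combinatorial measures (sensitivity, block sensitivity, and certificate complexity), using Huang's sensitivity theorem as the key ingredient that upgrades the exponent from $6$ to $4$.

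The first step is the standard polynomial method of Beals--Buhrman--Cleve--Mosca--de~Wolf: the acceptance probability of a $T$-query quantum algorithm on input $x \in \B^N$ is a real multilinear polynomial in $x$ of degree at most $2T$. A bounded-error quantum algorithm therefore yields a polynomial approximating $f$ pointwise, which gives $\adeg(f) \le 2\Q(f)$. This step is purely formal and contributes nothing worse than the factor of $2$.

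The second step is the structural relation $\deg(f) = O(\adeg(f)^2)$ proved in [ABK+21], which in turn relies on Huang's sensitivity theorem $s(f) \ge \sqrt{\deg(f)}$ together with a dual-polynomial argument. Combined with classical relations of Nisan and Nisan--Szegedy ($\bs(f) = O(\deg(f)^2)$ and $D(f) \le C(f)^2$, with $C(f)$ controlled by $\bs(f)$ and $s(f)$), Huang's theorem lets us write $D(f)$ as a small polynomial in $\deg(f)$ by keeping sensitivity, block sensitivity, and degree all within a bounded power of one another. Carefully tracking exponents through the chain
\[
\Q(f) \;\gtrsim\; \adeg(f) \;\gtrsim\; \sqrt{\deg(f)} \;\gtrsim\; D(f)^{1/4},
\]
and using the new ingredient $\deg(f) = O(\adeg(f)^2)$ at the crucial juncture, yields $D(f) = O(\Q(f)^4)$.

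The main obstacle is optimizing this chain to obtain exponent exactly $4$ rather than $6$. Naively composing $\adeg(f) \le 2\Q(f)$ with $\bs(f) = O(\adeg(f)^2)$ and $D(f) = O(\bs(f)^3)$ (the original BBC+01 route) yields only $D(f) = O(\Q(f)^6)$. The improvement hinges on replacing the last step by a tighter bound: one routes through $\deg(f)$ using the quadratic relationship $\deg(f) = O(\adeg(f)^2)$ and leverages Huang's theorem to pin sensitivity close to $\sqrt{\deg(f)}$, so that certificate complexity and block sensitivity can be simultaneously controlled by $\deg(f)$ with exponent $2$ rather than $3$. The bookkeeping of these combinatorial measure relations, in a way that avoids wasting any factor, is the delicate part of the argument.
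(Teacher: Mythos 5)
The paper does not prove this statement; it is quoted as a black box from \cite{BBC+01,ABK+21}, so your proposal is measured against the argument in those references. Your first two ingredients are the right ones: the polynomial method gives $\adeg(f)\le 2\Q(f)$, and the new input from \cite{ABK+21} is $\deg(f)=O(\adeg(f)^2)$, proved via a spectral strengthening of Huang's sensitivity theorem.

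The gap is in how you assemble these into exponent $4$. The last link of your displayed chain, $\sqrt{\deg(f)}\gtrsim \D(f)^{1/4}$, asserts $\D(f)=O(\deg(f)^2)$, which is not a known theorem but a well-known open problem; the best known bound in that direction is $\D(f)=O(\deg(f)^3)$. Your fallback route through $\D(f)\le \C(f)^2$ with $\C(f)$ controlled by $\s(f)$ and $\bs(f)$ also cannot reach exponent $4$: using $\C(f)\le \s(f)\bs(f)$ and $\s(f)\le\bs(f)=O(\adeg(f)^2)$ gives at best $\D(f)=O(\adeg(f)^8)$, and even the sharper $\D(f)\le\C(f)\bs(f)$ only yields $O(\adeg(f)^6)$ --- moreover Huang's theorem \emph{lower}-bounds $\s(f)$ in terms of $\deg(f)$, which is the wrong direction for upper-bounding $\D(f)$, so it cannot ``pin'' sensitivity in the way you suggest. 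The missing ingredient is Midrijanis's inequality $\D(f)=O(\bs(f)\cdot\deg(f))$. Combined with $\bs(f)=O(\adeg(f)^2)$ (Nisan--Szegedy, needed in terms of $\adeg$ rather than the weaker $O(\deg(f)^2)$ form you quote) and the new $\deg(f)=O(\adeg(f)^2)$, it bounds $\D(f)$ by a product of two factors, each $O(\Q(f)^2)$, giving $\D(f)=O(\Q(f)^4)$. Without a product-form bound of this kind, no bookkeeping of the measures you list closes the gap from $6$ to $4$.
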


The following are important notions in query complexity.
\begin{itemize}
    \item A \emph{partial assignment} is a string
    $p\in\{0,1,*\}^N$ representing partial knowledge
    of a string in $\B^N$. We say two partial assignments $p$ and $q$
    are \emph{consistent} if for all $i\in[N]$ such that
    $p_i\ne *$ and $q_i\ne *$, we have $p_i=q_i$.
    We conflate a partial assignment $p$ with
    the set $\{(i,p_i):i\in[N], p_i\ne *\}$, which is a partial
    function from $[N]$ to $\B$. This lets us use notation
    such as $|p|$ for the number of non-$*$ bits of $p$.
    \item A \emph{certificate} for a (possibly partial)
    Boolean function $f$ is a partial assignment $c$ such that
    all inputs in the domain of $f$ which are consistent with $c$
    have the same $f$-value. In particular, a $1$-certificate
    has the property that $f(x)=1$ for all $x$ consistent with $c$,
    while a $0$-certificate has $f(x)=0$ for all $x$ consistent
    with $c$.
\end{itemize}

We also note a result on the quantum complexity of the composition of Boolean functions.
Let $f\colon \B^N \to \B$ and $g\colon\B^M \to \B$ be Boolean functions.
We define the composition
$f \circ g = f\circ (g, g, \dots, g) \colon \B^{NM} \to \B$ as the
function 
\[f \circ g(x^1x^2\dots x^N) := f(g(x^1),g(x^2),\dots ,g(x^N))\]
    for $x^1, x^2, \dots, x^N \in \{0, 1\}^M$.
A seminal result is that the quantum query complexity of the composed
function $f\circ g$
is equivalent to the product of quantum query complexities of $f$ and $g$.
\begin{theorem}[\cite{HLS07,Rei11,LMR+11,Kim13}] 
    \label{thm:composition_bound}
For any (possibly partial) Boolean functions $f$ and $g$, we have
\[ Q(f\circ g) = \Theta(Q(f)\cdot Q(g)).\]
\end{theorem}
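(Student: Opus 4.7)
The plan is to prove both directions of the composition bound through the negative-weight adversary bound $\Adv^{\pm}$, which characterizes bounded-error quantum query complexity up to constant factors. Combining the adversary lower bound of Høyer--Lee--Špalek with Reichardt's span-program characterization (extended to partial functions by Lee--Mittal--Reichardt--Špalek--Szegedy and Kimmel), one has $\Q(h) = \Theta(\Adv^{\pm}(h))$ for every (possibly partial) Boolean function $h$. This reduces the theorem to the purely combinatorial statement $\Adv^{\pm}(f \circ g) = \Theta(\Adv^{\pm}(f) \cdot \Adv^{\pm}(g))$, which I would prove by giving matching upper and lower bounds on the composed adversary quantity.

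For the upper bound $\Q(f \circ g) = O(\Q(f) \cdot \Q(g))$, I would first obtain an optimal span program $P_f$ for $f$ of witness complexity $O(\Adv^{\pm}(f))$ and an optimal span program $P_g$ for $g$ of witness complexity $O(\Adv^{\pm}(g))$. I would then compose them by substituting a copy of $P_g$ for each input literal of $P_f$, producing a span program $P_{f\circ g}$ for $f \circ g$. A careful analysis of the composed positive and negative witnesses shows they multiply, so $P_{f\circ g}$ has witness complexity $O(\Adv^{\pm}(f) \cdot \Adv^{\pm}(g))$, and Reichardt's span-program evaluation algorithm converts this into a quantum query algorithm with matching cost.

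For the lower bound $\Q(f\circ g) = \Omega(\Q(f) \cdot \Q(g))$, I would exhibit an explicit adversary matrix $\Gamma$ for $f \circ g$ whose spectral ratio equals the product of the optimal bounds for $f$ and $g$. Starting from optimal adversary matrices $\Gamma_f$ and $\Gamma_g$, one forms a tensor-type matrix whose rows and columns are indexed by inputs $x = (x^1, \dots, x^N)$ to $f \circ g$. Both $\|\Gamma\|$ and the normalization matrices $\|\Gamma \circ D_i\|$ factor along this product structure, yielding $\Adv^{\pm}(f\circ g) = \Omega(\Adv^{\pm}(f) \cdot \Adv^{\pm}(g))$. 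Applying the general adversary lower bound $\Q(f\circ g) = \Omega(\Adv^{\pm}(f\circ g))$ then closes the argument.

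The main obstacle is the upper bound, specifically achieving exact multiplicativity without any logarithmic overhead. A naive simulation that replaces each of the $\Q(f)$ oracle calls in an outer algorithm for $f$ with an amplified $\Q(g)$-query subroutine for $g$ must drive the inner error down to $o(1/\Q(f))$, introducing an unavoidable $O(\log \Q(f))$ factor and giving only $O(\Q(f) \cdot \Q(g) \cdot \log \Q(f))$. Removing this factor requires the full strength of the span-program framework, where inner errors are absorbed into the spectral-gap analysis of the outer program rather than via explicit repetition, together with the partial-function extensions needed to handle promise inputs of $f$ and $g$; this is precisely why the cited sequence of works is invoked rather than a single reference.
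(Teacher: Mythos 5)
The paper does not prove this statement itself—it imports it from the cited works \cite{HLS07,Rei11,LMR+11,Kim13}—and your outline follows exactly the route those works take: characterize $\Q$ by the negative-weight adversary bound, compose span programs (dual adversary solutions) for the upper bound $\Adv^{\pm}(f\circ g)=O(\Adv^{\pm}(f)\Adv^{\pm}(g))$, and compose adversary matrices for the matching lower bound, with the partial-function subtleties handled as in Kimmel's work. So your proposal is correct and takes essentially the same approach as the proof the paper is citing.
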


\subsection{Quantum walks}

Quantum walks are a powerful tool in the design of quantum algorithms.
For our purposes, their main utility comes from their ability
to find marked vertices in a graph. See \cite{San08} for a survey.
Briefly, they are defined as follows. Let $P$ be an $n\times n$
stochastic matrix representing an ergodic, reversible Markov chain.
Let $\delta>0$ be the spectral gap of $P$, and let $\pi$ be its
unique stationary distribution.
We associate with every vertex $x\in[n]$ a data structure $D(x)$.
We assume we have access to
three quantum subroutines called setup, update, and checking;
the cost of the quantum walk
(i.e. the number of queries before a marked vertex
is found) will depend on their costs, which are defined as follows.
\begin{enumerate}
    \item Setup Cost $\bS$: The cost of setting up the initial state of the walk:
        \[\ket{S} = \sum_{x\in V}\sqrt{\pi_x}\ket{x, D(x)}\ket{0}.\]
    \item Update Cost $\bU$: The cost of making one step of transition:
        \[\ket{x, D(x)}\ket{0} \mapsto \ket{x, D(x)}\sum_{y\in V}\sqrt{P_{xy}}\ket{y, D(y)}.\]
    \item Checking Cost $\bC$: The cost of a quantum procedure checking if $x \in M$ using the data structure $D(x)$:
        if $x$ is marked, apply a $-1$ phase to the state $\ket{x, D(x)}$. %\ket{y, D(y)}$ 
\end{enumerate}
Then we have the following result.

\begin{theorem}[\cite{MNRS11}] \label{thm:mnrs_walk}
Let $P$ be an ergodic, reversible Markov Chain.
Let $\epsilon > 0$ be a lower bound on the probability that an element 
chosen from the stationary distribution $\pi$ of $P$ is marked.
Let $\delta > 0$ be the spectral gap of $P$.
Then there is a quantum algorithm that finds a marked vertex
with constant probability and 
\[O\left(\bS + \frac{1}{\sqrt{\epsilon}}\left(\frac{1}{\sqrt{\delta}}\bU
+ \bC\right)\right)\]
queries. In other words, we need to search for $O(1/\sqrt{\epsilon})$
steps, and each step costs $\bC$ for checking and $\bU/\sqrt{\delta}$
for walking.
\end{theorem}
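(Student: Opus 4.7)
The plan is to prove this through Szegedy's quantization of a Markov chain combined with amplitude amplification, carefully tracking costs. First I would construct the quantum walk operator $W(P)$ as a product of two reflections acting on the Hilbert space spanned by $\ket{x, D(x)}\ket{y, D(y)}$ for edges $(x,y)$ of the chain. One step of $W(P)$ can be implemented using $O(1)$ calls to the update unitary, so costs $O(\bU)$ queries. The key spectral fact I would invoke (from Szegedy's analysis of quantum walks) is that the initial state $\ket{S}=\sum_x \sqrt{\pi_x}\ket{x, D(x)}$ is the unique $+1$ eigenvector of $W(P)$ within the relevant invariant subspace, and every other eigenstate $\ket{\lambda}$ has a phase $\theta_\lambda$ satisfying $|\theta_\lambda|=\Omega(\sqrt{\delta})$, where $\delta$ is the spectral gap of $P$.

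Next I would use phase estimation on $W(P)$ with precision roughly $\sqrt{\delta}/c$ to implement an approximate reflection $R_S$ around $\ket{S}$: any state orthogonal to $\ket{S}$ gets a nonzero phase estimate, and we apply a $-1$ conditioned on a nonzero register. This reflection costs $O(1/\sqrt{\delta})$ applications of $W(P)$, hence $O(\bU/\sqrt{\delta})$ queries. The checking subroutine gives a reflection $R_M$ around the span of marked states $\ket{x, D(x)}$, costing $\bC$ queries per application.

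I would then run amplitude amplification on $\ket{S}$ with the ``good'' subspace being the marked states. Since $\|\Pi_M \ket{S}\|^2 = \sum_{x\in M}\pi_x \ge \epsilon$, amplitude amplification with the pair $(R_S, R_M)$ finds a marked vertex with constant probability after $O(1/\sqrt{\epsilon})$ iterations. Combining costs, the initial setup costs $\bS$, and we perform $O(1/\sqrt{\epsilon})$ rounds of $R_S \cdot R_M$, each costing $O(\bU/\sqrt{\delta})+\bC$, yielding the claimed total.

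The main obstacle is the error analysis: the approximate reflection $R_S$ obtained via phase estimation is not exact, so iterating it $O(1/\sqrt{\epsilon})$ times could in principle accumulate error that destroys the amplitude amplification argument. The hard part is showing that with precision $\Theta(\sqrt{\delta})$ the per-step error is at most a constant times $\sqrt{\epsilon}/\text{(iteration count)}$ in the relevant norm, so that the total perturbation remains bounded; this is exactly the technical contribution of the MNRS refinement over the earlier Magniez--Nayak--Roland--Santha and Szegedy analyses, and would require invoking their "effective spectral gap" style lemma rather than rederiving it from scratch.
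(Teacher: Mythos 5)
This theorem is quoted in the paper as an external result from Magniez, Nayak, Roland, and Santha; the paper gives no proof of its own, so the only meaningful comparison is with the original argument, and your sketch is exactly that argument: Szegedy quantization, the $\Omega(\sqrt{\delta})$ phase gap, a phase-estimation-based approximate reflection about $\ket{S}$, amplitude amplification over $O(1/\sqrt{\epsilon})$ rounds, and the error-accumulation issue that MNRS resolve with their recursive/tolerant amplification analysis. Your outline is correct and correctly isolates the one genuinely delicate step (making the approximate reflection accurate enough without paying an extra $\log$ factor), so there is nothing to add beyond deferring that step to \cite{MNRS11} as you do.
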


For the design of quantum query algorithms for search problems,
such as Ambainis's algorithm for element distinctness \cite{Amb07},
we generally just need to walk on the Johnson graph.

\begin{definition}
    For $1\le k\le n/2$,
    The \emph{Johnson graph} $J(n, k)$ is the graph with vertex set
    $V=\binom{[n]}{k}$.
    Two vertices $A, B \in V$ are joined by an edge if and only if
    $|A\cap B|=k-1$,
     i.e.\ we can obtain $B$ from $A$ by removing an element of $A$ and adding a new element in $[n]$.
\end{definition}

The symmetric walk on $J(n, k)$ is given by a chain $P$ where
$P_{A, B} = k^{-1}(n-k)^{-1}$ for all $A, B$ adjacent in $J(n, k)$.
We note that $P$ is ergodic, reversible with stationary distribution $\pi$ equal to a vector of all $1/n$. The spectral gap of $P$ is
$1/k+1/(n-k)=\Theta(1/k)$.
Suppose that for some $\ell < k$, $A \in \binom{[n]}{k}$ is marked
if and only if $A$ contains a fixed subset of vertices 
    $v_1, \dots, v_\ell \in [n]$.
Then the fraction of marked states is
\[
\bigslash{\binom{n-\ell}{k-\ell}}{\binom{n}{k}} =
\Omega\left(\left(\frac{k-\ell}{n}\right)^{\ell}\right).\]
It is not hard to see that this is lower bounded by $\Omega((k/n)^\ell)$
when $\ell=O(\sqrt{k})$.
\begin{corollary} \label{cor:johnson_walk}
Let $k\le n/2$ and let $\ell=O(\sqrt{k})$.
Let $P$ be the symmetric Markov chain on $J(n,k)$,
and assume a vertex of $J(n,k)$ is marked if it contains
all of $\ell$ special elements in $[n]$.
Then the quantum walk algorithm finds a marked vertex of the Johnson
graph 
with constant success probability using
$O(\bS + (n/k)^{\ell/2}(\sqrt{k}\cdot \bU + \bC))$ queries.
\end{corollary}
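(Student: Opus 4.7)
The plan is to apply \thm{mnrs_walk} directly to the symmetric walk $P$ on $J(n,k)$, so the task reduces to identifying the spectral gap $\delta$ and a lower bound on the marked fraction $\epsilon$, and then substituting into the MNRS cost formula.

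For the gap, I would simply appeal to the computation already recorded in the excerpt: the symmetric Markov chain on $J(n,k)$ has $\delta = 1/k + 1/(n-k)$, and the hypothesis $k \le n/2$ implies $1/(n-k) \le 1/k$, so $\delta = \Theta(1/k)$ and hence $1/\sqrt{\delta} = O(\sqrt{k})$.

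For $\epsilon$, the marked vertices are exactly the $k$-subsets of $[n]$ containing all $\ell$ fixed special elements, so there are $\binom{n-\ell}{k-\ell}$ of them, giving
\[
\epsilon = \frac{\binom{n-\ell}{k-\ell}}{\binom{n}{k}} = \prod_{i=0}^{\ell-1}\frac{k-i}{n-i} \;\ge\; \left(\frac{k-\ell}{n}\right)^{\ell}.
\]
I would then factor this as $(k/n)^\ell \cdot (1 - \ell/k)^\ell$ and invoke the hypothesis $\ell = O(\sqrt{k})$ to conclude
\[
(1-\ell/k)^\ell = \exp\bigl(\ell \log(1-\ell/k)\bigr) = \exp\bigl(-O(\ell^2/k)\bigr) = \Omega(1).
\]
This yields $\epsilon = \Omega((k/n)^\ell)$ and therefore $1/\sqrt{\epsilon} = O((n/k)^{\ell/2})$. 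I would flag this $(1-\ell/k)^\ell = \Omega(1)$ estimate as the single step most worth checking carefully, since it is the only place where the hypothesis $\ell = O(\sqrt{k})$ actually enters; everything else is a direct substitution.

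Substituting both bounds into \thm{mnrs_walk} would then give the claimed cost
\[
O\!\left(\bS + \frac{1}{\sqrt{\epsilon}}\left(\frac{1}{\sqrt{\delta}}\bU + \bC\right)\right) = O\!\left(\bS + (n/k)^{\ell/2}\bigl(\sqrt{k}\cdot \bU + \bC\bigr)\right),
\]
which is exactly the bound in the corollary. No genuine obstacle is anticipated; the proof is essentially a one-line application of MNRS together with the combinatorial identity above.
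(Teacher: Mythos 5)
Your proposal is correct and matches the paper's own (implicit) justification exactly: the corollary is obtained by plugging the spectral gap $\delta=1/k+1/(n-k)=\Theta(1/k)$ and the marked fraction $\binom{n-\ell}{k-\ell}/\binom{n}{k}=\Omega(((k-\ell)/n)^\ell)=\Omega((k/n)^\ell)$ into \thm{mnrs_walk}. The step you flagged, $(1-\ell/k)^\ell=\exp(-O(\ell^2/k))=\Omega(1)$ for $\ell=O(\sqrt{k})$, is precisely the ``not hard to see'' observation the paper relies on, so there is nothing to add.
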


Quantum walks on Johnson graphs are a key technique used to
construct nontrivial algorithms for graph search problems
such as triangle finding.

\subsection{Learning graphs}

In this subsection, we define the learning graph computational 
framework.
A feasible learning graph for Boolean function $f$ provides an upper 
bound to the quantum query complexity of $f$.

\subsubsection*{Basic learning graphs}

\begin{definition}[\cite{Bel12}] \label{def:learning_graph}
    Let $f$ be a Boolean function with domain $\cD \subseteq \{0, 1\}^N$.
    A \emph{(reduced) non-adaptive learning graph for $f$} is a directed acyclic graph $\G = (\V, \E)$ such that
    \begin{enumerate}
        \item every vertex $v \in \V$ is labeled by a subset $s(v) \subseteq [N]$ of indices of inputs to $f$,
        \item $\G$ has a root vertex labeled by the empty set $\emptyset$,
        \item every directed edge $e = \overrightarrow{uv} \in \E$ satisfies $s(u) \subseteq s(v)$,
        \item every directed edge $e = \overrightarrow{uv} \in \E$ has a length given by $l(e) = |s(v) - s(u)|$,
        \item every directed edge $e = \overrightarrow{uv} \in \E$ has a positive weight $w(e) \in \mathbb{R}^+$,
        \item every 1-input $y$ of $f$ (that is, $y \in f^{-1}(1)$) has a flow $p_y$ of value 1 on the 
            learning graph $\G$ where the root vertex of $\G$ is the source and every vertex $v \in \V$ such that 
            $s(v)$ contains a 1-certificate of $y$ in $f$ is a sink.
    \end{enumerate} 
\end{definition}

In order to distinguish the vertices and edges of a learning graph from the vertices and edges of a graph in the 
    question, we call the vertices in the learning graphs \textit{L-vertices} and call the directed edges in the 
    learning graphs \textit{L-edges} (or transitions).

In a learning graph, the label $s(v)$ of an L-vertex $v$ can 
be thought of as the set of oracle entries $\{(i,x_i) : i\in s(v)\}$
which are known to the algorithm if the algorithm is in the state
$v$; the graph itself gives a diagram of how the algorithm learns
the oracle entries.
We call $s(v)$ the set of \textit{loaded elements} of the L-vertex $v$
and we say an L-edge $e = \overrightarrow{uv}$ \textit{loads} elements 
$u_1, \dots, u_k$ if $s(v) - s(u) = \{u_1, \dots, u_k\}$.
Note that the graph does not depend on the input $x$,
but there is a flow for each $1$-input which does depend on the input;
such a flow specifies the (fractional) path taken by the algorithm
from the root (where it knows none of the oracle) to the sinks
(where it knows a $1$-certificate for the input).
The learning graph $\G$ is called ``non-adaptive'' because the L-edges
and their weights are independent of the input to the function.

\begin{definition} \label{def:lg_complexity}
    Let $\G = (\V, \E)$ be a non-adaptive learning graph for $f$.
    For $\F \subseteq \E$, the \textit{negative complexity} and \textit{positive complexity} of $\F$ is given by 
    \begin{equation} \label{eqn:lg_complexity}
        C_0(\F) := \sum_{e \in \F}l(e)w(e), \quad C_1(\F, y) := \sum_{e\in \F}l(e)\frac{p_y(e)^2}{w(e)}, \quad
        C_{1}(\F) := \max_{y \in f^{-1}(1)}C_1(\F, y).
    \end{equation}
    The \textit{learning graph complexity} of $\G$ is $\LG(\G) = \sqrt{C_0(\E)C_1(\E)}$.
    The \textit{learning graph complexity} $\LG(f)$ of the function $f$ is the minimum complexity of a 
        learning graph for $f$.
\end{definition}

A learning graph $\G$ can be turned into a feasible solution of
    the generalized adversary bound with objective value $\LG(\G)$ \cite{BL11}.
Therefore, every learning graph $\G$ for $f$ corresponds to a quantum query algorithm for $f$.
\begin{theorem} \label{thm:learning_graph_upper}
    For any (possibly partial) Boolean function $f$, $\Q(f) = O(\LG(f))$.
\end{theorem}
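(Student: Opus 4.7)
The plan is to exhibit a feasible solution to (a min-max formulation of) the negative-weight adversary SDP with objective value $O(\LG(\G))$, and then invoke the tight characterization $\Q(f)=\Theta(\Adv^\pm(f))$ from \cite{HLS07,Rei11,LMR+11,Kim13} already cited in the paper. Recall that one clean formulation of $\Adv^\pm(f)$ asks us to exhibit PSD matrices $\{X_j\}_{j\in[N]}$ (indexed by the oracle positions) such that for every $x\in f^{-1}(0)$ and $y\in f^{-1}(1)$ one has $\sum_{j : x_j \ne y_j} X_j[x,y]=1$, and whose objective is $\max\{\alpha_0,\alpha_1\}$ with $\alpha_b=\max_{z:f(z)=b}\sum_j X_j[z,z]$. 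The goal is to read each $X_j$ off of the learning graph and then balance $\alpha_0$ versus $\alpha_1$ by a single scaling parameter.

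First I would construct the $X_j$ as sums of rank-one contributions, one per L-edge. For each L-edge $e=\overrightarrow{uv}$ and each index $j\in s(v)\setminus s(u)$, introduce two vectors indexed by inputs: a ``positive'' vector $\xi_e^+$ supported on $1$-inputs $y$ with coordinate $p_y(e)/\sqrt{w(e)}$, and a ``negative'' vector $\xi_e^-$ supported on $0$-inputs $x$ with coordinate $\sqrt{w(e)}$ if the partial assignment $s(u)$ is consistent with $x$ but $s(v)$ is not, and $0$ otherwise. Setting $X_j \propto \sum_{e:\, j\in s(v)\setminus s(u)} (\xi_e^+ + \xi_e^-)(\xi_e^+ + \xi_e^-)^\top$ (with input-dependent sign corrections to kill the undesired $\xi_e^+\xi_e^{+\top}$ and $\xi_e^-\xi_e^{-\top}$ crossterms at the off-diagonal level) gives PSD matrices whose off-diagonal $(x,y)$-entries are exactly $\sum_e \sqrt{l(e)}\, p_y(e)/\sqrt{l(e)}$ summed over the L-edges at which the flow $p_y$ first leaves the set of L-vertices consistent with $x$. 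The crucial combinatorial fact is that, by Kirchhoff's law for the flow of value $1$ from the root to the sinks, this sum equals exactly $1$ for each $(x,y)$ pair with $f(x)=0$, $f(y)=1$, giving feasibility.

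Next, I would compute the two diagonal sums. For $x\in f^{-1}(0)$, only edges $e$ with $s(u)$ consistent but $s(v)$ inconsistent with $x$ contribute, and each contributes at most $l(e) w(e)$; summing gives $\alpha_0 \le C_0(\E)$. For $y\in f^{-1}(1)$, each edge contributes $l(e)\,p_y(e)^2/w(e)$, giving $\alpha_1 \le C_1(\E)$. Rescaling the whole construction by $t>0$ replaces these by $t\,C_0$ and $C_1/t$; choosing $t=\sqrt{C_1(\E)/C_0(\E)}$ balances them and produces the desired objective $\sqrt{C_0(\E)\cdot C_1(\E)} = \LG(\G)$. Combined with the adversary characterization, this yields $\Q(f)=O(\LG(\G))$, and taking an infimum over learning graphs gives $\Q(f)=O(\LG(f))$.

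The main obstacle is verifying the feasibility identity $\sum_{j:x_j\ne y_j} X_j[x,y]=1$ exactly (or at least as $\Theta(1)$, which suffices for the $O(\cdot)$ bound). This reduces to the flow-theoretic observation that along any support path of $p_y$ from the root to a sink, the partial assignment $s(v)$ starts consistent with $x$ (at the empty root) and ends inconsistent (since the sink contains a $1$-certificate, which must disagree with $x$ on some bit since $f(x)\ne f(y)$), so the flow must cross the ``consistency boundary'' exactly once in the appropriate accounting; handling this carefully — and ensuring PSDness survives the sign manipulations used to isolate the $\xi^+\xi^{-\top}$ cross-terms — is the technical core carried out in \cite{Bel12,BL11}.
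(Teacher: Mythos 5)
Your proposal is correct and follows the same route the paper relies on: the paper does not prove this theorem itself but simply cites \cite{BL11} for the fact that a learning graph yields a feasible dual (negative-weight) adversary solution of value $\sqrt{C_0(\E)C_1(\E)}$, which is exactly the construction you sketch (vectors indexed by L-edges, feasibility via the flow crossing the ``first disagreement'' cut exactly once, and a scaling parameter to balance $C_0$ against $C_1$). One minor simplification: no ``sign corrections'' are needed for PSDness — defining $X_j$ as the Gram matrix of the vectors $u_{z,j}$ over \emph{all} inputs $z$ is automatically PSD, and the unwanted $0$--$0$ and $1$--$1$ off-diagonal blocks never appear in any constraint, so they need not be killed; the only real technicality is indexing coordinates by pairs (edge, assignment to $s(u)$) so that the condition ``$x$ and $y$ still agree on $s(u)$'' is realized input-independently, which is the point you correctly defer to \cite{Bel12,BL11}.
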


\subsubsection*{Conventions for designing learning graphs}

% Stages of a learning graph and Speciality
Here are some conventions for designing a learning graph $\G$ for
a function $f$.
Define the $i^{\text{th}}$ level of $\G$ by the set of L-vertices at depth $i$ from the root vertex of $\G$.
A stage of $\G$ will be the set of L-edges between level $i, j$ for some $i < j$.
Usually, the stages we are going to define only has depth 1, that is, $j = i + 1$.
We design learning graph by giving L-edges in stages.
Following the convention of \cite{CLM19}, we assume the 1-complexity of a stage $\F \subseteq \E$ is always 
upper bounded by 1; this can be achieved by multiplying the weights of every $e\in \F$ by $C_1(\F)$.
\begin{definition} \label{def:symmetric_stage}
    Suppose $\F$ is a stage with starting L-vertices $V_i$ and ending L-vertices $V_j$.
    Let $c := |V_i|, e := |V_j|$.
    We say $\F$ is \emph{symmetric} if
    \begin{itemize}
        \item every $v \in V_i$ has outdegree $d$ in $\F$,
        
        \item the number $c'$ of $v\in V_i$ that receives positive flow from $p_y$ is independent of $y\in f^{-1}(1)$,
            and the value of these positive flows all equal to $1/c'$,
        
        \item for every $v\in V_i$ that receives positive flow from $p_y$, $d'$ of the $d$ out-edges of $v$ have
            positive flow of equal values, the value $d'$ is independent of $y\in f^{-1}(1)$,

        \item the number $e'$ of $w\in V_j$ that receives positive flow from $p_y$ is independent of $y\in f^{-1}(1)$,
        and the value of these positive flows all equal to $1/e'$.
    \end{itemize}
\end{definition}

\noindent Let $\displaystyle T = \frac{cd}{c'd'}$ be the \emph{speciality} of $\F$. we get the following complexity 
    for $\F$.
\begin{lemma}[\cite{LMS17,CLM19}] \label{lem:symmetric_stage}
    Let $\F$ be a symmetric stage of $\G$ with speciality $T$.
    For every $y \in f^{-1}(1)$, if $L$ is the average length of the L-edges receiving positive flow then
        the L-edges in $\F$ can be weighted so that 
        $$C_0(\F) \le T\cdot L^2 \quad \text{ and } \quad C_1(\F, y) \le 1.$$
\end{lemma}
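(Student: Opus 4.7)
My plan is to assign a single uniform weight $w$ to every L-edge in $\F$, fix $w$ by forcing $C_1(\F, y) \le 1$, and then verify that $C_0(\F) \le T \cdot L^2$ falls out automatically from the speciality $T = cd/(c'd')$.

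First I would unpack the flow structure implied by \defn{symmetric_stage}. Only $c'$ starting vertices carry nonzero flow, each with $p_y$-value $1/c'$, and each such vertex distributes its share uniformly over $d'$ outgoing edges. Hence every flow-carrying edge has $p_y(e) = 1/(c'd')$, and there are precisely $c'd'$ such edges. With the uniform weight $w$, this gives
\[
C_1(\F, y) = \sum_{e \,:\, p_y(e) > 0} l(e) \cdot \frac{(1/(c'd'))^2}{w} = \frac{L}{c'd' \cdot w},
\]
where I used $\sum_{e : p_y(e) > 0} l(e) = c'd' \cdot L$ by the definition of the average length $L$. Setting $w := L/(c'd')$ then yields $C_1(\F, y) \le 1$ immediately. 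Substituting the same weight into the negative complexity,
\[
C_0(\F) = \sum_{e \in \F} l(e) \cdot w = cd \cdot L \cdot \frac{L}{c'd'} = \frac{cd}{c'd'} L^2 = T \cdot L^2,
\]
provided I can identify $\sum_{e \in \F} l(e)$ with $cd \cdot L$.

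The main obstacle is justifying that last identification, because $L$ was introduced as an average only over the flow-bearing edges and a priori the remaining $cd - c'd'$ edges of $\F$ could have different lengths. Symmetry closes the gap: the same group action on $\F$ that forces the counts $c, d, c', d'$ to be independent of $y$ and that makes flow values uniform should also act (transitively enough) on the edge set so that every L-edge in $\F$ has a common length, namely $L$. The same argument shows that $L$ itself does not depend on $y \in f^{-1}(1)$, so the single choice $w = L/(c'd')$ simultaneously certifies $C_1(\F, y) \le 1$ for every $1$-input while meeting the stated $C_0$ bound. In the canonical Johnson-walk-style applications this uniformity is built in by construction, so no extra work is needed; in a borderline case one could instead take $L$ to be the worst-case average over $y$ and use $w$ proportional to $l(e)$ together with Cauchy–Schwarz, at the cost of a more intricate bookkeeping.
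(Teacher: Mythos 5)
First, note that the paper does not actually prove this lemma: it is imported from \cite{LMS17,CLM19} and used as a black box, so there is no in-paper proof to compare against. Judged on its own, your computation of $C_1$ is correct: there are exactly $c'd'$ flow-carrying L-edges, each with $p_y(e)=1/(c'd')$, and the uniform weight $w=L/(c'd')$ gives $C_1(\F,y)=1$. The genuine gap is exactly the one you flag and then wave away: the $C_0$ bound needs $\sum_{e\in\F}l(e)\le cd\cdot L$, i.e.\ that the average length over \emph{all} $cd$ L-edges of the stage is at most the average length $L$ over the flow-carrying ones (and, separately, that $L$ is independent of $y$, since the weights may not depend on $y$). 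Neither fact follows from \defn{symmetric_stage}, which constrains only degrees and flow values, not lengths; and in the intended applications the lengths $l(e)=|D(A\cup\{v\})\setminus D(A)|$ genuinely vary from edge to edge, so "every L-edge has a common length $L$" is false there. Appealing to an unstated group action is not a proof from the stated hypotheses.

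The standard fix is your parenthetical fallback, but its conclusion lands in a different place than you suggest. Take $w(e)=l(e)/(c'd')$ (length-proportional weights). Then $C_1(\F,y)=\sum_{e:\,p_y(e)>0}\frac{1}{c'd'}=1$ exactly, with no assumption on the lengths, and
\[
C_0(\F)=\frac{1}{c'd'}\sum_{e\in\F}l(e)^2=T\cdot\Exp_{e\in\F}\bigl[l(e)^2\bigr].
\]
Cauchy--Schwarz gives $\Exp[l^2]\ge(\Exp[l])^2$, i.e.\ it points the \emph{wrong} way for deducing $C_0\le T\cdot L^2$ with $L$ a first moment; the clean statement one actually proves is $C_0\le T\cdot\Exp_{e}[l(e)^2]$, a second moment over all edges of the stage. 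This is consistent with how the lemma is consumed downstream: the hypotheses of \lem{learning_graph_johnson_walk} and \lem{learning_graph_nested_johnson} bound $\Exp\bigl[|D(\cdot)|^2\bigr]$, not $\bigl(\Exp[|D(\cdot)|]\bigr)^2$. So either one reads the lemma's "$L^2$" as that mean-square length, or one adds the hypothesis that all L-edges of the stage have equal length --- but as written, your uniform-weight argument does not close, and the Cauchy--Schwarz route proves a (correct and sufficient) bound that is formally weaker than $T\cdot L^2$.
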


% Adaptive Learning Graph

\subsubsection*{Adaptive learning graphs}

In an \emph{adaptive} learning graph, the weight of an L-edge may depend on queried entries 
    of the input $z$ to $f$.
\begin{definition}[\cite{CLM19}] \label{def:adaptive_learning_graph}
    Let $f$ be a (possibly partial) Boolean function with domain $\cD \subseteq \{0, 1\}^N$. 
    A directed acyclic graph $\G = (\V, \E)$ is an \emph{adaptive learning graph} for $f$ if it satisfies 
        all properties (1) to (6) in \defn{learning_graph}, except we replace property (5) with
    \begin{itemize}
        \item[5'.] For every $z \in \cD$ and directed edge $e = \overrightarrow{uv} \in \E$, there is a 
            positive weight value $w_{z_{s(v)}}(e) \in \mathbb{R}^+$, whose value depends only on $e$ and
            the loaded $s(v)$-entries of the input $z$.
    \end{itemize}
\end{definition}

Since $v$ is clear given the directed edge $e$, we abbreviate $w_{z_{s(v)}}(e)$ by $w_z(e)$.
The corresponding complexity of an adaptive learning graph is given as follows.
\begin{definition} \label{def:adaptive_lg_complexity}
    Let $\G$ be an adaptive learning graph for $f$.
    If $\F \subseteq \E$ is a stage of $\G$, for $x, y\in \cD$, we define the negative and positive complexity of 
        $\F$ respectively as
    \begin{align*}
        C_0(\F, x) := \sum_{e\in \F}l(e)w_x(e), & \quad C_0(\F) := \max_{x\in f^{-1}(0)} C_0(\F, x)\\
        C_1(\F, y) := \sum_{e\in \F}l(e)\frac{p_y(e)^2}{w_y(e)}, & \quad C_1(\F) := \max_{y\in f^{-1}(1)} C_1(\F, y)
    \end{align*}
    The \textit{adaptive learning graph complexity} of $\G$ is $\LG^{adp}(\G) := \sqrt{C_0(\E)C_1(\E)}$.
    The \textit{adaptive learning graph complexity} $\LG^{adp}(f)$ of $f$ is the minimum complexity of an 
        adaptive learning graph for $f$.
\end{definition}

Observe that \defn{learning_graph} is a special case of definition 
    \defn{adaptive_learning_graph}, so $\LG^{adp}(f) \le \LG(f)$.
There is also a dual adversary reduction for adaptive learning graphs \cite{CLM19}, and we get a similar 
    upper bound result.
\begin{theorem} \label{thm:adaptive_learning_graph_upper}
    For any (possibly partial) Boolean function $f$, $\Q(f) = O(\LG^{adp}(f))$.
\end{theorem}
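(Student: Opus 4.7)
The plan is to mirror the proof of \thm{learning_graph_upper} by converting any adaptive learning graph $\G$ for $f$ into a feasible solution of the negative-weight adversary SDP whose objective value is $O(\LG^{adp}(\G))$. Combining this with Reichardt's characterization $\Q(f) = \Theta(\Adv^{\pm}(f))$ then yields the theorem.

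Fix an adaptive learning graph $\G = (\V, \E)$ for $f$ and, by uniformly rescaling every edge weight by a constant (which preserves the complexity $\sqrt{C_0(\E) C_1(\E)}$), assume $C_0(\E) = C_1(\E) = \LG^{adp}(\G) =: \Lambda$. Following the Belovs-style construction used in the non-adaptive case of \cite{BL11}, for each input $z \in \cD$, each position $j \in [N]$, and each L-edge $e = \overrightarrow{uv}$ with $j \in s(v) \setminus s(u)$, I would define vector components whose magnitudes are proportional to $p_y(e)/\sqrt{w_y(e)}$ when $z = y \in f^{-1}(1)$ and to $\sqrt{w_x(e)}$ when $z = x \in f^{-1}(0)$. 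Assembling these into adversary vectors $\ket{v_{z,j}}$ gives $\sum_j \|\ket{v_{y,j}}\|^2 = C_1(\E,y) \le \Lambda$ and $\sum_j \|\ket{v_{x,j}}\|^2 = C_0(\E,x) \le \Lambda$, matching the desired SDP objective of $\Lambda = \LG^{adp}(\G)$.

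To verify the feasibility constraint $\sum_{j : x_j \ne y_j} \langle v_{x,j} \mid v_{y,j}\rangle = 1$ for $f(x) = 0 \ne 1 = f(y)$, I would argue by flow conservation. The induced sub-DAG of $\G$ consisting of L-vertices $v$ with $x_{s(v)} = y_{s(v)}$ contains the root but no sink of the flow $p_y$ (else some such $s(v)$ would hold a $1$-certificate consistent with $x$, forcing $f(x) = 1$), so $p_y$ must leave this sub-DAG with total flow exactly $1$. Every boundary-crossing L-edge $e = \overrightarrow{uv}$ has $x_{s(u)} = y_{s(u)}$ while some newly loaded index $j \in s(v) \setminus s(u)$ satisfies $x_j \ne y_j$, and contributes $p_y(e)$ to the inner-product sum, provided the weights on that edge combine cleanly across the two sides.

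The main obstacle, which does not arise in the non-adaptive proof of \thm{learning_graph_upper}, is that $w_x(e)$ and $w_y(e)$ can differ on a boundary edge (since $w_z(e)$ is allowed to depend on $z_{s(v)}$ and $x_{s(v)} \ne y_{s(v)}$ there). The naive product $\sqrt{w_x(e)}\cdot p_y(e)/\sqrt{w_y(e)}$ therefore leaves a residual $\sqrt{w_x(e)/w_y(e)}$. Following the approach of \cite{CLM19}, I would resolve this by inserting a symmetrized weight $\sqrt{w_x(e)\,w_y(e)}$ into the cross-term pairing (so both sides of the boundary use the same geometric mean) while retaining $w_z(e)$ inside the self-norm $\|\ket{v_{z,j}}\|^2$; because the self-norms alone determine the objective, this symmetrization preserves the bound $\Lambda$, while the cross term telescopes to the boundary flow $1$, yielding feasibility. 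Applying $\Q(f) = \Theta(\Adv^{\pm}(f))$ then gives $\Q(f) = O(\Lambda) = O(\LG^{adp}(\G))$, and minimizing over $\G$ produces $\Q(f) = O(\LG^{adp}(f))$.
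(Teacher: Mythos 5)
The paper offers no proof of this theorem at all: it is imported wholesale from \cite{CLM19} (``there is also a dual adversary reduction for adaptive learning graphs''), so the relevant comparison is with that reference's construction. Your overall plan --- rebalance the weights so $C_0(\E)=C_1(\E)=\LG^{adp}(\G)$, build a feasible dual adversary solution with objective $O(\LG^{adp}(\G))$, verify the constraint $\sum_{j:x_j\ne y_j}\braket{v_{x,j}}{v_{y,j}}=1$ by flow conservation across the boundary of the sub-DAG where $x_{s(v)}=y_{s(v)}$, and finish with $\Q(f)=\Theta(\Adv^{\pm}(f))$ --- is exactly the route taken there, and the flow-conservation part of your argument is correct.

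However, your resolution of the one genuinely new difficulty (input-dependent weights on boundary edges) has a gap. You propose to ``insert a symmetrized weight $\sqrt{w_x(e)\,w_y(e)}$ into the cross-term pairing while retaining $w_z(e)$ inside the self-norm.'' This is not a construction of vectors: the quantities $\norm{\ket{v_{z,j}}}^2$ and $\braket{v_{x,j}}{v_{y,j}}$ are both determined by the same family of vectors, and one cannot prescribe the diagonal and off-diagonal entries of the Gram matrix independently and expect the result to remain positive semidefinite, i.e.\ realizable. The actual mechanism in \cite{CLM19} is structural rather than a post-hoc correction: the coordinates of $\ket{v_{z,j}}$ are indexed by pairs consisting of an L-edge $e=\overrightarrow{uv}$ and a partial assignment $\alpha$ of the entries loaded \emph{before} the transition, the component at $(e,\alpha)$ is nonzero only when $z$ is consistent with $\alpha$, and the weight used there is $w_\alpha(e)$. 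On a boundary edge $x$ and $y$ agree on those previously loaded entries, so they select the same $\alpha$ and hence literally the same weight; the residual $\sqrt{w_x(e)/w_y(e)}$ never arises and no symmetrization is needed. This also pinpoints why the adaptivity must be restricted to already-queried entries for the reduction to go through --- note that \defn{adaptive_learning_graph} as written lets $w_z(e)$ depend on $z_{s(v)}$, including the newly loaded (and possibly disagreeing) bits, under which convention your worry is real and the cross terms genuinely fail to telescope; the reduction requires the \cite{CLM19} convention that the weight depends only on $z_{s(u)}$.
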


An example of this framework is a learning graph version of quantum walks on Johnson graph \cite{CLM19}. The stages in this learning graph are symmetric.
\begin{lemma}[Learning graph for Johnson walk, \cite{CLM19}]
\label{lem:learning_graph_johnson_walk}
    Let $\ell \le k = o(n)$.
    For each $A \in \binom{[n]}{k}$, %$ \S_k([n])$,
    let $f_A\colon \{0, 1\}^N \to \{0, 1\}$ be a Boolean function.
    Define $f = \bigvee_{A \in \S_k([n])}f_A$. This is a function on $N$ bits.

    Let the data structure $D$ be a monotone mapping
    (preserving inclusion under subsets)
    from $\P([n])$ to $\P([N])$ 
        such that for every 1-input $x$ of $f$, there is some
        $I_x \in\binom{[n]}{\ell}$
        such that $D(I_x)$ is a 1-certificate of $x$ with respect to $f$.
    For $\lambda$ a partial assignment on $N$ bits, %$D(A)$,
    let $f_{A, \lambda}$ be the Boolean function
    which outputs $1$ on $z\in\B^N$ if both $f_A(z)=1$
    and $z_{D(A)}=\lambda$.
    We have $f_A = \bigvee_{\lambda}f_{A, \lambda}$ where $\lambda$ ranges over all partial assignments on $N$ bits. %$D(A)$.
    Suppose $\G_{A, \lambda}$ is a learning graph for $f_{A, \lambda}$.

    Let $\bS, \bU, \bC > 0$ be values such that for every $x \in f^{-1}(0)$, we have 
    \begin{align}
        \Exp_{A \in \binom{[n]}{k-\ell}}|D(A)|^2 & \le \bS^2,\\
        \Exp_{\substack{A \in \binom{[n]}{i}\\ v \in [n] \setminus A}}|D(A \cup \{v\}) \setminus D(A)|^2
            & \le \bU^2, \text{ for } k - \ell \le i < k\\
        \Exp_{A\in \binom{[n]}{k}}\squareb{C_0(\G_{A, x_{D(A)}}, x)\cdot C_1(\G_{A, x_{D(A)}})} & \le \bC^2.
    \end{align}
    Then there is an adaptive learning graph $\G$ for $f$ such that for every $x \in f^{-1}(0), y \in f^{-1}(1)$,
    \begin{equation*}
        C_0(\G, x) = O\squareb{\bS^2 + \roundb{\frac{n}{k}}^\ell\roundb{k\cdot \bU^2 + \bC^2}} \quad \text{ and } \quad
        C_1(\G, y) \le 1.
    \end{equation*}
\end{lemma}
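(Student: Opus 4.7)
The plan is to construct $\G$ so that it mirrors the setup--update--checking pattern of a Johnson-graph quantum walk on $J(n,k)$ and apply \lem{symmetric_stage} stage by stage. First I would build three types of stages: (i) a \emph{setup stage} with one L-edge from the root to each L-vertex labeled $D(A_0)$ for $A_0 \in \binom{[n]}{k-\ell}$, of length $|D(A_0)|$; (ii) $\ell$ \emph{update stages} where stage $i$ goes from L-vertices labeled $D(A)$ with $|A|=k-\ell+i-1$ to L-vertices labeled $D(A\cup\{v\})$ for $v \in [n] \setminus A$, with length $|D(A\cup\{v\}) \setminus D(A)|$; and (iii) a \emph{checking stage} that attaches at each $A \in \binom{[n]}{k}$ a copy of the sub-learning-graph $\G_{A, z_{D(A)}}$, using the adaptivity of \defn{adaptive_learning_graph} so that its weights may depend on $z_{D(A)}$.

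Next I would specify the flow $p_y$ of a 1-input $y$ symmetrically: uniformly from the root to all $A_0 \in \binom{[n] \setminus I_y}{k-\ell}$; at stage $i$, from each L-vertex labeled $A_0 \cup I$ with $I \subsetneq I_y$, $|I|=i-1$, uniformly over the $\ell-i+1$ edges that add an element of $I_y \setminus I$; and finally into $\G_{A, z_{D(A)}}$ at each $A \supseteq I_y$. By construction each stage is symmetric in the sense of \defn{symmetric_stage}, so I can plug into \lem{symmetric_stage}.

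Then I would compute the speciality and average squared length of each stage. The setup stage has speciality $\binom{n}{k-\ell}/\binom{n-\ell}{k-\ell} = O(1)$ under $k = o(n)$, and average squared length $\le \bS^2$ by the first hypothesis (using $(\Exp|D(A_0)|)^2 \le \Exp|D(A_0)|^2$), giving $C_0 = O(\bS^2)$. Using the identity $\binom{n}{m}(n-m) = \binom{n}{m+1}(m+1)$, a short computation shows that update stage $i$ has speciality $T_i = \Theta\bigl((n/k)^i \cdot k/(\ell-i+1)\bigr)$ up to constants depending only on $\ell$; combined with $\Exp L^2 \le \bU^2$ from the second hypothesis and summing over $i$ (the sum is dominated by the $i=\ell$ term), the update stages contribute $O((n/k)^\ell \cdot k \cdot \bU^2)$. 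The checking stage contributes $O((n/k)^\ell \cdot \bC^2)$ by the third hypothesis. Adding the three contributions yields the desired bound on $C_0(\G, x)$, while rebalancing the weights between stages so that each stage contributes equal positive complexity keeps $C_1(\G, y) \le 1$.

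The most delicate part is the checking stage: we must glue the sub-graphs $\G_{A, \lambda}$ into $\G$ without breaking the simultaneous bound on $C_0$ and $C_1$. The standard rebalancing trick scales the weights inside each $\G_{A, \lambda}$ by $\sqrt{C_1(\G_{A,\lambda})/C_0(\G_{A,\lambda},x)}$, equalizing its internal complexities at the geometric mean $\sqrt{C_0 \cdot C_1}$; averaging its square over $A \in \binom{[n]}{k}$ is precisely the third hypothesis and yields the $\bC^2$ term after multiplying by the speciality $(n/k)^\ell$ of the induced checking stage. Crucially, the rescaling factor depends only on $z_{D(A)}$, which is the information already loaded upon entering the sub-graph, so the resulting weights satisfy the adaptivity constraint of \defn{adaptive_learning_graph}.
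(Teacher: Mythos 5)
Your overall architecture (one setup stage, $\ell$ update stages, a checking stage; specialities $O(1)$, $\Theta(k(n/k)^i)$, and $\Theta((n/k)^\ell)$ respectively; per-stage normalization via \lem{symmetric_stage}) is exactly the structure used in the literature and in this paper's proof of the nested generalization, \lem{learning_graph_nested_johnson} (the paper itself only cites this lemma from \cite{CLM19} rather than proving it). The setup and update computations check out.

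However, there is a genuine flaw in the step you yourself single out as the most delicate one. You rescale the weights inside $\G_{A,\lambda}$ by $\sqrt{C_1(\G_{A,\lambda})/C_0(\G_{A,\lambda},x)}$ and assert that this factor "depends only on $z_{D(A)}$." That is false: the subgraphs $\G_{A,\lambda}$ are adaptive learning graphs (the hypothesis $C_0(\G_{A,x_{D(A)}},x)$ carries an input argument precisely because their edge weights depend on entries loaded \emph{inside} the subgraph), so $C_0(\G_{A,\lambda},x)=\sum_e l(e)w_x(e)$ is a function of coordinates of $x$ outside $D(A)$. Multiplying every edge of the subgraph by this factor therefore produces weights $w(e)$ that depend on entries not in $s(v)$, violating property 5' of \defn{adaptive_learning_graph}; the resulting object is not a valid adaptive learning graph. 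The repair is standard and is what the paper does in the nested case: rescale by $\lambda_A = C_1(\G_{A,\lambda})/M$ where $M$ is the number of flow-carrying sets $A$ — this depends only on $\lambda=z_{D(A)}$. One then gets $C_1(\text{checking})=\sum_{A\ni I_y} M^{-2}\,C_1(\G_{A,\lambda})/\lambda_A = O(1)$ directly, and $C_0(\text{checking})=\sum_A \lambda_A\, C_0(\G_{A,\lambda},x)=\tfrac{\binom{n}{k}}{M}\Exp_A[C_0C_1]=O((n/k)^\ell\bC^2)$, so the third hypothesis is used exactly as stated and no geometric-mean equalization is needed. With that substitution your argument goes through.
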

Taking a square root of the 0-complexity of $\G$ gives the same complexity bound of the original quantum walk. In other words, this lemma is saying
that if the ``checking'' part of a quantum walk on a Johnson graph
can be implemented by learning graphs $\G_{A,\lambda}$,
a quantum walk on a Johnson graph which computes $f$ using the data
structure $D$ can also be implemented by an adaptive
learning graph with the same cost.

In the rest of this paper, we will use ``learning graph'' to refer
to an adaptive learning graph.

%%%%%%%%%%%%%%%%%%%%%%% Section 3 Reductions for simplex finding %%%%%%%%%%%%%%%%%%%%%%%

\section{Reductions for simplex finding}

We study the problem of simplex finding in a hypergraph;
this is a generalization of triangle finding in a graph.
We start by reviewing some trivial upper and lower bounds
for the quantum query complexity of simplex finding.
Then we give a more interesting reduction between simplex
finding for hypergraphs of different rank.

\subsection{Basic properties of simplex finding}

We define the simplex-finding problem
$\SF_{n,r}\colon \B^{\binom{n}{r}}\to\B$ as follows.
The input string is interpreted as a function $x\colon\binom{[n]}{r}\to\B$,
where $x(S)=1$ means that $S\subseteq\binom{[n]}{r}$ is a present
hyperedge in the $r$-uniform hypergraph defined by $x$.
The function $\SF_{n,r}(x)$ evaluates to $1$ if and only if
there exists a simplex
in this hypergraph; that is, if and only if there exists a set of vertices
$V\in\binom{[n]}{r+1}$ such that $x(V\setminus\{v\})=1$
for each $v\in V$.

We note that $\SF_{n,2}$ is triangle finding and $\SF_{n,3}$ is tetrahedron
finding. We also note that $\SF_{n,1}$
asks if the Hamming weight of an input string in $\B^n$ is at least $2$;
hence $\SF_{n,1}$ can be thought of as a variant of Grover search.
$\SF_{n,0}$ is the identity function from $\B$ to $\B$.

The following easy query complexity bounds hold for simplex finding.

\begin{lemma}\label{lem:SFbounds}
For any constant rank $r$, we have $\displaystyle 
    \Q(\SF_{n,r})=O(n^{(r+1)/2}) \text{ and } \Q(\SF_{n,r})=\Omega(n^{r/2})$.
\end{lemma}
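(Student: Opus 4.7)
For the upper bound $\Q(\SF_{n,r}) = O(n^{(r+1)/2})$, I would simply invoke Grover search over the $\binom{n}{r+1} = \Theta(n^{r+1})$ potential simplex locations. Each candidate $V\in\binom{[n]}{r+1}$ can be verified to be a simplex using the $r+1 = O(1)$ queries that check whether each of its $r+1$ size-$r$ faces is a hyperedge. Grover search with this $O(1)$-query checker yields a total of $O(\sqrt{n^{r+1}}) = O(n^{(r+1)/2})$ queries, since $r$ is constant.

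For the lower bound $\Q(\SF_{n,r}) = \Omega(n^{r/2})$, the plan is to reduce the $\OR$ function on $N = \Theta(n^r)$ bits to $\SF_{n,r}$ and then invoke the $\Omega(\sqrt{N})$ quantum lower bound for unordered search \cite{BBBV97}. Partition $[n]$ into $r+1$ roughly equal parts $V_1,\dots,V_{r+1}$ and index the $N$ bits of the OR input by tuples $(v_2,\dots,v_{r+1})\in V_2\times\cdots\times V_{r+1}$, so that $N = \lfloor n/(r+1)\rfloor^r = \Theta(n^r)$. Given an OR input $y\in\B^N$, construct a hypergraph $H(y)$ whose hyperedges are of two types: every ``transversal'' $r$-hyperedge that takes one vertex from each of $r$ distinct parts \emph{and} contains at least one vertex of $V_1$ is always included; and each $r$-hyperedge $\{v_2,\dots,v_{r+1}\}$ obtained by skipping $V_1$ is included iff the corresponding bit of $y$ equals $1$. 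All other $r$-subsets of $[n]$ are non-edges, so each query to the $\SF_{n,r}$ oracle on $H(y)$ is answerable by at most one query to $y$.

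The central correctness claim $\SF_{n,r}(H(y)) = \OR(y)$ splits into two cases. First, any simplex in $H(y)$ must be transversal, containing exactly one vertex from each $V_i$: otherwise two vertices $u,w$ of the simplex lie in some common $V_i$, yet neither family of hyperedges in $H(y)$ contains two vertices from the same part, so the faces of the simplex containing both $u$ and $w$ would be missing. Second, for any transversal candidate $\{v_1,\dots,v_{r+1}\}$, the $r$ faces containing $v_1$ are guaranteed by the first family, while the single $V_1$-skipping face $\{v_2,\dots,v_{r+1}\}$ is present precisely when the indexing bit of $y$ is $1$. Hence $H(y)$ contains a simplex iff some OR bit is set, and BBBV gives $\Q(\SF_{n,r}) = \Omega(\sqrt{N}) = \Omega(n^{r/2})$.

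The main (mild) obstacle is checking the non-transversal case of the correctness argument cleanly: one must ensure that the $V_1$-skipping hyperedges (the only ones varying with $y$) do not combine with the planted $V_1$-touching hyperedges to spawn an unintended simplex outside the OR mechanism. Once this case-analysis is dispensed with, both bounds follow immediately.
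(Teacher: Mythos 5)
Your proposal is correct and follows essentially the same route as the paper: Grover search over the $\binom{n}{r+1}$ candidate vertex sets with an $O(1)$-query check for the upper bound, and for the lower bound a reduction from OR on $\Theta(n^r)$ bits (via planted hyperedges plus the \cite{BBBV97} bound) to a restriction of $\SF_{n,r}$. The only difference is the gadget: the paper fixes a single apex vertex $v_0$ and hard-wires \emph{all} hyperedges containing it, so the free OR bits are all $\binom{n-1}{r}$ subsets avoiding $v_0$ and no partite structure is needed, whereas you use an $(r+1)$-partition with a distinguished part $V_1$; your transversality argument does correctly rule out unintended simplices, so the case analysis you flagged as the main obstacle goes through, though the paper's single-vertex version sidesteps it (a simplex avoiding $v_0$ still exhibits a free hyperedge, which is all the OR equivalence requires).
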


\begin{proof}
Let $G = (V, E)$ be an $r$-uniform hypergraph.
Given a set of vertices
$e=\{v_1, \dots, v_r\}$, we use $e_{\hat{i}}$ to denote the subset
$\{v_1,v_2, \dots, v_{i-1},v_{i+1},\dots, v_r\} \in \binom{V}{r-1}$.
We write $e_{S}$ for $S \subseteq [r]$ to denote the subset 
    $\{v_i : i \in S\}\in \binom{V}{|S|}$.
For vertex $u\in V$, we use $ue$ to abbreviate the subset $\{u\}\cup e$.

Since a 1-certificate of $\SF_{n,r}$ is given by finding an 
$(r + 1)$-sized subset of vertices and checking all $r + 1 = O(1)$ 
possible $r$-edges formed by these vertices, we can detect an $r$-simplex 
in $G$ by Grover searching over sets of $r+1$ vertices,
and for each one checking all $r+1$ hyperedges formed by removing
a single vertex from this set. Implementing the inner search
with Grover search as well, this can be done using
$O\roundb{\sqrt{\binom{n}{r + 1}(r+1)}} = O\roundb{n^{(r+1)/2}}$
quantum queries.

For the lower bound, we suppose $V = \{v_0, v_1, \dots, v_{n-1}\}$.
Impose the following promise on the input:
for each subset of indices $\displaystyle S = \curlyb{i_1, \dots, i_{r-1}}
    \in \binom{[n-1]}{r-1}$, we are promised that 
    $\{v_0,v_{i_1},\dots ,v_{i_{r-1}}\} \in E$.
Under this promise, to find an $r$-simplex in $G$, it is necessary 
    and sufficient to find an $r$-edge
    among the vertices $\{v_1, v_2, \dots, v_{n-1}\}$.
This is equivalent to unordered search for a $1$ in the function $x$,
    restricted to the inputs $\binom{\{v_1, \dots, v_{n-1}\}}{r}$ of the function.
This search requires $\Omega(\sqrt{\binom{n-1}{r}}) = \Omega((n/r)^{r/2})$ 
    queries due to lower bound on unordered search \cite{BBBV97}.
Since adding a promise to the $r$-hypergraphs can only reduce query complexity, we obtain 
    $\displaystyle Q(\SF_{n,r}) = \Omega\roundb{(n/r)^{r/2}}.$
\end{proof}
The main objective of studying simplex finding problems is to find the exponent 
    $\displaystyle \frac{r}{2} \le a_r \le \frac{r + 1}{2}$ for which
    $Q(\SF_{n,r}) \in O(n^a\cdot g(n))\cap \Omega(n^a / g(n))$ 
    for some subpolynomial factor $g(n)$,
    or at least reduce the range we have on this exponent $a_r$.

\subsection{From high rank to low rank}

To this date, the trivial $\Omega(n^{r/2})$ query complexity in 
    \lem{SFbounds} is still the best known lower bound
    for simplex finding in every rank $r$.
However, we are able to uncover interesting relationships 
    connecting the query complexity of simplex finding of different ranks.
Intuitively, a tetrahedron should have more structural information than 
    a triangle, and therefore should be more difficult to find;
    this might suggest that a nontrivial lower bound for triangle finding
    should give rise to a nontrivial lower bound for tetrahedron finding.
However, this is not immediately the case, because what counts as a ``trivial''
    lower bound for tetrahedron finding is a larger query complexity than what
    counts as a trivial lower bound for triangle finding!
    
We show a stronger reduction: the ability to solve tetrahedron finding
    can be leveraged to solve not just triangle finding, but a search
    over multiple instances of triangle finding.
\begin{theorem} \label{thm:reductionlb}
    For any rank $r \ge 2$, we have $\, \Q\roundb{\SF_{2n, r+1}}\, 
        = \, \Omega\roundb{\sqrt{n} \cdot \Q(\SF_{n, r})}$.
\end{theorem}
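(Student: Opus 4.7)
The plan is to exhibit a randomized query reduction showing $\Q(\SF_{2n,r+1}) = \Omega(\Q(\OR_n\circ\SF_{n,r}))$, and then apply the composition theorem (\thm{composition_bound}) together with $\Q(\OR_n)=\Theta(\sqrt n)$ to obtain $\Q(\OR_n\circ\SF_{n,r}) = \Theta(\sqrt n \cdot \Q(\SF_{n,r}))$, which yields the desired bound.

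For the reduction, given $n$ rank-$r$ hypergraphs $G_1,\dots,G_n$ on a shared vertex set $B$ with $|B|=n$, I introduce a disjoint label set $A$ of size $n$ and build a rank-$(r+1)$ hypergraph $G'$ on $V(G'):=A\cup B$ with $2n$ vertices. Its hyperedges come in two flavors: (a) for every $v\in A$ and every hyperedge $e$ of $G_v$, include $\{v\}\cup e$; (b) sample a uniformly random partition $B = B_0\sqcup\cdots\sqcup B_r$ into $r+1$ parts, and include every transversal set that takes exactly one vertex from each $B_i$. Each query to the oracle of $G'$ can be simulated by at most one query to the $\OR_n\circ\SF_{n,r}$ input together with the shared random partition, since sets containing two or more $A$-vertices are never hyperedges, sets of the form $\{v\}\cup e$ with $v\in A$ reduce to one query into $G_v$, and sets inside $B$ can be checked against the partition with no oracle access.

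I next verify correctness in both directions. \emph{Completeness:} if some $G_v$ contains an $r$-simplex on vertices $\{u_0,\dots,u_r\}\subseteq B$, then with probability $(r+1)!/(r+1)^{r+1}=\Omega(1)$ the random partition assigns the $u_i$ to $r+1$ distinct parts; in this event the transversal hyperedge $\{u_0,\dots,u_r\}$ together with the $r+1$ type-(a) hyperedges $\{v\}\cup(\{u_0,\dots,u_r\}\setminus\{u_i\})$ witness an $(r+1)$-simplex on $\{v,u_0,\dots,u_r\}$. \emph{Soundness:} suppose $G'$ contains an $(r+1)$-simplex $S$, so $|S|=r+2$. If $S\subseteq B$, then by pigeonhole over $r+1$ parts two vertices $x,y\in S$ share a part; any $(r+1)$-face of $S$ containing both $x$ and $y$ (such faces exist since $|S|-2=r\ge 1$) is then not transversal and hence not in $G'$, contradicting simplex-ness. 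If $|S\cap A|\ge 2$, any face containing two $A$-vertices is neither type (a) (which has exactly one $A$-vertex) nor type (b) (which has zero), again a contradiction. Hence $|S\cap A|=1$, and if $S\cap A=\{v\}$ the $r+1$ faces of $S$ through $v$ directly exhibit an $r$-simplex of $G_v$.

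Since soundness introduces no false positives and completeness succeeds with constant probability, repeating the construction with $O(1)$ independent random partitions and taking the OR of the outcomes boosts success above $2/3$. A $T$-query bounded-error quantum algorithm for $\SF_{2n,r+1}$ therefore yields an $O(T)$-query bounded-error algorithm for $\OR_n\circ\SF_{n,r}$, giving $\Q(\SF_{2n,r+1})=\Omega(\sqrt n\cdot\Q(\SF_{n,r}))$ as claimed. The main obstacle is the soundness analysis, in particular the pigeonhole argument that kills simplices lying entirely inside $B$: the $(r+1)$-partite design of the type-(b) hyperedges is tailored exactly so that no $(r+2)$-subset of $B$ can have all of its $(r+1)$-faces transversal, while the $A$-side exclusion follows immediately from the fact that every hyperedge of $G'$ contains at most one vertex of $A$.
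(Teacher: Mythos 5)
Your proposal is correct and follows essentially the same route as the paper: the identical rank-increasing construction (one label vertex per instance plus a random complete $(r+1)$-partite hypergraph on $B$), the same query simulation, and the same appeal to \thm{composition_bound} with $\Q(\OR_n)=\Theta(\sqrt n)$. Your soundness case analysis (ruling out simplices with $|S\cap A|\ge 2$ or $S\subseteq B$ via pigeonhole) is in fact spelled out more explicitly than in the paper's formal proof, which packages those exclusions into the promise problem $\SF_{A,B,r+1}$.
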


Since the growth rate of $\Q(\SF_{n,r})$ is polynomial in $n$,
this theorem implies \thm{reduction}.

Since the best known quantum query upper bound for triangle
    finding is $O(n^{1.25})$, this result also provides an approach to 
    improve triangle finding algorithm by finding query-efficient 
    algorithm for finding higher-rank simplex in hypergraphs.
In particular, the following corollary is a direct consequence of
    \thm{reduction}.
\begin{corollary} \label{cor:ubreduction}
    If there is a quantum algorithm solving \emph{Tetrahedron} with $o(n^{1.75})$ queries, 
        then there is a quantum algorithm solving \emph{Triangle} with $o(n^{1.25})$ queries.
\end{corollary}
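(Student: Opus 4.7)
The corollary is a direct instantiation of \thm{reduction} at rank $r=2$, so the plan is essentially to chase the asymptotic notation through the bound stated there. The plan is to first recall that $\SF_{n,2}$ is exactly triangle finding and $\SF_{n,3}$ is exactly tetrahedron finding, since the excerpt explicitly identifies these problems after the definition of $\SF_{n,r}$. Then I would apply \thm{reduction} with $r=2$ to obtain the inequality
\[
\Q(\SF_{n,2}) \;=\; O\!\left(\frac{\Q(\SF_{n,3})}{\sqrt{n}}\right).
\]

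Next, I would assume the hypothesis of the corollary: there exists a quantum algorithm for tetrahedron finding using $o(n^{1.75})$ queries, i.e.\ $\Q(\SF_{n,3}) = o(n^{1.75})$. Substituting into the bound above gives
\[
\Q(\SF_{n,2}) \;=\; O\!\left(\frac{o(n^{1.75})}{\sqrt{n}}\right) \;=\; o(n^{1.75 - 1/2}) \;=\; o(n^{1.25}),
\]
which is exactly the conclusion of the corollary. The only subtlety worth flagging is that absorbing a little-$o$ factor inside a big-$O$ preserves the little-$o$ asymptotic, because the implicit constant of \thm{reduction} is an absolute constant independent of $n$; this is routine but it is the one place where one must be a bit careful when combining the two asymptotics.

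There is no real obstacle here: the entire content of the corollary is that \thm{reduction} already did the work, and the corollary simply packages the $r=2$ special case in a form that relates to the current best known upper bound $O(n^{1.25})$ for triangle finding. Consequently, any genuine asymptotic improvement over the trivial $O(n^{r/2+1/2})$ tetrahedron bound at $r=3$ would automatically break the long-standing $O(n^{1.25})$ triangle barrier, which is the motivational punchline the corollary is meant to convey.
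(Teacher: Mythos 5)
Your proposal is correct and matches the paper's treatment: the corollary is stated there as a direct consequence of \thm{reduction} instantiated at $r=2$, with exactly the substitution and division by $\sqrt{n}$ that you carry out. No further comment is needed.
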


It remains to prove \thm{reductionlb}, which we will 
    do with a randomized reduction.

\begin{proof}[Proof of \thm{reductionlb}]
        Consider two disjoint sets of vertices $A, B$ where 
        $|A| = |B| = n$.
    For every vertex $v \in A$, assume there is an associated 
        $r$-uniform hypergraph $G_v$ on vertex set $B$.
    Let $E_v$ be the set of $r$-edges of $G_v$ and suppose that 
        $E_v$ can be accessed with an oracle query to the pair 
        $(v, e) \in A \times \binom{B}{r}$.
    Then the problem of finding an $r$-simplex in any of the $G_v$ 
        is equivalent to the Boolean function 
        $\OR_n \circ \SF_{n, r}^n$.
    By \thm{composition_bound}, the quantum query complexity of this 
        problem is 
    \[\Q(\OR_n \circ \SF_{n, r}^n) = \Theta\roundb{\Q(OR_n)
        \Q(\SF_{n, r})} = \Theta\roundb{\sqrt{n} \cdot 
        \Q(\SF_{n, r})}.\]

    Let $\SF_{A, B, r}$ denote the $r$-simplex finding problem on 
        $r$-hypergraph $G'$ with the promise that $G'$ 
        has vertex set $A \cup B$, no $r$-edge in $G'$ has more 
        than 1 vertex in $A$, and $G'_B$ is a complete $r$-partite
        hypergraph with $r$-partition $B_1, \dots, B_r$ of equal size.
    We call the $r$-edges with exactly one vertex in $A$ type 1
        hyperedges and the $r$-edges in $G'_B$ the type 2 hyperedges.
    Note that type 1 and type 2 hyperedges are disjoint.
    Furthermore, the $r$-partition $B_1, \dots, B_r$ is known
    and therefore deciding type 2 hyperedges doesn't 
        cost any queries.

    Given an instance of the $\OR_n \circ \SF_{n, r}^n$ problem 
        described above, we will ``increase the rank'' and construct 
        an $(r+1)$-uniform hypergraph $G$ with randomization.
    Let the vertex set of $G$ be $V = A\cup B$ and define the 
    hyperedges in $G$ according to the two types $E = E_1 \cup E_2$.
    Let $E_1 := \{e\cup \{v\} : v\in A, e \in E_v\}$ be the set of  
        $(r+1)$-edges of $G$ constructed from $r$-edges in $G_v$.
    To construct $E_2$, we will uniformly randomly pick an 
        $(r+1)$-partition $B_1, B_2, \dots, B_{r+1}$ of $B$ 
        such that $|B_1| = |B_2| = \dots |B_{r+1}| = \frac{n}{r + 1}$.
    Then define $E_2$ as $K_{B_1, B_2, \dots, B_{r+1}}$, 
        the $(r+1)$-edges of the complete $(r+1)$-partite graph.
    Note that the $(r+1)$-hypergraph $G$ we constructed is an 
        instance of the $\SF_{A,B,r+1}$ problem.
    This construction is depicted in     
        \fig{rank_increase_reduction}.
    Moreover, if $G'$ is an $(r+1)$-hypergraph with the promise of 
        the $\SF_{A,B,r+1}$ problem, then for every $v \in A$, 
        we can define an $r$-hypergraph $H_v$ on vertex set $B$ 
        such that $v_1v_2\dots v_r$ is an $r$-edge of $H_v$ if and
        only if $vv_1v_2\dots v_r$ is a type 1 hyperedge of $G'_v$.
    Note that $(v, H_v)_{v \in A}$ is an instance of the $\OR_n \circ 
        \SF_{n, r}^n$ problem and $G'$ can only be obtained from 
        $(v, H_v)_{v \in A}$ via the rank-increase construction.

        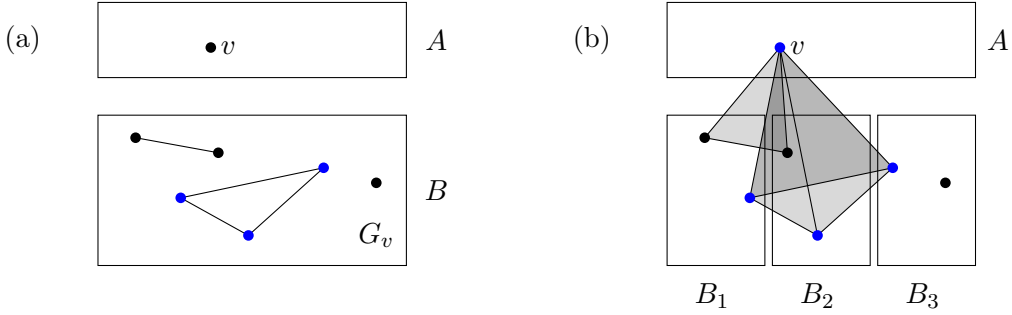
\begin{figure}
            \hfill
            \begin{tikzpicture}[scale=1] 
                \node (a) at (-1, 3) {(a)};
                \draw (0, 0) rectangle (4.1,2);
                \node (B) at (4.5, 1) {$B$};
                \node (B1) at (0.6, -0.4) {\phantom{$B_1$}};
                \draw (0, 2.5) rectangle (4.1, 3.5);
                \node (A) at (4.5, 3) {$A$};
                \node (Gv) at (3.7, 0.4) {$G_v$};
                
                \coordinate [label=right:$v$] (v) at (1.5, 2.9);
                \coordinate (1) at (1.1, 0.9);
                \coordinate (2) at (2, 0.4);
                \coordinate (3) at (3, 1.3);
                \coordinate (4) at (0.5, 1.7);
                \coordinate (5) at (1.6, 1.5);
                \coordinate (6) at (3.7, 1.1);

                \foreach \from/\to in {1/2, 2/3, 1/3, 4/5}
                    \draw (\from) -- (\to);
                
                \foreach \x in {v,4,5,6}
                    \fill[black] (\x) circle (2pt);
                \foreach \x in {1,2,3}
                    \fill[blue] (\x) circle (2pt);
            \end{tikzpicture} 
            \hfill
            \begin{tikzpicture}[scale=1] 
                \node (b) at (-1, 3) {(b)};
                \coordinate [label=right:$v$] (v) at (1.5, 2.9);
                \coordinate (1) at (1.1, 0.9);
                \coordinate (2) at (2, 0.4);
                \coordinate (3) at (3, 1.3);
                \coordinate (4) at (0.5, 1.7);
                \coordinate (5) at (1.6, 1.5);
                \coordinate (6) at (3.7, 1.1);

                \begin{scope}[transparency group]
                \begin{scope}[blend mode=multiply]
                    \foreach \a/\b in {1/2, 2/3, 1/3, 4/5}
                        \fill[gray!30] (v) -- (\a) -- (\b) -- cycle;
                \end{scope}
                \end{scope}
  
                \draw (0, 0) rectangle (1.3,2);
                \draw (1.4, 0) rectangle (2.7,2);
                \draw (2.8, 0) rectangle (4.1,2);
                \node (B1) at (0.6, -0.4) {$B_1$};
                \node (B2) at (2, -0.4) {$B_2$};
                \node (B3) at (3.4, -0.4) {$B_3$};
                \draw (0, 2.5) rectangle (4.1, 3.5);
                \node (A) at (4.4, 3) {$A$};
                
                \foreach \from/\to in {1/2, 2/3, 1/3, 4/5, v/1, v/2, v/3, v/4, v/5}
                    \draw (\from) -- (\to);
                
                \foreach \x in {4,5,6}
                    \fill[black] (\x) circle (2pt);
                \foreach \x in {v,1,2,3}
                    \fill[blue] (\x) circle (2pt);
            \end{tikzpicture} 
            \hspace{1.5cm}
            
            \caption[
                Example of quantum query lower bound by rank reduction.
            ]{
                An example of the rank lower bound reduction when $r = 2$.
                This example shows that $\Q\big(\SF_{2n,3} \big) = 
                    \Omega\big(\sqrt{n} \cdot \Q(\SF_{n,2}) \, \big)$.
                In particular, any nontrivial lower bound of the triangle finding problem implies a 
                    nontrivial lower bound for the tetrahedron finding problem.
                (a) depicts an instance of the $\OR_n \circ \SF_{n,2}$ problem with
                    $G_v$ shown for a particular $v \in A$.
                The blue vertices form a triangle.
                (b) depicts an instance of the $\SF_{A,B,3}$ problem obtained from (a) by the 
                    rank increase construction.
                The gray-shaded triangles are the type 1 $3$-hyperedges.
                The 3-partition of $B$ is randomly chosen and forms a complete $3$-hypergraph, 
                    so the blue vertices form a tetrahedron.      
            } \label{fig:rank_increase_reduction}
        \end{figure}

    Suppose there are vertices $v_1, v_2, \dots, v_{r+1} \in \S_{r+1}(B)$ that form an $r$-simplex in $G_v$.
    Then for each $i \in [r+1]$, $e_{\hat{i}} \in E_v$ and $\{v\}\cup e_{\hat{i}}$ are type 1 hyperedges of $G$.
    Let $P$ be the event that each of these $r+1$ vertices fall in a distinct partition of $B$. Then
    \begin{equation*}
            \Pr(P) = \Pr_{B = B_1 \cup \dots \cup B_{r+1}}\squareb{\exists_{\pi \in S_{r+1}}\, \forall_{i\in [r+1]}\,
                v_i \in B_{\pi(i)} } = \prod_{i = 1}^r\frac{r+1-i}{r+1}\cdot\frac{n}{n-i}
    \end{equation*}
        where $S_{r+1}$ is the symmetric group of $r+1$ vertices.
    Note that $\Pr(P)$ is a constant when $r$ is a constant.
    In the event of $P$, $v_1v_2\dots v_{r+1}$ becomes a type 2 hyperedge of $G$.
    Together with the type 1 hyperedges $\{v\}\cup e_{\hat{i}}$ in $G$, the vertices $\{v, v_1, v_2, \dots, v_{r+1}\}$ 
        form an $(r+1)$-simplex of $G$.

    Suppose $G'$ is an instance of $\SF_{A,B,r+1}$ where $G'$ 
        is obtained from $(v, H_v)_{v \in A}$ via the 
        rank-increase construction.
    If $\displaystyle u, u_1, \dots, u_{r+1} \in 
        A \times \binom{B}{r+1}$ is a set of vertices that formed 
        an $(r+1)$-simplex in $G'$, then $u_1, \dots, u_{r+1}$ 
        must be an $r$-simplex in $H_u$.
    Moreover, every type 1 hyperedge query in $\SF_{A,B,r+1}$ is
        equivalent to a query of the form 
        $(u, e)$ in $\OR_n \circ \SF_{n, r}^n$.
    Therefore, we can solve the $\OR_n \circ \SF_{n, r}^n$ problem 
        by solving an $\SF_{A,B,r+1}$ problem using
        the same amount of quantum queries.
    Note that $\SF_{A,B,r+1}$ is a promise problem of 
        $\SF_{2n,r+1}$.
    Since the randomized reduction success with probability at least 
        $\Pr(P) = \Theta(1)$, we observe that \begin{equation*}
        \Q\big(\SF_{2n,r+1} \big) = \Omega\squareb{\Q(\SF_{A, B,r+1})}
        = \Omega\Big[\Q(\OR_n \circ \SF_{n, r}^n)\Big]
        = \Omega\big(\sqrt{n} \cdot \Q(\SF_{n, r}) \, \big).\qedhere
    \end{equation*}
\end{proof}

%%%%%%%%% Section 4 Converting nested quantum walks to adaptive learning graphs %%%%%%%%%%%%%%%

\section{Converting nested quantum walks to adaptive learning graphs}

In this section, we explain how to formulate the nested quantum walk algorithm 
    in an adaptive learning graph.
In the next section, we will use this newly developed framework to find a
    nontrivial algorithm for the 4-simplex finding problem.

Let's start by reviewing nested quantum walks, which were first introduced by
Jeffery, Kothari, and Magniez \cite{JKM13}. Quantum walks are nested when the checking
procedure of one quantum walk is another quantum walk.
An $r$-level nested quantum walk uses a state tuple $(A_1, A_2, \dots, A_r)$ where $A_i$ is the state of the $i^{th}$ level quantum walk.
However, instead of keeping a separate data structure $D(A_i)$ at each level, it keeps track of a data structure
in a global quantum state $\ket{A_1, A_2, \dots, A_r, D(A_1, \dots, A_r)}$.
This allows us to push the setup cost of the quantum walk in every level to the beginning of the computation.

We are interested in the case where each level of the nested quantum walk is just a symmetric walk on a Johnson graph (this is the usual case for nested quantum walks). In that setting, 
we show that we can convert such a nested quantum walk into an adaptive learning graph.
The learning graph framework is additionally easier to analyze;
in the next section, we utilize this framework to find a non-trivial algorithm for
$4$-simplex finding.
Our approach to the conversion extends \lem{learning_graph_johnson_walk};
however, we need to make a few important modifications.

% ===================================================================================================================
% Subsection 1
\subsection{Configuration Packages}

Our objective is to formulate a learning graph for nested 
    quantum walks on Johnson graphs $J([n_i], k_i)$.
To do that, we need the state space of each quantum walk in the hierarchy to be
dependent on the state of the previous (outer) walks. 
To this end, we label the L-vertices of our learning graph by ordered partial subsets instead of subsets.
This allows us to refer to a particular element in the state by its position.

For a set $X$ and an integer $k$, define the set of ordered partial subsets of size $k$ as 
    \begin{equation}
        \P(X, k) := \Big\{(x_1, \dots, x_k) \in (X \cup \{\star\})^k: \text{ for } i \neq j \in [k], x_i = x_j 
        \implies x_i = \star \Big\}.
    \end{equation}
We also define the set of ordered subsets of size $k$ as \begin{equation}
    \P(X, =k) := \Big\{(x_1, \dots, x_k) \in X^k: \text{ for } i \neq j \in [k], x_i \neq x_j \Big\}.
\end{equation}
Here the $\star$ symbol is a placeholder that refers to an element of the subset not yet determined.
We can treat $A\in \P(X, k)$ as a set by ignoring the star symbols and treat the elements in $A$ as unordered;
    this allows us to generalize membership and set difference to $A$.
We define the size of $A$ (denoted by $|A|$) as the number of non-star elements in $A$.
If $|A| < k$, we say $A$ is partially filled.
If $A$ is partially filled, then for $v \not\in A$, we use the notation $A\cup \{v\}$ to randomly replace 
    a $\star$ symbol in $A$ by $v$. 
For $A, B \in \P(X, k)$, we write $A \subseteq B$ if for every $i\in [k]$ where $A_i \neq \star$, we have $A_i = B_i$.

The $i^{\text{th}}$ level of the nested walk is labeled by $\P([n_i], k_i)$.
The certificate of the nested quantum walk is given by a sequence
    $I_y = (I_{y, 1}, \dots, I_{y, r})$ where each 
    $I_{y, i}\in\binom{[n_i]}{\ell_i}$. % has a fixed size $l_i$.
Define the set \begin{equation}
    \overline{I_{y, i}} := \curlyb{A'_i \in \P([n_i], k_i) : |A'_i| = k_i - \ell_i,\  A'_i\cap I_{y,i} = \emptyset}.
\end{equation}
We say that a state $A'_i \in \P([n_i], k_i)$ \textit{avoids} the certificate       $I_{y, i}$ if $A'_i \in \overline{I_{y, i}}$.
We usually attach a prime symbol for elements of $\overline{I_{y, i}}$ and we will use these states often during 
    the setup stages of the nested quantum walk.
In the setup of the $i^{\text{th}}$ level state $A_i$, we assume we have the setup states of the earlier levels $A'_1, \dots, A'_{i-1}$,
so the certificate of the $i^{\text{th}}$ level can utilize this information;
    we further assume that 
    $I_{y, i} = I_{y, i, A'_1, \dots, A'_{i-1}}$ depends on these setup states.

When we design the flow for a learning graph of quantum walk, a valid state in the $i^{\text{th}}$ level should have 
    the form $A'_i \cup I_{y, i}$.
However, there may be special circumstances we want to avoid, even when $A_i$ contains the certificate $I_{y,i}$. 
For this purpose, we define an availability function $C$ such that 
    $C(A'_1, \dots, A'_{i-1}) \subseteq \overline{I_{y, i}}$.
We design the learning graph such that $A_i$ is valid if and only if $A_i = A'_i \cup I_{y, i}$ for some 
    $A'_i \in C(A'_1, \dots, A'_{i-1})$.
In our applications, the proportion of unavailable states avoiding $I_{y, i}$ is small.
That is, for some function $\alpha = o(1)$, a function of $n$ and fixed setup states $A'_1, \dots, A'_{i-1}$, 
    we assume that \begin{equation} \label{eqn:small_ava_set}
    \Pr_{A'_i \in \overline{I_{y, i}}} [A'_i \not\in C(A'_1, \dots, A'_{i-1})] \le \alpha.
\end{equation}
In this case, we call $C(A'_1, \dots, A'_{i-1})$ an \textit{$\alpha$-subset} of $\overline{I_{y, i}}$.
If $C(A'_1, \dots, A'_{i-1}) = \overline{I_{y, i}}$, we say that $C$ is \textit{trivial} for this level.

The data structure associated with the nested quantum walk is given by a monotone function 
    $D$ mapping from $\prod_{i=1}^r \P([n_i], k_i)$ to $\P([N])$.
This data structure is kept at the earliest level of the nested quantum walk so that all levels have access to 
    the data structure.

Finally, let's formalize these ideas by grouping all the sets and parameters defined above 
    into a configuration package used to define the learning graph of a nested Johnson walk.
\begin{definition} \label{def:lg_configuration}
    For each $i\in [r]$, let $0 < \ell_i \le k_i = o(n_i)$ be integer parameter where $\ell_i$ is a constant.
    Let $f_{A_1, \dots, A_r}: \{0, 1\}^N \to \{0, 1\}$ be Boolean functions and suppose \begin{equation}
        f = \bigvee_{A_i \in \P([n_i], =k_i),\ i\in [r]}f_{A_1, \dots, A_r}
    \end{equation} is the function we are trying to compute.
    Define the \textit{configuration} of a nested Johnson walk learning graph computing $f$
        as the tuple \begin{multline} \label{eqn:lg_config_tuple}
            \Big(\big\{f_{A_1, \dots, A_r}\big\}_{A_i \in \P([n_i], =k_i), i \in [r]}, \ 
                \big\{(n_i, k_i, \ell_i)\big\}_{i\in [r]}, \ 
                \big\{I_y\big\}_{y \in f^{-1}(1)},\  C\  , \alpha,\  D,\\
        \big\{\G_{A_1, \dots, A_r, \lambda}\big\}_{\substack{A_i \in \P([n_i], =k_i),\  i \in [r],\ \\
                \lambda \text{ partial assignment on } D(A_1, \dots, A_r)}} \Big).
        \end{multline}
    Here, $r$ is the number of levels in the nesting structure.
    $\alpha(n) = o(1)$ is a function of $n$. 
    The variables $\{I_y\}, C, D$ respectively denote the sequence of certificates, the availability function, 
        and the data structure explained in this subsection.
    For each $\lambda$ a partial assignment on $D(A_1, \dots, A_r)$, let $f_{A_1, \dots, A_r, \lambda}$ be the 
        partial Boolean function $f_{A_1, \dots, A_r}$ restricted to inputs $z \in \{0, 1\}^N$ where 
        $z_{D(A_1, \dots, A_r)} = \lambda$.
    Then, we have $f_{A_1, \dots, A_r} = \bigvee_{\lambda} f_{A_1, \dots, A_r, \lambda}$.
    Furthermore, each $\G_{A_1, \dots, A_r, \lambda}$ is a learning graph for $f_{A_1, \dots, A_r, \lambda}$.
\end{definition}

The following conditions on the configuration make sure the sequence of certificates can depend on 
    previous setup states, as we explained above.
\begin{definition} \label{def:lg_admisibile_config}
    We say that the configuration in equation (\ref{eqn:lg_config_tuple}) is \emph{admissible} if for every 
        $y \in f^{-1}(1)$, there is $I_{y, 1} \in \binom{[n_1]}{\ell_1}$ and an $\alpha$-subset 
        $C() \subseteq \overline{I_{y, 1}}$, such that for every $A'_1 \in C()$, 
        there is $I_{y, 2}\in \binom{[n_2]}{\ell_2}$ and an $\alpha$-subset $C(A'_1) \subseteq \overline{I_{y, 2}}$, 
        such that for every $A'_2 \in C(A'_1)$, $\dots$, there is $I_{y, r}\in \binom{[n_r]}{\ell_r}$ and 
        an $\alpha$-subset $C(A'_1, \dots, A'_{r-1}) \subseteq \overline{I_{y, r}}$, such that for every 
        $A'_r \in C(A'_1, \dots, A'_{r-1})$, we have \begin{equation}
            f_{A'_1\cup I_{y, 1}, \dots, A'_r\cup I_{y, r}}(y) = 1.
        \end{equation}
\end{definition}

% ===================================================================================================================
% Subsection 2
\subsection[Alpha Symmetric Stage]{$\alpha$-symmetric Stage}

Here, we investigate what happens when we drop an 
    $\alpha$-fraction of valid L-vertices from its flows.
\begin{definition} \label{def:alpha_symmetric_stage}
    Suppose $\F$ is a learning graph stage with starting L-vertices $V_i$ and ending L-vertices $V_j$.
    Define $c := |V_i|$, $e = |V_j|$ and set $s$ as a constant.
    Let $V_{i, y}, V_{j, y}$ be the set of vertices in $V_i, V_j$ respectively which receive positive flow from $p_y$.
    For $v \in V_{i, y} \cup V_{j, y}$, let $p'_{v, y}$ denote the value of the positive flow through vertex $v$. 
    We say $\F$ is $\alpha$-\emph{symmetric with constant $s$} if it can be obtained via the following operations:
    \begin{enumerate}
        \item Suppose we have a symmetric stage $\F'$ in \defn{symmetric_stage} with 
            parameters $c, c', d, d', e, e'$.
        We let $\F$ inherit the L-vertices and L-edges of $\F'$. 
        It remains to define the flow of $\F$.
        
        \item For any $y\in f^{-1}(1)$, there is a set $V'_{i, y} \subseteq V_i$ of beginning vertices 
            receiving positive flow from $p_y(\F')$ where $|V'_{i, y}| = c'$.
        The set $V_{i, y}$ is obtained by removing a small fraction of L-vertices from $V'_{i, y}$ such that
            $(1-\alpha)^s c' \le |V_{i, y}| \le c'$.

        \item Let $V'_{j, y} \subseteq  V_j$ be the set of ending vertices receiving positive flow from 
            $p_y(\F')$ where $|V'_{j, y}| = e'$.
        The set $V_{j, y}$ is obtained by removing L-vertices from $V'_{j, y}$ such that
            \begin{equation} \label{eqn:end_vertex_conds}
                (1-\alpha)^{s + 1} e' \le |V_{j, y}| \le e' \quad\text{and}\quad 
                \frac{|v_+ \cap V_{j, y}|}{|v_+ \cap V'_{j, y}|} \ge 1 - \alpha \quad\text{for every } v \in V_{i, y}
            \end{equation}
            where $v_+$ denotes the set of out-neighbours of a vertex $v$.
        This ensures the subset to be removed from $V_j$ doesn't target any $v \in V_{i, y}$.

        \item The flows in $\F$ inherit the flows in $\F'$ with a few alternations. 
        The flow of an L-edge is reduced to zero if the L-edge doesn't lie in $V_{i, y} \times V_{j, y}$.
        To compensate for the total value loss, the flows of the L-edges in $V_{i, y} \times V_{j, y}$ 
            are scaled by a factor of at most $\frac{1}{(1 - \alpha)^{s+1}}$.
        This ensures that for $v \in V_{i, y}, w \in V_{j, y}$, we have \begin{equation}
            \frac{1}{c'} \le |p'_{v, y}| \le \frac{1}{(1 - \alpha)^s c'} \quad\text{ and }\quad 
            \frac{1}{e'} \le |p'_{w, y}| \le \frac{1}{(1 - \alpha)^{s+1} e'}.
        \end{equation}
    \end{enumerate}
\end{definition}

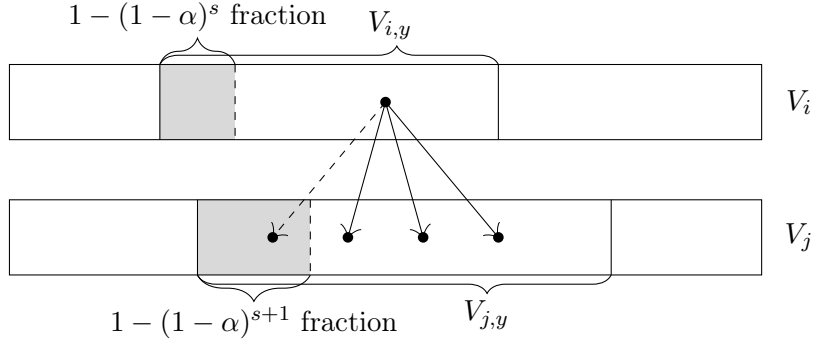
\begin{figure}
        \centering
        \begin{tikzpicture}[scale=1]

            \coordinate (1) at (2, 2.8);
            \coordinate (2) at (2, 1.8);
            \coordinate (3) at (6.5, 2.8);
            \coordinate (4) at (6.5, 1.8);
            \coordinate (5) at (2.5, 1);
            \coordinate (6) at (2.5, 0);
            \coordinate (7) at (8, 1);
            \coordinate (8) at (8, 0);

            \coordinate (9) at (3, 2.8);
            \coordinate (10) at (3, 1.8);
            \coordinate (11) at (4, 1);
            \coordinate (12) at (4, 0);

            \fill[gray!30] (1) rectangle (10);
            \fill[gray!30] (5) rectangle (12);

            \draw (0, 0) rectangle (10,1);
            \node (B) at (10.5, 0.5) {$V_j$};
            \draw (0, 1.8) rectangle (10, 2.8);
            \node (A) at (10.5, 2.3) {$V_i$};
            
            \foreach \from/\to in {1/2, 3/4, 5/6, 7/8}
                \draw (\from) -- (\to);
            \foreach \from/\to in {9/10, 11/12}
                \draw[dashed] (\from) -- (\to);

            \coordinate (v) at (5, 2.3);
            \coordinate (v1) at (3.5, 0.5);
            \coordinate (v2) at (4.5, 0.5);
            \coordinate (v3) at (5.5, 0.5);
            \coordinate (v4) at (6.5, 0.5);
            
            \foreach \x in {v, v1, v2, v3, v4}
                \fill[black] (\x) circle (2pt);
            \foreach \from/\to in {v/v2, v/v3, v/v4}
                \draw[-{Classical TikZ Rightarrow[length=1.5mm]}] (\from) -- (\to);
            \foreach \from/\to in {v/v1}
                \draw[-{Classical TikZ Rightarrow[length=1.5mm]}, dashed] (\from) -- (\to);

            \draw [decorate, decoration={brace, amplitude=7pt, raise=0ex, aspect=0.7}]
                (1) -- (3) node[midway, xshift=23pt, yshift=15pt]{$V_{i, y}$};
            \draw [decorate, decoration={brace, amplitude=7pt, mirror, raise=0ex, aspect=0.7}]
                (6) -- (8) node[midway, xshift=31pt, yshift=-15pt]{$V_{j, y}$};
            \draw [decorate, decoration={brace, amplitude=10pt, raise=0ex}]
                (1) -- (9) node[midway, yshift=18pt]{$1 - (1 - \alpha)^s$ fraction};
            \draw [decorate, decoration={brace, amplitude=10pt, mirror, raise=0ex}]
                (6) -- (12) node[midway, yshift=-18pt]{$1 - (1 - \alpha)^{s+1}$ fraction};
            
        \end{tikzpicture}
        
        \caption[
            Example of an $\alpha$-symmetric stage.
        ]{
            Example of an $\alpha$-symmetric stage.
            $V_i$ is the set of beginning vertices, $V_j$ is the set of ending vertices.
            The arrows mark the positive flows from the original symmetric stage.
            The gray areas are vertices later marked as ``unavailable''.
            As a result, the dash L-edge is removed from the stage and the value of its flow 
                is redistributed to the remaining available L-edges.
        } \label{fig:alpha_symmetric_stage}
    \end{figure}

Note that if stage $s$ of a learning graph is $\alpha$-symmetric with constant $s$, then 
    operation 2 holds for stage $s + 1$ since operation 3 holds for stage $s$.
Thus, we can design stage $s + 1$ as an $\alpha$-symmetric stage with constant $s + 1$.
We will design learning graphs consisting of sequential $\alpha$-symmetric stages.
Provided the number of stages is constant, the constant $s$ is irrelevant 
    to the overall complexity of the learning graph. 
An example of $\alpha$-symmetric stage is presented in \fig{alpha_symmetric_stage}.

The following lemma analyzes the complexity of one $\alpha$-symmetric stage.
\begin{lemma} \label{lem:alpha_symmetric_stage}
    Let $\F$ be an $\alpha$-symmetric stage of $\G$ with constant $s$ and let $\F'$ be its underlying symmetric stage. 
    Let $T := cd/c'd'$.
    For every $y \in f^{-1}(1)$, if $L$ is the average length of the L-edges
        receiving positive flow in $\F'$, then the L-edges in $\F$ can be weighted so that
    $$ C_0(\F) \le T \cdot L^2 \quad\text{ and }\quad C_1(\F, y) \le (1-\alpha)^{-2(s + 1)} = O(1).$$
\end{lemma}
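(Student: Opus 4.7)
The plan is to reduce the analysis of the $\alpha$-symmetric stage $\F$ to the analysis of its underlying symmetric stage $\F'$ via \lem{symmetric_stage}, which gives a weighting of the L-edges of $\F'$ with $C_0(\F') \le T \cdot L^2$ and $C_1(\F', y) \le 1$. Since $\F$ inherits the L-vertices, L-edges, and edge-lengths of $\F'$ by operation (1) of \defn{alpha_symmetric_stage}, I would use the exact same weights $w(e)$ for the L-edges of $\F$ that \lem{symmetric_stage} assigns to $\F'$. The key observation is that the negative complexity $C_0$ only depends on the weights and edge-lengths and not on any flow, so we immediately inherit $C_0(\F) = C_0(\F') \le T \cdot L^2$, giving the first inequality with no further work.

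For the positive complexity, I would track carefully how the per-edge flow value changes between $\F'$ and $\F$. In $\F'$, each $v \in V'_{i, y}$ carries flow $1/c'$ which is split evenly over its $d'$ outgoing flow-carrying L-edges, producing per-edge flow $1/(c' d')$. In $\F$, by operations (2)--(4) of \defn{alpha_symmetric_stage}, each $v \in V_{i, y}$ carries flow at most $1/((1-\alpha)^s c')$, and this flow is redistributed over only the out-edges of $v$ landing in $V_{j, y}$, of which there are at least $(1-\alpha) d'$ by the condition $|v_+ \cap V_{j,y}|/|v_+ \cap V'_{j,y}| \ge 1 - \alpha$ in equation~(\ref{eqn:end_vertex_conds}). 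Combining these two bounds, the per-edge flow in $\F$ is at most $(1-\alpha)^{-(s+1)}$ times the per-edge flow in $\F'$, while zero-flow edges of $\F'$ remain zero-flow in $\F$.

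Squaring this pointwise bound and summing against $l(e)/w(e)$ immediately yields
\[C_1(\F, y) = \sum_{e \in \F} l(e) \frac{p_y(e)^2}{w(e)} \le (1-\alpha)^{-2(s+1)} \sum_{e \in \F'} l(e) \frac{p'_y(e)^2}{w(e)} = (1-\alpha)^{-2(s+1)} C_1(\F', y) \le (1-\alpha)^{-2(s+1)},\]
as desired; since $s$ is a constant and $\alpha = o(1)$, this is $O(1)$. The only subtle step is the per-edge scaling bound in the middle paragraph --- one must be careful to use both the beginning-vertex flow bound (which contributes a factor $(1-\alpha)^{-s}$) and the ``at least $(1-\alpha)d'$ available out-edges'' condition (which contributes the remaining factor $(1-\alpha)^{-1}$) to reach the exponent $s+1$; it would be tempting but incorrect to use only the ending-vertex bound $1/((1-\alpha)^{s+1} e')$, which governs per-vertex rather than per-edge flow.
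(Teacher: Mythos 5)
Your proposal is correct and follows essentially the same route as the paper: reuse the weights from \lem{symmetric_stage}, observe that $C_0$ is flow-independent and hence unchanged, and bound $C_1$ by noting each surviving L-edge's flow is scaled by at most $(1-\alpha)^{-(s+1)}$, which squares to the stated factor. The only difference is cosmetic — the paper cites the per-edge scaling factor directly from operation 4 of \defn{alpha_symmetric_stage}, whereas you re-derive it from the vertex-flow bound and equation~(\ref{eqn:end_vertex_conds}); your derivation is a valid (and slightly more self-contained) justification of the same fact.
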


\begin{proof}
    By \lem{symmetric_stage}, we can assign weights $w(e)$ to symmetric stage $\F'$ so that $\F'$ 
        has 0-complexity $\le T\cdot L^2$ and 1-complexity $\le 1$.
    Now if the same weight assignment is to be applied to $\F$, according to \defn{lg_complexity}, 
        the 0-complexity stays the same.
    The 1-complexity may differ in the following ways:
    \begin{itemize}
        \item The 1-complexity of the $\F$ may reduce because terms in equation (\ref{eqn:lg_complexity}) 
            corresponding to L-edges that don't belong to $V_{i, y} \times V_{j, y}$ should be removed from 
            the calculation.
        This doesn't change the 1-complexity upper bound.

        \item Due to the redistribution of flows and operation 3 from \defn{alpha_symmetric_stage},
        each L-edge has its flow scaled by a factor at most $\frac{1}{(1 - \alpha)^{s+1}}$, so every term 
            in equation (\ref{eqn:lg_complexity}) is multiplied by a factor of at most $(1-\alpha)^{-2(s+1)}$.
    \end{itemize}
    The statement of the lemma follows immediately.
\end{proof}

% ===================================================================================================================
% Subsection 3
\subsection{Main Learning Graph Construction} \label{subsec:nested_qw_main_lg}

Now, we are ready to construct an adaptive learning graph for an $r$-level 
    nested Johnson walk in the following general-purpose lemma.
The following lemma is a formal restatement of \thm{conversion}.
We will prove this by constructing an adaptive learning graph.
\begin{lemma}[Learning Graph for Nested Johnson Walk] \label{lem:learning_graph_nested_johnson}
    Let \begin{equation*}
        \Big(\big\{f_{A_1, \dots, A_r}\big\}, \big\{(n_i, k_i, \ell_i)\big\}, \big\{I_y\big\}, C, 
        \alpha, D, \big\{\G_{A_1, \dots, A_r, \lambda}\big\}\Big)
    \end{equation*} 
    be an admissible configuration defined in \defn{lg_configuration} and \defn{lg_admisibile_config}.
    Let $\bS, \bU_1, \dots, \bU_r, \bC > 0$ be values such that for every $x \in f^{-1}(0)$, we have 
    \begin{align}
        & \qquad\qquad\qquad
            \Exp_{A'_i \sim \binom{[n_i]}{k_i - \ell_i}}
            |D(A'_1, \dots, A'_r)|^2  \le \bS^2, \\
        & \Exp_{A_i\sim \binom{[n_i]}{k_i}}\squareb{C_0(\G_{A_1, \dots, A_r, x_{D(A_1, \dots, A_r)}}, x)\cdot 
            C_1(\G_{A_1, \dots, A_r, x_{D(A_1, \dots, A_r)}})}  \le \bC^2,
    \end{align}
    and for every $i\in [r]$ and $k_i - \ell_i \le h < k_i$,
    \begin{align}
        \Exp_{\substack{A_j\sim \binom{[n_j]}{k_j}\ \forall j\le i-1,\\
                        A_j\sim \binom{[n_j]}{k_j -\ell_j}\ \forall j\ge i+1,\\ 
                        A_i\sim \binom{[n_i]}{h},\\
                        v\sim [n_i] - A_i}}
            |D(A_1, \dots, A_i \cup\{v\}, \dots, A'_r) - D(A_1, \dots, A_r)|^2 &\le \bU_i^2.
    \end{align}
    Then there is a learning graph $\G$ for $f$ such that for every $x \in f^{-1}(0), y \in f^{-1}(1)$, we have 
        $C_1(\G, y) \le 1$ and 
    \begin{equation}
        C_0(\G, x) = O\squareb{\bS^2 + \sum_{i=1}^r\roundb{\prod_{j=1}^i\roundb{\frac{n_j}{k_j}}^{\ell_j}}k_i\cdot \bU_i^2 + \roundb{\prod_{i=1}^r\roundb{\frac{n_i}{k_i}}^{\ell_i}}\bC^2}.
    \end{equation}
\end{lemma}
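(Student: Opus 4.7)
The plan is to construct the learning graph $\G$ by concatenating $O(1)$ many $\alpha$-symmetric stages that mimic the setup, update, and checking routines of the nested Johnson walk, then apply \lem{alpha_symmetric_stage} stage by stage. L-vertices are labelled by tuples $(A_1, \dots, A_r) \in \prod_{i=1}^r \P([n_i], k_i)$ together with a flag recording whether the data structure has been loaded; the root is the all-$\star$ tuple. The construction has three blocks. First, a setup block: for each level $i = 1, \dots, r$, insert $k_i - \ell_i$ one-element-loading stages that grow the $i$-th coordinate from all-$\star$ to a size-$(k_i - \ell_i)$ partial tuple lying in $C(A'_1, \dots, A'_{i-1})$, followed by one stage that loads $D(A'_1, \dots, A'_r)$ along a single L-edge out of each end-vertex. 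Second, an update block: for each $i = 1, \dots, r$ in order, insert $\ell_i$ certificate-loading stages that successively fill $\star$-positions of $A_i$ with the elements of $I_{y, i}$, growing $A_i$ from size $k_i - \ell_i$ to $k_i$. Third, a checking block: at every sink attach the learning graph $\G_{A_1, \dots, A_r, \lambda}$ with $\lambda = z_{D(A_1, \dots, A_r)}$.

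Every stage above is $\alpha$-symmetric in the sense of \defn{alpha_symmetric_stage}. Its underlying symmetric stage is the standard Johnson-walk loading stage, and the restriction to the $\alpha$-subset $C(A'_1, \dots, A'_{i-1})$ removes only an $\alpha$-fraction of the starting and ending vertices carrying positive flow, by admissibility (\defn{lg_admisibile_config}). Since there are a constant number of stages, the parameter $s$ in \defn{alpha_symmetric_stage} stays bounded by a constant, so \lem{alpha_symmetric_stage} bounds each stage's 1-complexity by $(1 - \alpha)^{-2(s+1)} = O(1)$ and bounds its 0-complexity by its speciality times the squared average L-edge length. For an update stage at level $i$, the speciality equals the reciprocal of the Johnson-walk marked fraction with outer-level certificates already installed (size $k_j$ for $j < i$) and inner-level states still at setup size (size $k_j - \ell_j$ for $j > i$), which evaluates to $\prod_{j=1}^{i}(n_j/k_j)^{\ell_j}(1 + o(1))$; this matches the distribution in the $\bU_i^2$ hypothesis exactly, so the squared average length of such a stage is bounded by $\bU_i^2$. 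Summing over the $\ell_i$ stages at level $i$ and invoking the same length-averaging argument as in the single-level case \lem{learning_graph_johnson_walk} gives a total level-$i$ update contribution of $O(\prod_{j=1}^{i}(n_j/k_j)^{\ell_j} \cdot k_i \bU_i^2)$. The $D$-loading stage contributes $\bS^2$ and the checking stage contributes $\prod_{i=1}^r (n_i/k_i)^{\ell_i} \cdot \bC^2$ by their respective hypotheses. Adding all contributions and rescaling weights by an $O(1)$ factor to normalize $C_1(\G, y) \le 1$ yields the advertised bound on $C_0(\G, x)$.

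The main obstacle is the cross-level coupling: the state space and certificate indices at level $i$ depend on the outer-level states, and the availability function $C$ further prunes configurations in an input-independent way. Working with ordered partial subsets $\P([n_i], k_i)$ rather than unordered subsets is what makes this dependency expressible inside a single learning-graph vertex, since the ordered positions record which coordinates of $A_i$ are reserved for certificate elements, as encoded in \defn{lg_admisibile_config}. The $\alpha$-symmetric framework then absorbs the small loss from $C$-pruning into an $O(1)$ multiplicative overhead on the 1-complexity, which is harmless under rescaling of the weights, and the position information carried by the ordered partial subsets guarantees that the inner-level certificates can be loaded at the correct coordinates specified by the admissible configuration.
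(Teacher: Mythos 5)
Your overall strategy---stack $\alpha$-symmetric setup, update, and checking stages labelled by ordered partial subsets, use admissibility to control the flow perturbation from the availability function $C$, and apply \lem{alpha_symmetric_stage} stage by stage---is exactly the route the paper takes. However, there is one concrete step that does not survive scrutiny as written: you decompose the setup for level $i$ into $k_i - \ell_i$ separate one-element-loading stages, while simultaneously asserting that the learning graph has ``$O(1)$ many $\alpha$-symmetric stages.'' Since $k_i = o(n_i)$ can be polynomial in $n$, these two claims are incompatible. The incompatibility matters: \defn{alpha_symmetric_stage} requires the parameter $s$ to be a constant, because the flow on surviving L-edges is rescaled by up to $(1-\alpha)^{-(s+1)}$ and the $1$-complexity bound of \lem{alpha_symmetric_stage} is $(1-\alpha)^{-2(s+1)}$, which is $O(1)$ only when $\alpha s = O(1)$. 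The configuration in \defn{lg_configuration} only guarantees $\alpha = o(1)$, so with $s$ growing like $k_i$ the cumulative rescaling can blow up. The paper avoids this entirely by making the setup for each level a \emph{single} stage that loads all $k_i - \ell_i$ elements (and the corresponding data-structure entries) along one L-edge of length $|D(A'_1,\dots,A'_r)| - |D(A'_1,\dots,A'_{i-1},\emptyset,\dots)|$, so that the total number of stages is $r + \sum_i \ell_i + 1 = O(1)$ and the hypothesis $\Exp|D(A'_1,\dots,A'_r)|^2 \le \bS^2$ bounds the setup cost directly via $T\cdot L^2$ with $T = O(1)$. You could also rescue your version by observing that pruning by $C$ occurs only $r$ times (once per level boundary), so the effective $s$ is at most $r$ regardless of the stage count, but you would need to say this explicitly and argue the intermediate one-element stages are fully symmetric.

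A smaller imprecision: you state that the speciality of an update stage at level $i$ ``equals'' $\prod_{j=1}^{i}(n_j/k_j)^{\ell_j}(1+o(1))$ and then produce the extra factor of $k_i$ by appeal to the single-level case. In the paper's computation the speciality of the $h$-th certificate-loading stage at level $i$ is $O\bigl(n_i (n_i/k_i)^{h-1}\prod_{j=1}^{i-1}(n_j/k_j)^{\ell_j}\bigr)$, which depends on $h$; the factor $k_i$ in the final bound arises because the dominant term at $h=\ell_i$ equals $k_i(n_i/k_i)^{\ell_i}\prod_{j<i}(n_j/k_j)^{\ell_j}$. Your conclusion is correct, but the intermediate claim about the speciality is not, and the derivation should go through the per-$h$ computation (or cite it) rather than asserting a single $h$-independent value.
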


\begin{proof}
    We will construct a learning graph $\G$ computing $f$ consisting of the setup, update, and 
        checking stages analogous to the procedures of a nested quantum walk.
    $\G$ consists of $r + \sum_{i = 1}^r \ell_i + 1$ stages. The first $r$ stages are for setup, 
        and the last stage is for checking.
    The stages for update are labeled by lexicographically ordered pairs $(i, h)$ for $i\in [r], h\in [\ell_i]$.
    All setup and update stages in $\G$ are $\alpha$-symmetric.
    The labels of the L-vertices in $\G$ are given by $(A_1, \dots, A_r)$ where $A_i \in \P([n_i], k_i)$. 
    The root vertex is labeled by $(\emptyset, \dots, \emptyset)$ where $\emptyset\in \P([n_i],k_i)$ is represented by the tuple of stars
        $(\star, \dots, \star)$.

    We define stages using \defn{alpha_symmetric_stage}.
    For $i \in [r]$, the $i^{\text{th}}$ setup stage is given by loading $k_i - \ell_i$ elements to $A_i$.
    In this stage, we have
    \begin{align*}
        V'_{i, y} &= \curlyb{(A'_1, \dots, A'_{i - 1}) : A'_k \in \overline{I_{y, k}} \text{ for } k\in [i-1]}, \\
        V'_{j, y} &= \curlyb{(A'_1, \dots, A'_{i}) : A'_k \in \overline{I_{y, k}} \text{ for } k\in [i]}.
    \end{align*}
    Hence $c' = \prod_{j=1}^{i-1}\binom{k_i}{\ell_i}\cdot \permutate{n_i - \ell_i}{k_i - \ell_i}$.
    Counting the number of possible setup labels, the number of starting vertices is
        $c = \prod_{j=1}^{i-1}\binom{k_j}{\ell_j}\cdot \permutate{n_j}{k_j - \ell_j}$. 
    The beginning L-vertices have outdegree $d = \binom{k_i}{\ell_i}\cdot \permutate{n_i}{k_i - \ell_i}$.
    We define \begin{equation*} 
        V_{i, y} := \curlyb{(A'_1, \dots, A'_{i - 1})  \in V'_{i, y}: A'_k \in 
        C(A'_1, \dots, A'_{k-1}) \text{ for } k\in [i-1]}.
    \end{equation*}

    With the setup information $A'_1, \dots, A'_{i-1}$, the certificate $I_{y, i}$ is fixed.
    The L-edges receive positive flow if and only if no elements in $I_{y, i}$
        are loaded to $A_i$ and $A_i \in C(A'_1, \dots, A'_{i-1})$.
    Since our configuration is admissible, we have $d' = \binom{k_i}{\ell_i}\cdot \permutate{n_i - \ell_i}{k_i - \ell_i}$.
    Since $C(A'_1, \dots, A'_{i-1})$ is an $\alpha$-subset of $I_{y, i}$, equation (\ref{eqn:end_vertex_conds}) is
        satisfied with this construction.
    Since $\ell_i$ are constants and $\alpha = o(1)$, the speciality of this stage is $cd/c'd' = O(1)$.
    By \lem{alpha_symmetric_stage}, the sum of $0$-complexities of these $r$ stages is at most
        $O(\Exp\big[|D(A'_1, \dots, A'_r)|\big]^2) \le O(\bS^2).$

    For $i \in [r]$ and $h \in [\ell_i]$, stage $(i, h)$ consists of beginning L-vertices $(A_1, \dots, A_r)$
        where $|A_j| = k_j$ for $j\in [i-1]$, $|A_j| = k_j - \ell_j$ for $j \in [i+1, r]$, and $|A_i| = k_i - \ell_i + h - 1$.
    Here,
    \begin{equation*} 
        V'_{i, y} = \{(A_1, \dots, A_r) : I_{y, j} \subseteq A_j\  % \text{ for }
        \forall_{j \le  i-1},\ 
        A_j \cap I_{y, j} = \emptyset\  \forall_{j\ge i+1},\  %\\ \text{ and }
        |I_{y, i}\cap A_i| = h - 1\}.
        \end{equation*}
    The L-edges of this stage load a new element to $A_i$, and an L-edge in this stage receives positive flow 
        from $p_y$ if and only if the new element loaded belongs to $I_{y, i}$.
    The corresponding parameter values for this stage are \begin{align*}
        c &= \binom{k_j}{\ell_j-h+1} \cdot \permutate{n_j}{k_j - \ell_j + h - 1} \cdot 
            \prod_{j=1}^{i-1}\permutate{n_j}{k_j} \cdot 
            \prod_{j=i+1}^{r}\binom{k_j}{\ell_j} \cdot \permutate{n_j}{k_j - \ell_j},\\ 
        c' &= \prod_{j=1}^r \binom{k_j}{\ell_j} \cdot \permutate{n_j - l_j}{k_j - \ell_j},
        \quad d = (\ell_j - h + 1)(n_i - (k_j - \ell_j + h - 1)),\\
        d' &= (\ell_j - h + 1)^2.
    \end{align*}
    The speciality of this stage is $\displaystyle \frac{cd}{c'd'} = 
        O\roundb{n_i \roundb{\frac{n_i}{k_i}}^{h-1}\prod_{j=1}^{i-1}\roundb{\frac{n_j}{k_j}}^{\ell_j}}$.
    By \lem{alpha_symmetric_stage}, the $0$-complexity of this stage is at most
        \[ T\cdot \Exp\big[|D(A_1, \dots, A_i \cup\{v\}, \dots, A_r) - D(A_1, \dots, A_r)|\big]^2 = 
        O\roundb{k_i \roundb{\frac{n_i}{k_i}}^{h}\prod_{j=1}^{i-1}\roundb{\frac{n_j}{k_j}}^{\ell_j} \cdot \bU_i^2}.\]

    The final stage of $\G$ performs the checking operation.
    For every beginning L-vertex $(A_1, \dots, A_r)$ where $|A_i| = k_i$ for all $i \in [r]$, 
        we attach the learning graph $\G_{A_1, \dots, A_r, x_{D(A_1, \dots, A_r)}}$ to this L-vertex and 
        rescale the weights of the L-edges in $\G_{A_1, \dots, A_r, x_{D(A_1, \dots, A_r)}}$ by 
        \[\lambda_{A_1, \dots, A_r} = \bigslash{ C_1(\G_{A_1, \dots, A_r, x_{D(A_1, \dots, A_r)}}) }
            { \prod_{i=1}^r\permutate{n_i - \ell_i}{k_i - \ell_i} \permutate{k_i}{\ell_i}}.\]
    There are $\prod_{i=1}^r\permutate{n_i}{k_i}$ beginning L-vertices and more than $(1-\alpha)^r 
        \prod_{i=1}^r\permutate{n_i - \ell_i}{k_i - \ell_i} \permutate{k_i}{\ell_i}$ of them receives positive flow.
    The values of the flow in these subroutine learning graphs inherit from the values of flow in the 
        original learning graph, rescaled by $\Theta\roundb{\prod_{i=1}^r\permutate{n_i - \ell_i}{k_i - \ell_i} 
        \permutate{k_i}{\ell_i}}^{-1}$.
    Our choice of rescaling ensures that the $1$-complexity of this stage is \begin{equation*} 
        \sum_{\substack{A'_i\in \S([n_i - \ell_i], k_i - \ell_i), \forall i\in [r]\\ 
            A'_i \in \overline{I_{y, i}},\  A'_i \cup I_{y, i} = A_i}} 
            \frac{C_1(\G_{A_1, \dots, A_r, x_{D(A_1, \dots, A_r)}})}
        {\Theta\roundb{\prod_{i=1}^r \roundb{\permutate{n_i - \ell_i}{k_i - \ell_i} \permutate{k_i}{\ell_i}}^2} \cdot 
            \lambda_{A_1, \dots, A_r}} = O(1).
    \end{equation*}
    The $0$-complexity of the final stage is \begin{equation*}
        \sum_{A_i\in \S([n_i], k_i), \forall i\in [r]} \lambda_{A_1, \dots, A_r}
        C_0(\G_{A_1, \dots, A_r, x_{D(A_1, \dots, A_r)}}, x) 
        = O\roundb{\prod_{i=1}^r \roundb{\frac{n_i}{k_i}}^{\ell_i}\cdot \bC^2}.
        \qedhere
    \end{equation*}
\end{proof}

%%%%% Section 5 Nested Quantum Walk for 4-simplex finding %%%%%%

\section{Quantum algorithm for 4-simplex finding} 
\label{sec:4_simplex_finding}

Before this work, there was no nontrivial (i.e.\ $o(n^{2.5})$) quantum
algorithm for simplex finding when $r=4$.
However, several nontrivial improvements have been made to
$3$-simplex finding algorithms.
Currently, the best-known algorithm for $3$-simplex finding uses $O(n^{1.883})$ 
    quantum queries \cite{LGNT16}.
It was achieved using a nested quantum walk that
    iteratively searches for vertices, 
    pairs of vertices, and the hyperedges of a $3$-simplex.
However, this algorithm doesn't use an adaptive learning graph its analysis resorts
    to analyzing quantum states during the computation.
We build on this work to obtain a $4$-simplex finding algorithm;
the analysis of our algorithm uses the adaptive
    learning graph formulation of quantum walks described in the last section.

We let $\mathrm{HG}(n, m, r)$ denote the hypergeometric distribution,
    where $n$ is the 
    total number of instances, $m$ is the number of good instances, and $r$ is the 
    number of draws without replacement.
The tail bound of this distribution is given below.

\begin{lemma}[\cite{LGNT16}] \label{lem:hg_dist} %\cite{LGNT16}
    Suppose $X \sim HG(n, m, r)$ with mean value $\mu = \frac{rm}{n}$, we have \begin{enumerate}
        \item for any $0 < \delta \le 1$, $\Pr\roundb{X \ge (1 + \delta)\mu} \le \exp\roundb{\frac{\mu\delta^2}{3}}$,
        \item for any $\delta > 2e - 1$, $\Pr\roundb{X > (1 + \delta)\mu} < 2^{-(1+\delta)\mu}$.
    \end{enumerate}
\end{lemma}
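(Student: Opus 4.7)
The plan is to reduce the hypergeometric tail bounds to the well-known Chernoff bounds for the binomial distribution, which is a standard approach for this kind of result.

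First, I would invoke the classical result of Hoeffding: for $X \sim \mathrm{HG}(n, m, r)$ and any convex function $\phi$, one has $\mathbb{E}[\phi(X)] \le \mathbb{E}[\phi(Y)]$ where $Y \sim \mathrm{Bin}(r, m/n)$ is the corresponding sum of i.i.d.\ Bernoullis (intuitively, sampling without replacement concentrates at least as well as sampling with replacement). Applied to $\phi(x) = e^{tx}$ for $t > 0$, this gives the moment generating function bound $\mathbb{E}[e^{tX}] \le (1 - p + p e^t)^r$ with $p = m/n$, so the standard Chernoff argument for Bernoulli sums goes through verbatim. Optimizing the Markov-inequality estimate $e^{-t(1+\delta)\mu} \mathbb{E}[e^{tX}]$ over $t > 0$ at $t = \ln(1+\delta)$ yields the usual base bound
\[
\Pr(X \ge (1+\delta)\mu) \le \left(\frac{e^\delta}{(1+\delta)^{1+\delta}}\right)^{\mu}.
\]

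From this base estimate both parts of the lemma follow by routine analytic inequalities on $g(\delta) = e^\delta/(1+\delta)^{1+\delta}$. For part (1) (whose statement appears to be missing a minus sign in the exponent), the bound $g(\delta) \le e^{-\delta^2/3}$ for $0 < \delta \le 1$ is a textbook inequality that I would verify by checking $\ln g(\delta) + \delta^2/3 \le 0$ on this range via a Taylor expansion of $(1+\delta)\ln(1+\delta) - \delta$ around $\delta = 0$. For part (2), when $\delta > 2e - 1$ we have $1+\delta > 2e$, so
\[
g(\delta) \;<\; e^{-1} \cdot \frac{e^{1+\delta}}{(2e)^{1+\delta}} \;=\; \frac{1}{e \cdot 2^{1+\delta}} \;<\; 2^{-(1+\delta)};
\]
raising to the $\mu$-th power gives the claimed $2^{-(1+\delta)\mu}$ bound.

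The only nontrivial ingredient is Hoeffding's reduction from hypergeometric to binomial moment generating functions. This would be the main obstacle if one had to reprove it from scratch (it is usually shown via a coupling or convex-ordering argument on the indicator variables of a uniform random subset), but as a classical fact it can be cited as a black box, after which everything else reduces to standard Chernoff bookkeeping and two elementary one-variable inequalities.
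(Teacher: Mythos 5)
Your proof is correct, but there is nothing in the paper to compare it against: the paper states this lemma as a citation to [LGNT16] and gives no proof at all, treating the hypergeometric tail bounds as a black box. Your derivation is the standard one and is sound in every step: Hoeffding's convex-ordering reduction gives $\mathbb{E}[e^{tX}] \le (1-p+pe^t)^r$ with $p = m/n$, the usual Chernoff optimization at $t = \ln(1+\delta)$ yields the base bound $\bigl(e^{\delta}/(1+\delta)^{1+\delta}\bigr)^{\mu}$, part (1) then follows from the textbook inequality $\delta - (1+\delta)\ln(1+\delta) \le -\delta^2/3$ on $0 < \delta \le 1$, and your part (2) computation $e^{\delta}/(1+\delta)^{1+\delta} < e^{-1}2^{-(1+\delta)} < 2^{-(1+\delta)}$ for $1+\delta > 2e$ is exactly right. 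You are also correct that the statement of part (1) as printed is missing a minus sign: as written, $\exp(\mu\delta^2/3) \ge 1$ and the bound is vacuous; the intended bound is $\exp(-\mu\delta^2/3)$, which is what your argument establishes and what the applications in Appendix A actually use.
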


In the remainder of this section,
we extend the algorithm presented in \cite{LGNT16}
to simplex finding in rank-$4$ hypergraphs, proving the following theorem.

\begin{theorem} \label{thm:4_simplex_algo}
There is an adaptive learning graph algorithm for computing the $4$-simplex finding problem with $O(n^{2.4548})$ quantum queries. 
\end{theorem}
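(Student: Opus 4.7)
The plan is to construct a 30-stage adaptive learning graph, organized as a nested Johnson walk with four tiers of levels, and then apply \lem{learning_graph_nested_johnson} together with the $\alpha$-symmetric framework to bound its query complexity. Introduce 30 real parameters: $a_i$ for $i\in[5]$, $b_{ij}$ for $ij\in\binom{[5]}{2}$, $c_{ijk}$ for $ijk\in\binom{[5]}{3}$, and $d_{ijkl}$ for $ijkl\in\binom{[5]}{4}$. The first 5 levels walk on Johnson graphs over $A_i\subseteq [n]$ of size $n^{a_i}$, each searching for one vertex $u_i$ of the unknown $4$-simplex. The next 10 levels, indexed by pairs $ij$, walk on $B_{ij}\subseteq A_i\times A_j$ of size $n^{b_{ij}}$, searching for the pair $u_{ij}=(u_i,u_j)$. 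The next 10 levels walk on the restricted state space $\Gamma_{ijk}$ consisting of triples $v_iv_jv_k$ with $v_iv_j\in B_{ij}$, $v_iv_k\in B_{ik}$, $v_jv_k\in B_{jk}$, of expected size $n^{m_{ijk}}$ with $m_{ijk}=b_{ij}+b_{ik}+b_{jk}-a_i-a_j-a_k$; we take subsets $C_{ijk}\subseteq \Gamma_{ijk}$ of size $n^{c_{ijk}}$. Finally the last 5 levels walk on the analogously restricted $\Gamma_{ijkl}$, taking subsets $D_{ijkl}$ of size $n^{d_{ijkl}}$; the final checking stage just verifies the five oracle entries for the hyperedges of a $4$-simplex.

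The key subtlety is that $|\Gamma_{ijk}|$ and $|\Gamma_{ijkl}|$ are not deterministically close to their expectations: if the walk loaded a pair of abnormally high degree, the induced state space at the next tier would blow up and ruin the update/checking trade-off. To control this, I would impose hard degree caps at each intermediate level, requiring every loaded pair (resp.\ triple) to have degree within a constant factor of the expected degree when extended to a triple (resp.\ quadruple). By the hypergeometric tail bound \lem{hg_dist} applied to a uniformly random loaded element, the fraction of ``bad'' states violating these caps is $\alpha=2^{-\Omega(\log^2 n)}$, which is $o(1)$ and in fact exponentially small. This fits exactly into \defn{alpha_symmetric_stage}: each of the 30 setup/update stages is $\alpha$-symmetric with a constant $s\le 30$, so by \lem{alpha_symmetric_stage} the $1$-complexity of every stage is still $O(1)$ and the $0$-complexity matches that of the corresponding fully symmetric stage.

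With this in hand, the analysis reduces to plugging the parameters into \lem{learning_graph_nested_johnson}. The data structure $D$ at a given partial state records, for each pair $v_iv_j$ present in the loaded $B_{ij}$, the oracle bits needed to decide whether it participates in a loaded triple or quadruple; its expected size, and the expected incremental size at each update step, are polynomials in $n$ whose exponents are fixed linear combinations of the 30 parameters. The checking cost at the innermost stage is a constant. Substituting into the bound of \lem{learning_graph_nested_johnson} yields, for each of the $30$ update stages plus the setup and the checking, an expression of the form $n^{E}$ where $E$ is an explicit linear function of $(a_i,b_{ij},c_{ijk},d_{ijkl})$. The overall $0$-complexity is the maximum (up to constants and polylog factors) over this collection of exponents.

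The bulk of the remaining work, and the main technical obstacle, is setting up and solving the resulting linear program. One minimizes the maximum exponent subject to: (i) the inequalities $0\le a_i\le 1$, $a_i+a_j\ge b_{ij}\ge 0$, $m_{ijk}\ge c_{ijk}\ge 0$, and the analogous constraints for $d_{ijkl}$, which say the chosen state sizes are feasible; (ii) balance inequalities asserting that each of the 30 per-stage cost exponents is at most the target $a$; (iii) a final inequality bounding the checking cost. The bookkeeping has to match the structure in which certificates at outer levels reduce effective state sizes at inner levels, mirroring the $3$-simplex analysis of \cite{LGNT16}. Solving this LP numerically (it is a small-scale linear program symmetric under $S_5$, so one may WLOG impose $a_i\equiv a$, $b_{ij}\equiv b$, etc., reducing to four variables) gives an optimum at or below $2.4548$, establishing \thm{4_simplex_algo}. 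The hardest part is not any single estimate but verifying that the 30 linear constraints correctly capture the costs produced by \lem{learning_graph_nested_johnson}, since miscounting even one exponent can push the optimum above the trivial $2.5$.
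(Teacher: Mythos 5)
Your construction matches the paper's almost exactly up to the final step: the same $30$-level nested Johnson walk (vertices, then pairs, then triples on the restricted space $\Gamma_{ijk}$, then quadruples on $\Gamma_{ijk\ell}$), the same degree caps enforced via the hypergeometric tail bound \lem{hg_dist} so that $|\Gamma_{ijk}|$ and $|\Gamma_{ijk\ell}|$ stay within a constant factor of their expectations, the same treatment of the rare bad states via the $\alpha$-symmetric framework, and the same appeal to \lem{learning_graph_nested_johnson} to reduce everything to a linear program over the $30$ exponents.

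The genuine gap is your last step: the claim that the LP ``is symmetric under $S_5$, so one may WLOG impose $a_i\equiv a$, $b_{ij}\equiv b$,'' etc. This is false. In \lem{learning_graph_nested_johnson} the cost of update stage $s$ is $\bigl(\prod_{j\le s}\sqrt{n_j/k_j}\bigr)\sqrt{k_s}\,\bU_s$: each stage pays the product of the specialities of all \emph{preceding} stages in the fixed nesting order, so permuting the roles of the five simplex vertices changes the constraint system, and the optimum is genuinely asymmetric. (The paper's optimal values are strictly increasing, $a_1\approx 0.304 < a_2\approx 0.652 < \dots < a_5\approx 0.957$, and likewise for the $b,c,d$ parameters.) Worse, the symmetric restriction is provably infeasible below the trivial exponent. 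With $m_3=3b-3a$ and $m_4=4c-6b+4a$, the exponents of the last update stage in each tier are $2.5-3a+d$, \ $2.5+7.5a-5.5b+d$, \ $2.5-7.5a+10b-5.5c+d$, and $2.5+2.5a-5b+5c-2d$. Requiring the last three to be below $2.5$ gives $d>1.25a-2.5b+2.5c$, $d<5.5b-7.5a$, and $d<5.5c-10b+7.5a$; the first two force $b>1.09375a+0.3125c$, the first and third force $b<0.8333a+0.4c$, hence $c>2.976a$ and then $b>2.02a$, contradicting the feasibility constraint $b\le 2a$. So the four-variable symmetric LP cannot even beat $O(n^{2.5})$, and the exponent $2.4548$ is only reachable by optimizing all $30$ parameters asymmetrically, as the paper does. (A minor secondary point: the bad-state fraction should be $\exp(-n^{\Omega(1)})$ as in \lem{2_edge_small} and \lem{3_edge_small}, not $2^{-\Omega(\log^2 n)}$, though either is $o(1)$ and suffices for the $\alpha$-symmetric stages.)
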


\subsection{Constructing the algorithm}

    The algorithm is based on a nested quantum walk where we load all vertices of a $4$-simplex first, then 
        load the pairs of these vertices, the triples of these vertices, and finally the $4$-hyperedges of this $4$-simplex.
    This nested quantum walk utilizes $30$ real parameters $0 \le a_i, b_{ij}, c_{ijk}, d_{ijk\ell} < 1$ 
        for $ijk\ell\in \binom{[5]}{4}$.
    For convenience of notation, we also define 15 dependent values \begin{align*}
        m_{ijk} &= b_{ij} + b_{ik} + b_{jk} - a_i - a_j - a_k,\\
        m_{ijk\ell} &= c_{ijk} + c_{ij\ell} + c_{ik\ell} + c_{jk\ell} - b_{ij} - b_{ik} - b_{i\ell} - b_{jk} - b_{j\ell} - b_{k\ell} 
            + a_i + a_j + a_k + a_\ell.
    \end{align*}
    We say that the set of parameters $\{a_i, b_{ij}, c_{ijk}, d_{ijk\ell} : ijk\ell\in \binom{[5]}{4}\}$ is 
        \emph{admissible} if the following set of (possibly strict) linear conditions hold.
    \begingroup
    \allowdisplaybreaks
    \begin{align*}
        b_{ij} \le a_i + a_j &\qquad\text{ for all } ij \in \binom{[5]}{2},\\
        c_{ijk} \le m_{ijk} &\qquad\text{ for all } ijk \in \binom{[5]}{3},\\
        d_{ijk\ell} \le m_{ijk\ell} &\qquad\text{ for all } ijk\ell \in \binom{[5]}{4},\\
        a_i - b_{ij} < 0 &\qquad\text{ for all } ij \in \binom{[5]}{2},\\
        b_{ij} - m_{ijk} < 0 &\qquad\text{ for all } ijk \in \binom{[5]}{3},\\
        c_{ijk} - m_{ijk\ell} < 0 &\qquad\text{ for all } ijk\ell \in \binom{[5]}{4},\\
        -c_{ij\ell} - c_{ik\ell} + b_{i\ell} + b_{j\ell} + b_{k\ell} - a_\ell < 0 &\qquad\text{ for all } ijk\ell \in \binom{[5]}{4}.
    \end{align*}
    \endgroup

    Suppose $G$ is a 4-uniform hypergraph defined on vertex set $V$ containing a $4$-simplex as a sub-hypergraph.
    Let $u_1, u_2, u_3, u_4, u_5$ be the vertices of this 4-simplex.
    We will use \lem{learning_graph_nested_johnson} to define an adaptive learning graph $\G$ that finds
        this $4$-simplex.
    There are $r' = 30$ levels to this walk.

    % Defining each level of this nested quantum walk
    %-----------------------------------------------------------------------------------
    In level $i \in [5]$, we search for vertex $u_i$ using a walk over the Johnson Graph $J(n, n^{a_i})$.
    Let's denote the state of this walk by $A_i \in \P([n], n^{a_i})$.
    Let $V_i = \{v_s : s \in A_i\}$. We say $A_i$ is marked if and only if $u_i \in V_i$.
    In the context of \lem{learning_graph_nested_johnson}, the associated parameters of this level are 
        $n_i = n, k_i = n^{a_i}, \ell_i = 1$ and $I_{y, i} = \{s : v_s = u_i\}$.
    The available set $C(A_1, \dots, A_{i - 1})$ is trivially defined for this level.

    We label the next $10$ levels by pairs of indices $ij \in \binom{[5]}{2}$.
    In level $ij$ where $i < j$, we invoke a quantum walk over the Johnson Graph $J(n^{a_i + a_j}, n^{b_{ij}})$.
    Let $B_{ij} \subseteq \squareb{n^{a_i + a_j}}$ be the state of this walk and let 
        $V_{ij} = \{v_{s_i}v_{s_j} : (s_i, s_j) \in \roundb{A_i\times A_j}\squareb{B_{ij}}\}$ be the associated 
        pairs of vertices.
    We say $B_{ij}$ is marked if it satisfies the following conditions.
    \begin{enumerate}
        \item $u_iu_j\in V_{ij}$,
        \item for all $v_i\in V_i$, we have 
            $n^{b_{ij} - a_i}/2 \le |\curlyb{v_j\in V_j: v_iv_j \in V_{ij}}| \le 2n^{b_{ij} - a_i}$,
        \item for all $v_j\in V_j$, we have 
            $n^{b_{ij} - a_j}/2 \le |\curlyb{v_i\in V_i: v_iv_j \in V_{ij}}| \le 2n^{b_{ij} - a_j}$,
        \item whenever $k$ is an index such that the level $ik$ comes before $ij$ in the nested structure, we have
            $|\curlyb{v_i\in V_i: v_iv_j\in V_{ij}, v_iv_k\in V_{ik}}| \le 11n^{m_{ijk} - b_{jk}}$
            for every $v_j\in V_j, v_k\in V_k$.
    \end{enumerate}
    Note that this is formalized in the learning graph model by taking parameters 
        $n_{ij} = n^{a_i + a_j}, k_{ij} = n^{b_{ij}}, \ell_{ij} = 1$, setting $I_{y, ij} = 
            \{s : (A_i \times A_j)[s] = (s_i, s_j), v_{s_i}v_{s_j} = u_iu_j\}$.
    Define $C(A_1, \dots, B_{ij - 1}) = \curlyb{B_{ij} : \text{ Condition 2, 3, 4 holds for } B_{ij} }$.
    Here, $B_{ij - 1}$ is just denotes the state prior to $B_{ij}$.
    The following lemma is presented in \cite{LGNT16} and shows that the fraction of $B_{ij}$ for which
        condition 2, 3, or 4 doesn't hold is small.
    \begin{lemma} \label{lem:2_edge_small}
        Given marked states $A_1, \dots, B_{ij-1}$, we have
        \begin{multline*}
            \Pr[B_{ij} \not\in C(A_1, \dots, B_{ij-1})]\\
            \le O\roundb{n^{a_i}\exp\roundb{n^{a_i - b_{ij}}} + n^{a_j}\exp\roundb{n^{a_j - b_{ij}}} 
            + n^{a_i + a_j}\exp\roundb{n^{b_{jk} - m_{ijk}}}}.
        \end{multline*}
    \end{lemma}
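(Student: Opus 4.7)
The plan is to bound $\Pr[B_{ij} \not\in C(A_1, \dots, B_{ij-1})]$ by a union bound over the three failure events corresponding to conditions 2, 3, and 4 in the definition of a marked $B_{ij}$. Each event concerns the degree profile of a hypergeometrically distributed subset, so each reduces to a concentration bound from \lem{hg_dist} followed by a union bound over the relevant vertices.

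First I would make the underlying probability space explicit: conditioned on the earlier states $A_1, \dots, A_5$ and the Johnson-walk states preceding $B_{ij}$, the pair-set $V_{ij} \subseteq V_i \times V_j$ is (up to a benign conditioning on containing the certificate pair $u_iu_j$) a uniformly random subset of $V_i \times V_j$ of size $n^{b_{ij}}$, drawn from a ground set of size $n^{a_i + a_j}$. Consequently, any vertex-side count in $V_{ij}$ is a hypergeometric random variable with parameters that are explicit in $n^{a_i}, n^{a_j}, n^{b_{ij}}$.

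For condition 2, I would fix $v_i \in V_i$ and study $X_{v_i} := |\{v_j \in V_j : v_iv_j \in V_{ij}\}|$, which is $\mathrm{HG}\roundb{n^{a_i + a_j}, n^{a_j}, n^{b_{ij}}}$-distributed with mean $\mu = n^{b_{ij} - a_i}$. Applying \lem{hg_dist} with constant deviations $\delta = 1/2$ (for the lower tail, rephrased as an upper tail on $n^{a_j} - X_{v_i}$) and $\delta = 1$ (for the upper tail) yields a failure probability of order $\exp\roundb{n^{a_i - b_{ij}}}$ (in the paper's sign convention) per $v_i$; a union bound over the $n^{a_i}$ vertices of $V_i$ gives the first term. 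Condition 3 is handled symmetrically in $i \leftrightarrow j$ and contributes the second term.

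For condition 4, I would fix $(v_j, v_k) \in V_j \times V_k$ and analyze $Y_{v_j, v_k} := |\{v_i \in V_i : v_iv_j \in V_{ij},\ v_iv_k \in V_{ik}\}|$. Because the level $ik$ is marked and therefore already satisfies its own condition 3, the neighbourhood $N_k := \{v_i \in V_i : v_iv_k \in V_{ik}\}$ has size within a constant factor of $n^{b_{ik} - a_k}$; conditioning on $N_k$, the count $Y_{v_j, v_k}$ is hypergeometric with mean $\Theta\roundb{n^{b_{ik} - a_k} \cdot n^{b_{ij} - a_i - a_j}} = \Theta\roundb{n^{m_{ijk} - b_{jk}}}$. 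Applying part (2) of \lem{hg_dist} with $\delta = 10 > 2e - 1$ produces a deviation bound of order $\exp\roundb{n^{b_{jk} - m_{ijk}}}$ (again in the paper's convention), and a union bound over the pairs $(v_j, v_k)$, combined with the worst case over the admissible index $k$, yields the third term. The main subtlety will be handling the nested conditioning cleanly: $B_{ij}$ is drawn conditional on containing the certificate, and the $B_{ik}$ feeding condition 4 is conditioned on being marked, but each of these restrictions cuts the sample space by only a polynomial factor while the hypergeometric concentration is exponential in $\mu$, so the asymptotics are unaffected and the three bounds sum to the stated estimate.
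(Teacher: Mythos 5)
The paper does not actually prove this lemma: it is cited to \cite{LGNT16}, with a pointer to the proof of \lem{3_edge_small} in the appendix as capturing the idea. Your proposal follows exactly that template (identify each vertex-side count as a hypergeometric variable, apply \lem{hg_dist}, union bound over vertices or pairs, worst-case over $k$ for condition 4), so in spirit it matches the intended argument, and your handling of condition 4 via the degree bound on the marked $B_{ik}$ is the right mechanism for getting the mean $\Theta(n^{m_{ijk}-b_{jk}})$.

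One step would fail as written: the lower-tail halves of conditions 2 and 3. \lem{hg_dist} only provides upper-tail bounds, and your proposed rephrasing as an upper tail on the complement $n^{a_j}-X_{v_i}$ does not give an exponentially small bound in general. The complement has mean $\mu'\approx n^{a_j}$ and the event $X_{v_i}<\mu/2$ corresponds to a relative deviation $\delta\approx n^{b_{ij}-a_i-a_j}/2$, so part (1) of \lem{hg_dist} yields only $\exp\bigl(-\Omega(n^{2b_{ij}-2a_i-a_j})\bigr)$; the exponent $2b_{ij}-2a_i-a_j$ can be negative for admissible parameters (e.g.\ $a_i=a_j=0.9$, $b_{ij}=0.95$), making the bound vacuous. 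What is needed here is a genuine lower-tail hypergeometric bound, $\Pr[X\le(1-\delta)\mu]\le\exp(-\delta^2\mu/2)$, which is standard but is not among the two inequalities stated in \lem{hg_dist}; with that in hand the rest of your argument for conditions 2 and 3 goes through. (You are also right to flag the sign convention: the exponentials in the lemma statement as printed are missing minus signs, since $a_i-b_{ij}<0$ under admissibility would otherwise make the bound trivial.)
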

    We will not restate its proof, but its idea is captured by the proof of \lem{3_edge_small}.
    Provided the set of parameters is admissible, we can take $\alpha(n)$ an exponentially decreasing function.

    The next $10$ levels are labeled by triples of indices $ijk \in \binom{[5]}{3}$.
    In level $ijk$ where $i < j < k$, we quantum walk over the Johnson Graph 
        $J\roundb{11 n^{m_{ijk}}, n^{c_{ijk}}}$.
    Let $C_{ijk} \subseteq \squareb{11 n^{m_{ijk}}}$ be the state of this walk and define
    \begin{align*}
        \Gamma_{ijk} &= \curlyb{(s_i, s_j, s_k) \in A_i\times A_j \times A_k : v_{s_i}v_{s_j} \in V_{ij}, 
            v_{s_i}v_{s_k} \in V_{ik}, v_{s_j}v_{s_k} \in V_{jk}}\\
        V_{ijk} &= \curlyb{v_{s_i}v_{s_j}v_{s_k} : (s_i, s_j, s_k) = \Gamma_{ijk}[s] \text{ for } 
            s \in C_{ijk}, s \le |\Gamma_{ijk}| }.
    \end{align*}
    We say $C_{ijk}$ is marked if it satisfies the following conditions.
    \begin{enumerate}
        \item $u_iu_ju_k \in V_{ij}$,
        
        \item for all $v_jv_k \in V_{jk}$, we have 
            $|\curlyb{v_i\in V_i: v_iv_jv_k \in V_{ijk}}| \le \frac{1}{6} n^{c_{ijk} - b_{jk}}$,

        \item for all $v_iv_k \in V_{ik}$, we have 
            $|\curlyb{v_j\in V_j: v_iv_jv_k \in V_{ijk}}| \le \frac{1}{6} n^{c_{ijk} - b_{ik}}$,

        \item for all $v_iv_j \in V_{ij}$, we have 
            $|\curlyb{v_k\in V_k: v_iv_jv_k \in V_{ijk}}| \le \frac{1}{6} n^{c_{ijk} - b_{ij}}$,

        \item whenever $\ell$ is an index such that the levels $ij\ell, ik\ell$ come before $ijk$ in the nested structure, 
            we have $|\curlyb{v_i\in V_i: v_iv_jv_k\in V_{ijk}, v_iv_jv_\ell\in V_{ij\ell}, v_iv_kv_\ell \in V_{ik\ell}}| 
            \le \frac{1}{11} n^{m_{ijk\ell} - c_{jk\ell}}$ for every $v_j\in V_j, v_k\in V_k, v_\ell \in V_\ell$.
    \end{enumerate}
    In the learning graph model, the associated parameters are 
        $n_{ijk} = 11 n^{m_{ijk}}, k_{ijk} = n^{c_{ijk}},\\ \ell_{ijk} = 1$. 
    We set $I_{y, ijk} = \{s: \Gamma_{ijk}[s] = (s_i, s_j, s_k), v_{s_i}v_{s_j}v_{s_k} = u_iu_ju_k\}$ and define
    \[C(A_1, \dots, C_{ijk - 1}) = \curlyb{C_{ijk} : \text{ Condition 2 to 5 holds for } C_{ijk} }\]
    assuming $A_1, \dots, C_{ijk - 1}$ are marked.
    By condition 4 of the definition of marked $B_{ik}$, we have
    \begin{equation*} 
        |\Gamma_{ijk}| = \sum_{v_jv_k \in B_{jk}} |\curlyb{v \in V_i : vv_{j} \in V_{ij} \text{ and } 
            vv_{k} \in V_{ik}}| \le n^{b_{jk}}\cdot 11 n^{m_{ijk} - b_{jk}} = 11 n^{m_{ijk}}.
    \end{equation*}
    This ensures that the certificate $u_iu_ju_k$ will not overflow and such an index $s$ exists for $I_{y, ijk}$.
    Similarly to \lem{2_edge_small}, we show that the fraction of $C_{ijk}$ for which 
        conditions 2 to 5 don't hold is also small.
    \begin{lemma} \label{lem:3_edge_small}
        Given marked states $A_1, \dots, C_{ijk-1}$, the value
        $\Pr[C_{ijk} \not\in C(A_1, \dots, C_{ijk-1})]$
        is an exponentially close to $0$ with respect to $n$, provided the set of parameters are admissible.
    \end{lemma}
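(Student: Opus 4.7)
The plan is to imitate the (omitted) proof of \lem{2_edge_small}: bound the failure probability of each of conditions~2--5 separately by a hypergeometric tail bound (\lem{hg_dist}, part~2) applied to a suitable ``shadow'' set inside $\Gamma_{ijk}$, then take a union bound over the relevant substructures. Conditioned on $A_1,\dots,C_{ijk-1}$ being marked, the only fresh randomness is the uniform choice of $C_{ijk}\in\binom{[11n^{m_{ijk}}]}{n^{c_{ijk}}}$, so for any $Y\subseteq [11n^{m_{ijk}}]$ the count $|C_{ijk}\cap Y|$ is distributed as $\mathrm{HG}(11n^{m_{ijk}},|Y|,n^{c_{ijk}})$ with mean $\mu_Y=n^{c_{ijk}}|Y|/(11n^{m_{ijk}})$.

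For conditions~2, 3, and 4, I would fix a pair (say $v_jv_k\in V_{jk}$ for condition~2) and let $Y$ be the set of extensions of this pair inside $\Gamma_{ijk}$. Condition~4 of the marked definition of $B_{ij}$ (or symmetrically $B_{ik}$) gives $|Y|\le 11\,n^{m_{ijk}-b_{jk}}$, so $\mu_Y=O(n^{c_{ijk}-b_{jk}})$. Applying \lem{hg_dist}(2) with a sufficiently large constant $\delta$ bounds the probability that $|C_{ijk}\cap Y|$ exceeds $\tfrac{1}{6}n^{c_{ijk}-b_{jk}}$ by $2^{-\Omega(n^{c_{ijk}-b_{jk}})}$, and a union bound over the $O(n^{b_{jk}})$ choices of $v_jv_k$ preserves exponential decay because admissibility ($b_{jk}<m_{ijk}$ combined with $c_{ijk}\le m_{ijk}$) forces the exponent $c_{ijk}-b_{jk}$ to be bounded below by a positive constant. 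Conditions~3 and~4 are obtained by permuting the roles of $i,j,k$.

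For condition~5 the argument has one extra layer. Fix $v_jv_kv_\ell$ and set $S:=\{v_i\in V_i:v_iv_jv_\ell\in V_{ij\ell},\ v_iv_kv_\ell\in V_{ik\ell}\}$. Because $C_{ij\ell}$ and $C_{ik\ell}$ were already marked before $C_{ijk}$, condition~5 of their marked definitions (with the indices suitably rotated) supplies a bound on $|S|$ in terms of the parameters of the already-fixed earlier levels. Then $|\{v_i\in S:v_iv_jv_k\in V_{ijk}\}|$ is again hypergeometric, and the strict admissibility inequalities $c_{ijk}<m_{ijk\ell}$ and $-c_{ij\ell}-c_{ik\ell}+b_{i\ell}+b_{j\ell}+b_{k\ell}-a_\ell<0$ ensure that the threshold $\tfrac{1}{11}n^{m_{ijk\ell}-c_{jk\ell}}$ is dominated by the mean by a polynomial factor, so the tail decays fast enough to survive the union bound over triples $v_jv_kv_\ell$.

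The main obstacle is the bookkeeping: each of the four conditions must be married to the right admissibility inequality, and one must verify a strictly positive exponent in every hypergeometric mean so that the $2^{-\Omega(\poly(n))}$ decay of \lem{hg_dist}(2) survives the polynomial union bound. Condition~5 is the most delicate step, since it aggregates contributions from three $C$-levels and the degree constraints inherited from them, and it is also where the least transparent of the admissibility inequalities (the last one) gets used.
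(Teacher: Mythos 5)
Your outline does follow the paper's proof in structure---conditions 2--4 via one hypergeometric tail per fixed pair plus a polynomial union bound, condition 5 via a layered argument leaning on degree constraints inherited from the already-marked earlier levels, and admissibility supplying strictly positive exponents---but two of your concrete steps break down. For conditions 2--4 you bound the population $|Y|$ of extensions of $v_jv_k$ by the \emph{worst case} allowed by condition 4 of marked $B_{ij}$, namely $11\,n^{m_{ijk}-b_{jk}}$, which only gives $\mu_Y\le n^{c_{ijk}-b_{jk}}$ --- six times the threshold $\tfrac16 n^{c_{ijk}-b_{jk}}$ appearing in condition 2 of marked $C_{ijk}$. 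No choice of $\delta$ in either part of \lem{hg_dist} can bound the probability of exceeding a threshold that may sit \emph{below} the mean. The paper avoids this by computing $\Pr[v_iv_jv_k\in V_{ijk}]=\tfrac1{11}n^{c_{ijk}-m_{ijk}}\cdot n^{b_{ij}-a_i-a_j}\cdot n^{b_{ik}-a_i-a_k}=\tfrac1{11}n^{c_{ijk}-b_{jk}-a_i}$ per vertex $v_i$, i.e.\ using the \emph{typical} size of the extension set rather than the degree cap, so $|S_1|$ has mean $\tfrac1{11}n^{c_{ijk}-b_{jk}}$ and the threshold is $(1+\tfrac56)$ times the mean, exactly what part (1) of \lem{hg_dist} requires. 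Relatedly, your claim that $b_{jk}<m_{ijk}$ together with $c_{ijk}\le m_{ijk}$ forces $c_{ijk}-b_{jk}>0$ is a non sequitur; those two inequalities are compatible with $c_{ijk}\le b_{jk}$.

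For condition 5, your set $S=\{v_i: v_iv_jv_\ell\in V_{ij\ell},\ v_iv_kv_\ell\in V_{ik\ell}\}$ is \emph{not} controlled by condition 5 of the marked definitions of $C_{ij\ell}$ or $C_{ik\ell}$: that condition only bounds triple intersections in which the level being defined appears together with two strictly \emph{earlier} levels, and here the missing third level is $ijk$, which comes after both $ij\ell$ and $ik\ell$. The paper instead runs a three-set chain $S_1'\supseteq S_2\supseteq S_3$: condition 2 of marked $C_{ij\ell}$ deterministically bounds $|S_1'|=|\{v_i: v_iv_jv_\ell\in V_{ij\ell}\}|\le\tfrac16 n^{c_{ij\ell}-b_{j\ell}}$, a first application of \lem{hg_dist}(2) controls $|S_2|$ (your $S$) using the last admissibility inequality for positivity of the exponent, and a second application controls $|S_3|$ using $c_{jk\ell}<m_{ijk\ell}$. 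So condition 5 needs two tail bounds, not one, and the quantity you hoped to take for free from the earlier levels is precisely the one that must be argued probabilistically.
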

    To avoid interrupting the presentation of the algorithm,
    the proofs of this lemma and the lemmas 
    in the rest of this section are presented in \app{4_simplex_lemma_proof}.

    The last $5$ levels are labeled by quadruples of indices $(i,j,k,\ell) \in \binom{[5]}{4}$.
    In level $ijk\ell$ where $i < j < k < \ell$, we invoke a quantum walk over the Johnson Graph 
        $J\roundb{\Theta(n^{m_{ijk\ell}}), n^{d_{ijk\ell}}}$.
    Let $D_{ijk\ell}\subseteq \squareb{\Theta(n^{m_{ijk\ell}})}$ be the state of this walk.
    Define
    \begin{align*}
        \Gamma_{ijk\ell} &= \{(s_i, s_j, s_k, s_\ell) \in A_i\times A_j \times A_k \times A_\ell : 
            v_{s_i}v_{s_j}v_{s_k} \in V_{ijk}, v_{s_i}v_{s_j}v_{s_\ell} \in V_{ij\ell},\\
            & \qquad \qquad \qquad v_{s_i}v_{s_k}v_{s_\ell} \in V_{ik\ell}, v_{s_j}v_{s_k}v_{s_\ell} \in V_{jk\ell}
        \}\\
        V_{ijk\ell} &= \curlyb{v_{s_i}v_{s_j}v_{s_k}v_{s_\ell} : (s_i, s_j, s_k, s_\ell) = \Gamma_{ijk\ell}[s] \text{ for } 
            s \in D_{ijk\ell}, s \le |\Gamma_{ijk\ell}| }.
    \end{align*}
    We say that $D_{ijk\ell}$ is marked if and only if $u_iu_ju_ku_\ell \in V_{ijk\ell}$.
    In the learning graph, the corresponding parameters are $n_{ijk\ell} = \Theta(n^{m_{ijk\ell}}), k_{ijk\ell} = n^{d_{ijk\ell}},
        \ell_{ijk\ell} = 1$.
    We set $I_{y, ijk\ell} = \{s\}$ where $\Gamma_{ijkl}[s] = (s_i, s_j, s_k, s_\ell)$ and 
        $v_{s_i}v_{s_j}v_{s_k}v_{s_\ell} = u_iu_ju_ku_\ell$.
    Assuming $A_1, \dots, D_{ijk\ell - 1}$ are marked.
    By condition 7 of the definition of marked $C_{ij\ell}$, we have 
    \begin{align*} 
        |\Gamma_{ijk\ell}| &= \sum_{\substack{(s_j, s_k, s_\ell) = \Gamma_{jk\ell}[s]\\ s \in C_{jk\ell}}} 
            \Big|\curlyb{v \in V_i : vv_{s_j}v_{s_k} \in V_{ijk}, vv_{s_j}v_{s_\ell} \in V_{ij\ell}, 
            vv_{s_k}v_{s_\ell}\in V_{ik\ell}} \Big|\\ 
            &\le n^{c_{jk\ell}}\cdot \frac{1}{11} n^{m_{ijk\ell} - c_{jk\ell}} 
        \\&= \Theta\roundb{n^{m_{ijk\ell}}}.
    \end{align*}
    This ensures that the certificate $u_iu_ju_ku_\ell$ will not overflow and such an index $s$ exists for $I_{y, ijk\ell}$.
    $C(A_1, \dots, D_{ijk\ell - 1})$ is trivially defined for this level.

    % Define the Data Structure associated with the entire nested QW
    %-----------------------------------------------------------------------------------
    Let $\bA = \roundb{A_1, \dots, A_5, B_{12}, \dots, B_{45}, C_{123}, \dots, C_{345}, 
        D_{1234}, \dots, D_{2345}}$ be the sequence of states.
        The associated data structure is given by
        $\displaystyle D(\bA) := \bigcup_{ijk\ell\in \binom{[5]}{4}}V_{ijk\ell}$.
    It is important to note that a state in $\P([n_i], k_i)$ may only be partially filled.
    If any of the entries in the $A_i, B_{ij}, C_{ijk}, D_{ijk\ell}$ necessary to identify 
        a quadruple in $V_{ijk\ell}$ is missing, this quadruple will not be listed in $V_{ijk\ell}$.
    The cost of setup is $\displaystyle \bS \le \sum_{ijk\ell \in \binom{[5]}{4}}n^{d_{ijk\ell}}$.
    Once we have the query information in $D(\bA)$, it is trivial to check if $u_1, \dots, u_5$ form a $4$-simplex.
    Thus, the cost of checking $\bC$ is $0$ and it remains to find and justify the update costs based on the 
        size of the tuple of vertices we are loading.

\subsection{Analyzing the algorithm}
    % Analyse the update cost of loading vertices 
    %-----------------------------------------------------------------------------------
    In update stage $i \in [5]$, we start with beginning L-vertex $\bA = (A_1, \dots, D_{2345})$ where 
        $|A_i| = n^{a_i} - 1$.
    If we are loading $s \not\in A_i$ to $A_i$, the queries needed to update the data structure are precisely 
        the number of newly identifiable quadruples in $V_{ijk\ell}$ due to loading $s$.
    The following lemma identifies the update costs.
    \begin{lemma} \label{lem:1_edge_update}
        Let $\Gamma'_{ijk}, V'_{ijk}$ be the sets $\Gamma_{ijk}, V_{ijk}$ obtained after loading a random element 
            $s$ to $A_i$. Then 
        \[\Exp_{\bA, s} |\Gamma'_{ijk\ell} - \Gamma_{ijk\ell}| = O(n^{m_{ijk\ell} - a_i})
            \text{ and } \Exp_{\bA, s} |V'_{ijk\ell} - V_{ijk\ell}| = O(n^{d_{ijk\ell} - a_i}).\]
    \end{lemma}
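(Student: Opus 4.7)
The plan is to prove both bounds via a symmetry/averaging argument, exploiting that drawing $\bA$ with $|A_i| = n^{a_i}-1$ and then loading a uniformly random $s \in [n]\setminus A_i$ is equivalent to drawing the final configuration $\bA' := \bA \cup\{s\}$ (with $|A_i'| = n^{a_i}$) and then designating $s$ as a uniformly random element of $A_i'$.

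First I would argue that when $A_i$ grows by a single element $s$, with all other $A_j$ and all $B, C, D$ states unchanged, every new tuple in $\Gamma'_{ijk\ell} \setminus \Gamma_{ijk\ell}$ must have its $i$-th component equal to $s$. The reason is that under the append-style enumeration of ordered partial subsets, the pair and triple sets $V_{ij}, V_{ik}, V_{i\ell}, V_{ijk}, V_{ij\ell}, V_{ik\ell}$ grow monotonically with the new entries all involving $v_s$, while $V_{jk}, V_{j\ell}, V_{k\ell}, V_{jk\ell}$ are unaffected. Conditioning on $\bA'$ and averaging over which element of $A_i'$ is designated as the last loaded yields
\[
\Exp_{\bA, s} |\Gamma'_{ijk\ell} \setminus \Gamma_{ijk\ell}| \;=\; \Exp_{\bA'} \frac{|\Gamma'_{ijk\ell}|}{n^{a_i}}.
\]
Combined with the deterministic upper bound $|\Gamma'_{ijk\ell}| = O(n^{m_{ijk\ell}})$, which is exactly the computation displayed just before the lemma (using the degree condition in the definition of marked $C_{jk\ell}$), this gives the first claim.

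For the second claim I would apply identical reasoning to $V_{ijk\ell}$: each element of $V'_{ijk\ell} \setminus V_{ijk\ell}$ corresponds to a new $\Gamma'_{ijk\ell}$ tuple whose $i$-th component is $s$ (and whose index into $D_{ijk\ell}$ is now valid), so by the same symmetry
\[
\Exp_{\bA, s} |V'_{ijk\ell} \setminus V_{ijk\ell}| \;\le\; \Exp_{\bA'} \frac{|V'_{ijk\ell}|}{n^{a_i}} \;\le\; \frac{n^{d_{ijk\ell}}}{n^{a_i}},
\]
using the trivial bound $|V'_{ijk\ell}| \le |D_{ijk\ell}| = n^{d_{ijk\ell}}$.

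The main obstacle is that $|\Gamma'_{ijk\ell}| = O(n^{m_{ijk\ell}})$ only holds when the lower-level states $B_{ij}, C_{ijk}$ satisfy the marked-state degree conditions, whereas the expectation in \lem{learning_graph_nested_johnson} is a priori taken over the full support of admissible configurations. The $\alpha$-symmetric framework of the previous section resolves this: \lem{2_edge_small} and \lem{3_edge_small} ensure that configurations falling outside the availability function $C(\cdot)$ occur with only exponentially small probability in $n$, so the $\alpha$-symmetric stages restrict the flow to configurations where the $|\Gamma'_{ijk\ell}|$ bound holds deterministically, and the exceptional contribution to the expectation is absorbed into the big-$O$.
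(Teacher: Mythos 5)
Your proof is correct, but it takes a genuinely different route from the paper's. The paper bounds $\Exp_{\bA,s}|\Gamma'_{ijk\ell}-\Gamma_{ijk\ell}|$ head-on: it observes that every new tuple has $i$-th component $v_s$, then sums over the $\Theta(n^{c_{jk\ell}})$ triples $v_jv_kv_\ell\in V_{jk\ell}$ the probability that the three remaining faces through $v_s$ are all present, reusing the conditional-probability computations \eq{s1_prob}--\eq{s3_prob} from the proof of \lem{3_edge_small}; this yields $\Theta(n^{c_{jk\ell}})\cdot\Theta(n^{c_{ij\ell}-b_{j\ell}-a_i})\cdot\Theta(n^{c_{ik\ell}-b_{k\ell}-b_{i\ell}+a_\ell})\cdot\Theta(n^{c_{ijk}-m_{ijk}})=\Theta(n^{m_{ijk\ell}-a_i})$ directly over the uniform distribution of states, with no appeal to the degree conditions for this particular step. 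You instead use the exchangeability of ``last element loaded into $A_i$'' to rewrite the expectation as $\Exp_{\bA'}|\Gamma'_{ijk\ell}|/n^{a_i}$ and then plug in the size bound $|\Gamma'_{ijk\ell}|=O(n^{m_{ijk\ell}})$, which is conditional on the degree constraints holding, with the exceptional event controlled by \lem{2_edge_small}/\lem{3_edge_small} and the trivial bound $|\Gamma'_{ijk\ell}|\le n^4$. Your route is shorter, makes the ``total size divided by $|A_i|$'' structure of the answer transparent, and gives an even more direct second bound via $|V'_{ijk\ell}|\le|D_{ijk\ell}|=n^{d_{ijk\ell}}$ (the paper instead multiplies the $\Gamma$ bound by the density $\Theta(n^{d_{ijk\ell}-m_{ijk\ell}})$). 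What the paper's computation buys is independence from the marked-state conditioning --- useful since the $\bU_i$ hypotheses of \lem{learning_graph_nested_johnson} are expectations over uniformly random states for $0$-inputs --- and a matching $\Theta$ rather than just $O$. Both arguments share the same unstated assumption that the enumerations of $\Gamma_{ijk\ell}$ and of the products $A_i\times A_j$ are stable (append-only) under loading a new element, so that $\Gamma'_{ijk\ell}\setminus\Gamma_{ijk\ell}$ and $V'_{ijk\ell}\setminus V_{ijk\ell}$ consist exactly of objects involving $v_s$; you at least flag this explicitly, which the paper does not.
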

    
    By the above lemma, we can conclude that
    \begin{align*}
        \bU_i &= O\roundb{\Exp_{\bA, s} |D(\dots, A_i\cup \{s\}, \dots, D_{2345}) - D(\dots, A_i, \dots, D_{2345})|}\\
        &= O\roundb{ \sum_{j, k, \ell: ijk\ell \in \binom{[5]}{4}} \Exp_{\bA, s}|V'_{ijk\ell} - V_{ijk\ell}|} 
        \\&= O\roundb{ \sum_{j, k, \ell: ijk\ell \in \binom{[5]}{4}} n^{d_{ijk\ell} - a_i}}.
    \end{align*}

    % Analyse the update cost of loading 2-edges
    %-----------------------------------------------------------------------------------
    In update stage $ij \in \binom{[5]}{2}$ where $i < j$, we start with beginning L-vertex $\bA = (A_1, \dots, D_{2345})$ 
        where $|B_{ij}| = n^{b_{ij}} - 1$.
    If we are loading $s$ to $B_{ij}$, we are again looking for the newly identifiable quadruples in 
        $V_{ijk\ell}$ due to loading $s$.
    \begin{lemma} \label{lem:2_edge_update}
        Let $\Gamma'_{ijk\ell}, V'_{ijk\ell}$ be the set $\Gamma_{ijk\ell}, V_{ijk\ell}$ obtained after loading index $s$ to 
            $B_{ij}$. Then 
        \[\Exp_{\bA, v} |\Gamma'_{ijk\ell} - \Gamma_{ijk\ell}| = O(n^{m_{ijk\ell} - b_{ij}})
            \text{ and } \Exp_{\bA, v} |V'_{ijk\ell} - V_{ijk\ell}| = O(n^{d_{ijk\ell} - b_{ij}}).\]
    \end{lemma}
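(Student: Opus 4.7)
The plan is to prove both bounds via an exchangeability argument parallel to the one used for \lem{1_edge_update}. First I would observe that, by symmetry of the uniform sampling, drawing $B_{ij}$ of size $n^{b_{ij}}-1$ and then $s$ uniformly from its complement has the same joint distribution as drawing $B'_{ij}=B_{ij}\cup\{s\}$ uniformly of size $n^{b_{ij}}$ and then $s$ uniformly from $B'_{ij}$. Under this reformulation, I would identify $V'_{ijk\ell} - V_{ijk\ell}$ with the set of quadruples of $V'_{ijk\ell}$ that ``blame'' the singled-out pair $s$. Here one uses monotonicity: appending a pair to $V_{ij}$ only appends new triples to $\Gamma_{ijk}$ and $\Gamma_{ij\ell}$, hence only appends new quadruples to $\Gamma_{ijk\ell}$, so $V_{ijk\ell}\subseteq V'_{ijk\ell}$.

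Next I would use a blame decomposition: each quadruple $v_{s_i}v_{s_j}v_{s_k}v_{s_\ell}\in V'_{ijk\ell}$ is blamed on the \emph{unique} element of $B'_{ij}$ corresponding to its $(s_i,s_j)$-pair, since removing exactly that pair from $B'_{ij}$ destroys this quadruple and no other pair's removal does. Summing per-pair blame counts over all $s'\in B'_{ij}$ therefore recovers $|V'_{ijk\ell}|$ exactly, so by symmetry
\[ \Exp_{s}|V'_{ijk\ell} - V_{ijk\ell}| \;=\; \frac{|V'_{ijk\ell}|}{n^{b_{ij}}}. \]
Taking expectation over $\bA$ and applying the a priori walk-size bound $|V'_{ijk\ell}|\le|D_{ijk\ell}|=n^{d_{ijk\ell}}$ yields the desired $O(n^{d_{ijk\ell}-b_{ij}})$. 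The corresponding estimate for $|\Gamma'_{ijk\ell}-\Gamma_{ijk\ell}|$ follows by the same argument applied to $\Gamma_{ijk\ell}$, using the bound $|\Gamma_{ijk\ell}|=O(n^{m_{ijk\ell}})$ that was established during the construction via condition 7 of marked $C_{ij\ell}$.

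The main technical obstacle will be verifying monotonicity and the cleanness of the blame decomposition. One must check that the ordered enumerations $\Gamma_{ijk}[\cdot]$ and $\Gamma_{ijk\ell}[\cdot]$ that index $V_{ijk}$ and $V_{ijk\ell}$ are stable under appending new entries---otherwise adding a pair to $B_{ij}$ could reshuffle indices and eject a quadruple that was previously in $V_{ijk\ell}$, breaking $V_{ijk\ell}\subseteq V'_{ijk\ell}$. If the enumerations are taken in a canonical (e.g.\ lexicographic) ordering fixed in advance, monotonicity is immediate, and the remaining bookkeeping reduces to the same style of argument used in the proofs of \lem{2_edge_small}, \lem{3_edge_small}, and \lem{1_edge_update} deferred to \app{4_simplex_lemma_proof}.
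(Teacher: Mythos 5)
Your proof is correct in outline but takes a genuinely different route from the paper. The paper bounds $\Exp_{\bA,v}|\Gamma'_{ijk\ell}-\Gamma_{ijk\ell}|$ by direct computation: it averages over the $n^{a_i+a_j}$ choices of the newly loaded pair, upper-bounds the new quadruples by the pairs $v_kv_\ell\in V_{k\ell}$ satisfying the four triple-membership conditions, and factorizes the joint membership probability (as in the displayed probability computations \`a la (\ref{eqn:s1_prob})--(\ref{eqn:s3_prob})), obtaining $\Theta(n^{m_{ijk\ell}-b_{ij}})$ before converting to the $V$ bound via the $\Theta(n^{d_{ijk\ell}-m_{ijk\ell}})$ selection probability. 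Your exchangeability-plus-blame argument replaces all of this with the identity $\Exp_{s}|V'_{ijk\ell}-V_{ijk\ell}|=|V'_{ijk\ell}|/n^{b_{ij}}$, which is cleaner and, for the $V$ bound, needs only the trivial cardinality bound $|V_{ijk\ell}|\le|D_{ijk\ell}|=n^{d_{ijk\ell}}$. Two caveats. First, for the $\Gamma$ bound you invoke the deterministic bound $|\Gamma_{ijk\ell}|=O(n^{m_{ijk\ell}})$, which holds only when the relevant $C$-states are marked (satisfy the degree constraints); the expectation in \lem{learning_graph_nested_johnson}'s update condition ranges over uniformly random states, so you must either redo essentially the paper's probability computation to get $\Exp|\Gamma'_{ijk\ell}|=O(n^{m_{ijk\ell}})$, or combine the deterministic bound with the exponentially small failure probability of the degree constraints against the crude worst case $|\Gamma_{ijk\ell}|\le n^{a_i+a_j+a_k+a_\ell}$. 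Second, your suggested fix for enumeration stability is wrong as stated: a lexicographic ordering of $\Gamma_{ijk}$ is \emph{not} stable, since inserting a new triple in the middle shifts the indices of existing triples and can eject elements from $V_{ijk}$. The stability you need is exactly what the paper's ordered-tuple labels $\P(X,k)$ provide --- index $\Gamma_{ijk}$ by the positions of the contributing entries within the ordered states $B_{ij},B_{ik},B_{jk}$, so that filling a $\star$ appends rather than reshuffles. With those two repairs your argument goes through and matches the claimed bounds.
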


    The above lemma shows that
    \begin{align*}
        \bU_{ij} &= O\roundb{\Exp_{\bA, v} |D(\dots, B_{ij}\cup \{t_it_j\}, \dots) - D(\dots, B_{ij}, \dots)|}\\
        &= O\roundb{ \sum_{k, \ell: ijk\ell \in \binom{[5]}{4}} \Exp_{\bA, t_it_j}|V'_{ijk\ell} - V_{ijk\ell}|} 
        \\&= O\roundb{ \sum_{k, \ell: ijk\ell \in \binom{[5]}{4}} n^{d_{ijk\ell} - b_{ij}}}.
    \end{align*}

    % Analyse the update cost of loading 3-edges
    %-----------------------------------------------------------------------------------
    In update stage $ijk \in \binom{[5]}{3}$ where $i < j < k$, we start with begining L-vertex 
        $\bA = (A_1, \dots, D_{2345})$ where $|C_{ijk}| = n^{c_{ijk}} - 1$.
    If we are loading the triple $v_iv_jv_k$ to $C_{ijk}$, we look for newly identifiable quadruples in 
        $V_{ijk\ell}$ due to loading $v_iv_jv_k$.
    \begin{lemma} \label{lem:3_edge_update}
        Let $\Gamma'_{ijk\ell}, V'_{ijk\ell}$ be the set
        $\Gamma_{ijk\ell}, V_{ijk\ell}$ obtained after loading a random triple 
            $v_iv_jv_k$ to $C_{ijk}$. Then 
        \[\Exp_{\bA, v} |\Gamma'_{ijk\ell} - \Gamma_{ijk\ell}| = O(n^{m_{ijk\ell} - c_{ijk}})
            \text{ and } \Exp_{\bA, v} |V'_{ijk\ell} - V_{ijk\ell}| = O(n^{d_{ijk\ell} - c_{ijk}}).\]
    \end{lemma}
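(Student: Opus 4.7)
The plan is to follow the template of the proofs of \lem{1_edge_update} and \lem{2_edge_update}, now adapted to the triple-loading setting for level $C_{ijk}$. A single update step in stage $ijk$ loads a uniformly random index $s \in [11 n^{m_{ijk}}] \setminus C_{ijk}$ to $C_{ijk}$. If $s \le |\Gamma_{ijk}|$, then the triple $(v_i, v_j, v_k) = \Gamma_{ijk}[s]$ gets added to $V_{ijk}$; otherwise nothing new appears in $\Gamma_{ijk\ell}$ or $V_{ijk\ell}$. My first task is to observe that a new quadruple $(v_i, v_j, v_k, v_\ell)$ enters $\Gamma_{ijk\ell}$ precisely when $v_\ell \in V_\ell$ satisfies $v_iv_jv_\ell \in V_{ij\ell}$, $v_iv_kv_\ell \in V_{ik\ell}$, and $v_jv_kv_\ell \in V_{jk\ell}$; denote the number of such $v_\ell$ by $N_{v_iv_jv_k}$.

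The key counting identity will be $|\Gamma_{ijk\ell}| = \sum_{(v_i,v_j,v_k) \in V_{ijk}} N_{v_iv_jv_k}$, whence the average value of $N$ across the triples currently in $V_{ijk}$ equals $|\Gamma_{ijk\ell}|/|V_{ijk}|$. Since $C_{ijk}$ is a uniformly random subset of $[11n^{m_{ijk}}]$, the triples populating $V_{ijk}$ form an essentially uniform random sample from $\Gamma_{ijk}$, so the $N$-value of the newly added triple matches this average in expectation. Plugging in the algorithm-level bound $|\Gamma_{ijk\ell}| = O(n^{m_{ijk\ell}})$ (which was derived in the algorithm description from condition 5 of the marked definition of $C_{jk\ell}$) together with $|V_{ijk}| \approx n^{c_{ijk}}$ will yield $\Exp|\Gamma'_{ijk\ell} - \Gamma_{ijk\ell}| = O(n^{m_{ijk\ell} - c_{ijk}})$.

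For the second equality I will exploit that $V_{ijk\ell}$ indexes $\Gamma_{ijk\ell}$ through $D_{ijk\ell}$, which during update stage $ijk$ is in its setup state: a uniformly random subset of $[\Theta(n^{m_{ijk\ell}})]$ of size $n^{d_{ijk\ell}} - 1$. By symmetry, $D_{ijk\ell}$ is independent of which positions in the canonical ordering of $\Gamma_{ijk\ell}$ are occupied by the newly added quadruples, so each such new quadruple lies in $V_{ijk\ell}$ with probability $\Theta(n^{d_{ijk\ell} - m_{ijk\ell}})$. Multiplying then gives $\Exp|V'_{ijk\ell} - V_{ijk\ell}| = O(n^{m_{ijk\ell} - c_{ijk}} \cdot n^{d_{ijk\ell} - m_{ijk\ell}}) = O(n^{d_{ijk\ell} - c_{ijk}})$.

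The main obstacle will be making precise the informal ``on average'' statement that the $N$-value of a uniformly chosen new triple from $\Gamma_{ijk} \setminus V_{ijk}$ matches, up to constants, the average $|\Gamma_{ijk\ell}|/|V_{ijk}|$. This amounts to arguing that $V_{ijk}$ is an unbiased sample of $\Gamma_{ijk}$ with respect to the quantity $N$, which follows from the uniform-random character of $C_{ijk}$ together with the independence of the canonical ordering of $\Gamma_{ijk}$ from the extension structure to $\Gamma_{ijk\ell}$. A parallel symmetry argument handles the independence of the ordering of $\Gamma_{ijk\ell}$ from $D_{ijk\ell}$ used in the second part. Both points should reduce to essentially the same symmetry considerations already handled in the proofs of \lem{1_edge_update} and \lem{2_edge_update}, so I expect the argument to be a routine adaptation.
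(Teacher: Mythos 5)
Your proposal takes a genuinely different route from the paper. The paper's proof is a direct first-moment computation: it averages over the loaded triple $s_is_js_k\in\Gamma_{ijk}$, upper bounds the number of new quadruples by the number of $v_\ell\in V_\ell$ completing that triple, and evaluates $\sum_{v_\ell}\Pr[v_{s_i}v_{s_j}v_\ell\in V_{ij\ell},\,v_{s_i}v_{s_k}v_\ell\in V_{ik\ell},\,v_{s_j}v_{s_k}v_\ell\in V_{jk\ell}]$ via the chain of conditional probabilities already established in the proof of \lem{3_edge_small}, obtaining $\Theta(n^{m_{ijk\ell}-c_{ijk}})$ directly. You instead propose a counting identity $|\Gamma_{ijk\ell}|=\sum_{\gamma\in V_{ijk}}N_\gamma$ plus an exchangeability argument equating the new triple's expected $N$-value with the average over $V_{ijk}$.

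The exchangeability step is fine in expectation, but there is a genuine gap in what you feed into it. The bound $|\Gamma_{ijk\ell}|=O(n^{m_{ijk\ell}})$ you invoke is derived in the algorithm description from condition 5 of the \emph{marked} states at the earlier $C$-levels; it is a property of states carrying positive flow for a $1$-input. The update cost $\bU_{ijk}$ in \lem{learning_graph_nested_johnson} is an expectation over \emph{uniformly random, unconditioned} states $\bA$ for a $0$-input $x$, where markedness cannot hold (there is no certificate) and the degree constraints are not enforced; the deterministic bound is therefore not available. To repair this you would have to prove $\Exp_{\bA}[|\Gamma_{ijk\ell}|]=O(n^{m_{ijk\ell}})$ over unconditioned states, and the only way to do that is the very product-of-membership-probabilities computation your shortcut is meant to avoid (or else a separate high-probability argument for the degree conditions over random states, plus a worst-case tail bound, neither of which you supply). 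A secondary issue is that $\Exp[|\Gamma_{ijk\ell}|/|V_{ijk}|]$ is a ratio of correlated random variables --- $|V_{ijk}|$ is the number of indices of $C_{ijk}$ that land below $|\Gamma_{ijk}|$, so it shrinks exactly when $\Gamma_{ijk}$ does --- and replacing it by a ratio of expectations, or by $|\Gamma_{ijk\ell}|/n^{c_{ijk}}$, needs justification. The paper's direct computation avoids both problems, so I would recommend following it.
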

    
    The above lemma shows that
    \begin{align*}
        \bU_{ijk} &= O\roundb{\Exp_{\bA, v} |D(\dots, C_{ijk}\cup \{s_is_js_k\}, \dots) - D(\dots, C_{ijk}, \dots)|}\\
        &= O\roundb{ \sum_{\ell: ijk\ell \in \binom{[5]}{4}} \Exp_{\bA, s_is_js_k}|V'_{ijk\ell} - V_{ijk\ell}|} 
        \\&= O\roundb{ \sum_{\ell: ijk\ell \in \binom{[5]}{4}} n^{d_{ijk\ell} - c_{ijk}}}.
    \end{align*}

    % Analyse the update cost of loading 4-edges and conclude the proof
    %-----------------------------------------------------------------------------------
    Finally, in update stage $ijk\ell \in \binom{[5]}{4}$ where $i < j < k < \ell$, the cost of update is at most $\bU_{ijk\ell} = 1$.
    By \lem{learning_graph_nested_johnson}, the query complexity of this learning graph is \begin{equation*}
        O\roundb{\bS + \sum_{i = 1}^{30}\roundb{\prod_{j=1}^i \sqrt{\frac{n_j}{k_j}}}\cdot \sqrt{k_i} \cdot \bU_i}.
    \end{equation*}
    We summarize the parameters that appear in the above complexity in \tab{4_simplex_complexity}.
    \begin{table}\begin{center}\begin{tabular}{| c | c | c | c | c |}
        \hline
        level $s$ & $i$ & $ij$ & $ijk$ & $ijk\ell$\\ \hline
        $n_s$ & $n$ & $n^{a_i + a_j}$ & $\Theta\roundb{n^{m_{ijk}}}$ & $\Theta\roundb{n^{m_{ijk\ell}}}$\\ \hline 
        $k_s$ & $n^{a_i}$ & $n^{b_{ij}}$ & $n^{c_{ijk}}$ & $n^{d_{ijk\ell}}$\\ \hline 
        $U_s$ & $\displaystyle O\roundb{\max_{j, k, \ell}n^{ d_{ijk\ell} - a_i}}$ & 
            $\displaystyle O\roundb{\max_{k, \ell}n^{ d_{ijk\ell} - b_{ij}}}$ & 
            $\displaystyle O\roundb{\max_{\ell}n^{ d_{ijk\ell} - c_{ijk}}}$ & $1$\\ \hline
    \end{tabular}\end{center}
    \caption[
        Parameters and update complexities of nested quantum walk learning graph for $4$-simplex finding.
    ]{
        Parameters and quantum query complexities of update for each level of the nested quantum walk learning graph 
        for $4$-simplex finding.
    }
    \label{tab:4_simplex_complexity}
    \end{table}

    Optimizing a linear program involving parameters $a_i, b_{ij}, c_{ijk}, d_{ijk\ell}$, the optimal complexity 
        comes down to $O(n^{2.455})$ by taking (approximate) parameter values
        \begin{align*}
            &a_1 = 0.30435, \quad a_2 = 0.65217, \quad a_3 = 0.82609, \quad a_4 = 0.91304, 
                \quad a_5 = 0.95652,\\
            &b_{12} = 0.95652, \quad b_{13} = 1.13043, \quad b_{14} = 1.21739, \quad b_{15} = 1.16579, 
                \quad b_{23} = 1.45059,\\
            &b_{24} = 1.45059, \quad b_{25} = 1.54567, \quad b_{34} = 1.49802, \quad b_{35} = 1.64032,  
                \quad b_{45} = 1.75494,\\
            &c_{123} = 1.75494, \quad c_{124} = 1.75494, \quad c_{125} = 1.75494, \quad c_{134} = 1.80237,
                \quad c_{135} = 1.84958,\\
            &c_{145} = 1.87440, \quad c_{234} = 1.95477, \quad c_{235} = 2.04985, \quad c_{245} = 2.13966,
                \quad c_{345} = 2.07817,\\
            &d_{1234} = 2.25911, \enspace  d_{1235} = 2.25911, \enspace  d_{1245} = 2.25911, \enspace  
                d_{1345} = 2.16864, \enspace  d_{2345} = 2.13966.\qedhere
        \end{align*}
This concludes the proof of \thm{4_simplex_algo}.

\if\anon1
\else
\section*{Acknowledgements}
We thank Richard Cleve and Ashwin Nayak for helpful comments.

This research is supported in part by the Natural Sciences and Engineering Research Council of Canada (NSERC), DGECR-2019-00027 and RGPIN-2019-04804, as well as a CGS-M award.\footnote{Cette recherche a été financée par le Conseil de recherches en sciences naturelles et en génie du Canada (CRSNG),
DGECR-2019-00027 et RGPIN-2019-04804.}
\fi

\phantomsection\addcontentsline{toc}{section}{References} %Adds References to TOC
\renewcommand{\UrlFont}{\ttfamily\small}
\let\oldpath\path
\renewcommand{\path}[1]{\small\oldpath{#1}}
\emergencystretch=1em % Helps remove overfull hboxes in bibliography
\printbibliography

\appendix
\section{Proofs of lemmas for 4-simplex finding}
\label{app:4_simplex_lemma_proof}

In this appendix, we will prove the lemmas that appears in the proof of
    \thm{4_simplex_algo}.

\begin{proof}[Proof of \lem{3_edge_small}]
    Fix $v_jv_k \in V_{jk}$, and define the set
    $S_1 = \curlyb{v_i \in V_i : v_iv_jv_k \in V_{ijk}}$.
    For any $v_i \in V_i$, we have \begin{align} \label{eqn:s1_prob}
        \Pr[v_iv_jv_k \in V_{ijk}] &= \Pr[v_iv_jv_k \in V_{ijk} | v_iv_j \in V_{ij}, v_iv_k \in V_{ik}]
            \Pr[v_iv_j \in V_{ij}]\Pr[v_iv_k \in V_{ik}] \nonumber\\
        &= \frac{1}{11} n^{c_{ijk} - m_{ijk}}\cdot n^{b_{ij} - a_i - a_j}\cdot n^{b_{ik} - a_i - a_k}
        = \frac{1}{11} n^{c_{ijk} - b_{jk} - a_i}.
    \end{align}
    Therefore, $|S_1|$ is a random variable with hypergeometric distribution 
        $\mathrm{HG}(n^{a_i+b_{jk}}, n^{a_i}, n^{c_{ijk}}/11)$.
    It has mean value $\frac{1}{11} n^{c_{ijk} - b_{jk}}$ and by \lem{hg_dist} (1), \begin{equation}
        \Pr\squareb{|S_1| \ge \frac{1}{6} n^{c_{ijk} - b_{jk}}} \le \exp\roundb{-\frac{1}{55}n^{c_{ijk} - b_{jk}}}.
    \end{equation}
    We can prove a similar bound for conditions 3 and 4.
    By the union bound, we see that \begin{multline} \label{eqn:cond_2_3_4}
        \Pr[\text{Conditions } 2, 3, 4 \text{ don't hold}] \le \\
            n^{b_{jk}}\exp\roundb{-\frac{1}{55}n^{c_{ijk} - b_{jk}}} + 
            n^{b_{ik}}\exp\roundb{-\frac{1}{55}n^{c_{ijk} - b_{ik}}} + 
            n^{b_{ij}}\exp\roundb{-\frac{1}{55}n^{c_{ijk} - b_{ij}}}.
    \end{multline}

    For condition 5, fix vertices $v_j \in V_j, v_k \in V_k, v_\ell\in V_\ell$ such that 
        $v_jv_k \in V_{jk}, v_jv_\ell \in V_{j\ell},\\ v_kv_\ell \in V_{k\ell}$.
    Define $S'_1 := \curlyb{v_i \in V_i : v_iv_jv_\ell \in V_{ij\ell}}$ and 
        $S_2 := \curlyb{v_i \in S'_1 : v_iv_kv_\ell \in V_{ik\ell}}$.
    Note that for any $v_i \in V_i$, we have \begin{align} \label{eqn:s2_prob}
        \Pr[v_iv_kv_\ell \in V_{ik\ell} | v_i \in S'_1] &= \Pr[v_iv_kv_\ell \in V_{ik\ell} | v_iv_\ell \in V_{i\ell}, 
            v_iv_k \in V_{ik}]\Pr[v_iv_k \in V_{ik}] \nonumber\\
        &= \frac{1}{11} n^{c_{ik\ell} - m_{ik\ell}}\cdot n^{b_{ik} - a_i - a_k}
        = \frac{1}{11} n^{c_{ik\ell} - b_{kl} - b_{i\ell} + a_\ell}.
    \end{align}
    Hence $|S_2|$ follows the distribution 
        \[\mathrm{HG}\roundb{|S'_1|\cdot |\Gamma_{jk\ell}|,\ |S'_1|,\ \frac{1}{11} |S'_1|\cdot |\Gamma_{jk\ell}|
        \cdot n^{c_{ik\ell} - b_{k\ell} - b_{i\ell} + a_\ell}}.\]
    It has mean $\frac{1}{11} |S'_1| n^{c_{ik\ell} - b_{k\ell} - b_{i\ell} + a_\ell}$. 
    Since we assume $C_{ij\ell}$ is marked, we have $|S'_1| \le \frac{1}{6}n^{c_{ij\ell}-b_{j\ell}}$.
    Applying \lem{hg_dist} (2) with $\delta = \bigslash{n^{c_{ij\ell} - b_{j\ell}}}{|S'_1|} - 1 > 2e - 1$, we have
        \begin{equation} \label{eqn:s2_bound}
        \Pr\squareb{|S_2| > \frac{1}{11}n^{c_{ij\ell} + c_{ik\ell} - b_{i\ell} - b_{j\ell} - b_{k\ell} + a_\ell}}
            \le \exp\roundb{-\frac{\log 2}{11}n^{c_{ij\ell} + c_{ik\ell} - b_{i\ell} - b_{j\ell} - b_{k\ell} + a_\ell}}.
        \end{equation}
    Finally, define \begin{equation*}
        S_3 := \curlyb{v_i\in V_i: v_iv_jv_k\in V_{ijk}, v_iv_jv_\ell\in V_{ij\ell}, v_iv_kv_\ell \in V_{ik\ell}} 
        = \curlyb{v_i \in S_2 : v_iv_jv_k \in V_{ijk}}.
    \end{equation*}
    For any $v_i \in V_i$, we have \begin{align} \label{eqn:s3_prob}
        \Pr\squareb{v_iv_jv_k \in V_{ijk} | v_i \in S_2}  = 
        \Pr\squareb{v_iv_jv_k \in V_{ijk} | v_iv_j \in V_{ij}, v_iv_k \in V_{ik}} = 
        \frac{1}{11}n^{c_{ijk} - m_{ijk}}.
    \end{align}
    Thus $|S_3|$ follows the distribution 
        \[\mathrm{HG}\roundb{|S_2| \cdot |\Gamma_{jk\ell}|,\ |S_2|,\ \frac{1}{11} |S_2|\cdot |\Gamma_{jk\ell}|
        \cdot n^{c_{ijk} - m_{ijk}}},\]
    which has mean $\frac{1}{11} |S_2| n^{c_{ijk} - m_{ijk}}$. 
    Under the condition that $|S_2| \le \frac{1}{11}n^{c_{ij\ell} + c_{ik\ell} - b_{i\ell} - b_{j\ell} - b_{k\ell} + a_\ell}$,
        we apply \lem{hg_dist} (2) with $\delta = 
        \bigslash{n^{c_{ik\ell} + c_{ij\ell} - b_{i\ell} - b_{j\ell} - b_{k\ell} + a_\ell}}{|S_2|} - 1 > 2e - 1$, getting
    \begin{equation} \label{eqn:s3_bound}
    \Pr\squareb{|S_3| > \frac{1}{11}n^{m_{ijk\ell} - c_{jk\ell}} \middle| \enspace
        |S_2| \le \frac{1}{11}n^{c_{ij\ell} + c_{ik\ell} - b_{i\ell} - b_{j\ell} - b_{k\ell} + a_\ell}}
        \le \exp\roundb{-\frac{\log 2}{11}n^{m_{ijk\ell} - c_{jk\ell}}}.
    \end{equation}
    Combining equations (\ref{eqn:s2_bound}) and (\ref{eqn:s3_bound}), we get 
    \begin{equation}
    \label{eqn:cond_5}
        \Pr[\text{Condition } 5 \text{ fails}] \le
            \Theta(n^{m_{jk\ell}}) \squareb{
                \exp\roundb{n^{c_{ij\ell} + c_{ik\ell} - b_{i\ell} - b_{j\ell} - b_{k\ell} + a_\ell}} + 
                \exp\roundb{n^{m_{ijk\ell} - c_{jk\ell}}}
            }.
    \end{equation}
    The statement of this lemma follows from equations (\ref{eqn:cond_2_3_4}), (\ref{eqn:cond_5}), 
        and the union bound.
\end{proof}

\begin{proof}[Proof of \lem{1_edge_update}]
    By the definition of the set $\Gamma_{ijk\ell}$, we see that \begin{align*}
        \Exp_{\bA, s} &\squareb{|\Gamma'_{ijk\ell} - \Gamma_{ijk\ell}|} = \Theta\roundb{\frac{1}{n}}\sum_{s}\Exp_{\bA}
        \squareb{|\Gamma'_{ijk\ell} - \Gamma_{ijk\ell}|}\\
        &\le \Theta\roundb{\frac{1}{n}}\sum_{s}\Exp_{\bA}
            |\{v_jv_kv_\ell \in V_{jk\ell}: v_sv_jv_k \in V_{ijk}, v_sv_jv_\ell \in V_{ij\ell}, v_sv_kv_\ell \in V_{ik\ell}\}|\\
        &= \Theta\roundb{\frac{1}{n}}\sum_{s}\sum_{v_jv_kv_\ell \in V_{jk\ell}}\Pr(v_sv_jv_k \in V_{ijk}, 
            v_sv_jv_\ell \in V_{ij\ell}, v_sv_kv_\ell \in V_{ik\ell})\\
        &= \Theta\roundb{n^{c_{jk\ell}}} \Theta\roundb{n^{c_{ij\ell} - b_{j\ell} - a_i}} 
            \Theta\roundb{n^{c_{ik\ell} - b_{k\ell} - b_{i\ell} + a_\ell}} \Theta\roundb{n^{c_{ijk} - m_{ijk}}} 
        \\&= \Theta\roundb{n^{m_{ijk\ell} - a_i}}.
    \end{align*}
    The third equality is a consequence of equations (\ref{eqn:s1_prob}), (\ref{eqn:s2_prob}), (\ref{eqn:s3_prob}).
    Finally, since every tuple in $\Gamma_{ijk\ell}$ becomes a quadruple in $V_{ijk\ell}$ only with probability 
        $\Theta\roundb{n^{d_{ijk\ell} - m_{ijk\ell}}}$, we have 
    \[\Exp_{\bA, s} |V'_{ijk\ell} - V_{ijk\ell}| = 
        O\roundb{n^{d_{ijk\ell} - m_{ijk\ell}}
            \cdot \Exp_{\bA, s} |\Gamma'_{ijk\ell} - 
        \Gamma_{ijk\ell}|} = O(n^{d_{ijk\ell} - a_i}).\qedhere\]
\end{proof}

\vspace*{1em}
\begin{proof}[Proof of \lem{2_edge_update}]
    Fixing $v_i \in V_i, v_j\in V_j$, $v_kv_\ell \in V_{k\ell}$, we see that \begin{align} \label{eqn:2_edge_precom}
        &\Pr\roundb{v_iv_kv_\ell \in V_{ik\ell}, v_jv_kv_\ell \in V_{jk\ell}} \nonumber\\
        =\ &\Pr\roundb{v_iv_kv_\ell \in V_{ik\ell}, v_jv_kv_\ell \in V_{jk\ell} | v_iv_k \in V_{ik}, v_jv_k \in V_{jk}}
            \Pr\roundb{v_iv_k \in V_{ik}, v_jv_k \in V_{jk}} \nonumber\\
        =\ &\Theta\roundb{n^{c_{ik\ell} - m_{ik\ell}}} \cdot n^{b_{ik} - a_i - a_k} \cdot n^{b_{i\ell} - a_i - a_\ell}
            \cdot \Theta\roundb{n^{c_{jk\ell} - m_{jk\ell}}} \cdot n^{b_{jk} - a_j - a_k} \cdot n^{b_{j\ell} - a_j - a_\ell} 
            \nonumber\\
        =\ &\Theta\roundb{n^{c_{ik\ell} + c_{jk\ell} - 2b_{k\ell} - a_i - a_j}}.
    \end{align}
    Similar to the proof of \lem{1_edge_update}, we write
    \begin{align*}
        &\Exp_{\bA, v} |\Gamma'_{ijk\ell} - \Gamma_{ijk\ell}| = \Theta\roundb{\frac{1}{n^{a_i+a_j}}}\sum_{t_it_j}
            \Exp_{\bA} |\Gamma'_{ijk\ell} - \Gamma_{ijk\ell}|\\
        \le\  &\Theta\roundb{\frac{1}{n^{a_i+a_j}}}\sum_{t_it_j}\Exp_{\bA}
            |\{v_kv_\ell \in V_{k\ell}: v_iv_kv_\ell \in V_{ik\ell}, v_jv_kv_\ell \in V_{jk\ell},
                v_iv_jv_k \in V_{ijk}, v_iv_jv_\ell \in V_{ij\ell}\}|\\
        =\  &\Theta\roundb{\frac{1}{n^{a_i+a_j}}}\sum_{t_it_j}\sum_{v_kv_\ell \in V_{k\ell}}
            \Pr(v_iv_jv_k, v_iv_jv_\ell| v_iv_kv_\ell, v_jv_kv_\ell)\Pr(v_iv_kv_\ell, v_jv_kv_\ell)\\
        =\  &\Theta(n^{b_{k\ell}}) \Theta(n^{c_{ijk} - m_{ijk} + c_{ij\ell} - m_{ij\ell}})
            \Theta\roundb{n^{c_{ik\ell} + c_{jk\ell} - 2b_{k\ell} - a_i - a_j}}
        = \Theta\roundb{n^{m_{ijk\ell} - b_{ij}}}.
    \end{align*}
    The second equality is a consequence of equation (\ref{eqn:2_edge_precom}).
    Since every tuple in $\Gamma_{ijk\ell}$ becomes a quadruple in $V_{ijk\ell}$ only 
    with probability 
        $\Theta\roundb{n^{d_{ijk\ell} - m_{ijk\ell}}}$, we have
    \[\Exp_{\bA, v} |V'_{ijk\ell} - 
        V_{ijk\ell}| = O\roundb{n^{d_{ijk\ell} - m_{ijk\ell}} \cdot \Exp_{\bA, v} 
        |\Gamma'_{ijk\ell} - \Gamma_{ijk\ell}|} = O(n^{d_{ijk\ell} - b_{ij}}).\qedhere\]
\end{proof}

\begin{proof}[Proof of \lem{3_edge_update}]
    Similar to the proof of \lem{1_edge_update}, we write
    \begin{align*}
        &\Exp_{\bA, v} |\Gamma'_{ijk\ell} - \Gamma_{ijk\ell}|
        = \frac{1}{|\Gamma_{ijk}|}\sum_{s_is_js_k \in \Gamma_{ijk}}
            \Exp_{\bA} |\Gamma'_{ijk\ell} - \Gamma_{ijk\ell}|\\
        \le\  &\frac{1}{|\Gamma_{ijk}|}\sum_{s_is_js_k \in \Gamma_{ijk}}\Exp_{\bA}
            |\{v_\ell \in V_\ell: v_{s_i}v_{s_j}v_\ell \in V_{ij\ell}, v_{s_i}v_{s_k}v_\ell \in V_{ik\ell}, 
            v_{s_j}v_{s_k}v_\ell \in V_{jk\ell}\}|\\
        =\  &\frac{1}{|\Gamma_{ijk}|}\sum_{s_is_js_k \in \Gamma_{ijk}}\sum_{v_\ell \in V_\ell}
            \Pr[v_{s_i}v_{s_j}v_\ell \in V_{ij\ell}, v_{s_i}v_{s_k}v_\ell \in V_{ik\ell}, 
            v_{s_j}v_{s_k}v_\ell \in V_{jk\ell}]\\
        =\  &\Theta(n^{a_\ell}) \Theta\roundb{n^{c_{ij\ell} - b_{ij} - a_\ell}} 
            \Theta\roundb{n^{c_{ik\ell} - b_{ik} - b_{i\ell} + a_i}} \Theta\roundb{n^{c_{jk\ell} - m_{jk\ell}}} \\
        =\ &\Theta\roundb{n^{m_{ijk\ell} - c_{ijk}}}.
    \end{align*}
    The third equality is again a consequence of equations (\ref{eqn:s1_prob}), (\ref{eqn:s2_prob}), and (\ref{eqn:s3_prob}).
    Since every tuple in $\Gamma_{ijk\ell}$ becomes a quadruple in $V_{ijk\ell}$ only with probability 
        $\Theta\roundb{n^{d_{ijk\ell} - m_{ijk\ell}}}$, we have 
    \[\Exp_{\bA, v} |V'_{ijk\ell} - V_{ijk\ell}| = O\roundb{n^{d_{ijk\ell} - m_{ijk\ell}} 
        \cdot \Exp_{\bA, v} |\Gamma'_{ijk\ell} - \Gamma_{ijk\ell}|}  
        = O(n^{d_{ijk\ell} - c_{ijk}}).\qedhere\]
\end{proof}

\end{document}